\documentclass[a4paper,UKenglish,thm-restate,cleveref]{lipics-v2021}

\pdfoutput=1 %
\hideLIPIcs  %

\usepackage{booktabs}

\bibliographystyle{plainurl}%

\title{Query maintenance under batch changes with small-depth circuits}

\author{Samir Datta}{Chennai Mathematical Institute  \& UMI ReLaX, Chennai, India}{sdatta@cmi.ac.in}{https://orcid.org/0000-0003-2196-2308}{Partially funded by a grant from Infosys foundation.}
\author{Asif Khan}{Chennai Mathematical Institute, Chennai, India}{asifkhan@cmi.ac.in}{}{Partially funded by a grant from Infosys foundation.}
\author{Anish Mukherjee}{University of Warwick, Coventry, United Kingdom}{anish.mukherjee@warwick.ac.uk}{https://orcid.org/0000-0002-5857-9778}{Research was supported in part by the Centre for Discrete Mathematics and its Applications (DIMAP) and by EPSRC award EP/V01305X/1.}
\author{Felix Tschirbs}{Ruhr University Bochum, Bochum, Germany}{felix.tschirbs@rub.de}{}{Supported by the Deutsche Forschungsgemeinschaft (DFG, German
Research Foundation), grant 532727578.}
\author{Nils Vortmeier}{Ruhr University Bochum, Bochum, Germany}{nils.vortmeier@rub.de}{https://orcid.org/0009-0000-2821-7365}{Supported by the Deutsche Forschungsgemeinschaft (DFG, German
Research Foundation), grant 532727578.}
\author{Thomas Zeume}{Ruhr University Bochum, Bochum, Germany}{thomas.zeume@rub.de}{https://orcid.org/0000-0002-5186-7507}{Supported by the Deutsche Forschungsgemeinschaft (DFG, German
Research Foundation), grant 532727578.}

\authorrunning{S. Datta, A. Khan, A. Mukherjee, F. Tschirbs, N. Vortmeier, and T. Zeume} 

\Copyright{Samir Datta, Asif Khan, Anish Mukherjee, Felix Tschirbs, Nils Vortmeier, and Thomas Zeume}

\begin{CCSXML}
<ccs2012>
<concept>
<concept_id>10003752.10010070.10010111.10011734</concept_id>
<concept_desc>Theory of computation~Logic and databases</concept_desc>
<concept_significance>500</concept_significance>
</concept>
<concept>
<concept_id>10003752.10003777.10003787</concept_id>
<concept_desc>Theory of computation~Complexity theory and logic</concept_desc>
<concept_significance>500</concept_significance>
</concept>
</ccs2012>
\end{CCSXML}

\ccsdesc[500]{Theory of computation~Logic and databases}
\ccsdesc[500]{Theory of computation~Complexity theory and logic}

\keywords{Dynamic complexity theory, parallel computation, dynamic algorithms} %

\category{} %

\nolinenumbers %

\EventEditors{Rastislav Kr\'{a}lovi\v{c} and Anton\'{i}n Ku\v{c}era}
\EventNoEds{2}
\EventLongTitle{49th International Symposium on Mathematical Foundations of Computer Science (MFCS 2024)}
\EventShortTitle{MFCS 2024}
\EventAcronym{MFCS}
\EventYear{2024}
\EventDate{August 26--30, 2024}
\EventLocation{Bratislava, Slovakia}
\EventLogo{}
\SeriesVolume{306}
\ArticleNo{38}

\usepackage{graphicx}
\usepackage{xspace}
\usepackage{enumerate}
\usepackage{tikz}
\usetikzlibrary{arrows,decorations.pathmorphing,backgrounds,calc,positioning,fit,petri,matrix,backgrounds, decorations.pathreplacing, shapes.geometric}
\usepackage[ruled]{algorithm}
\usepackage{algpseudocode}

\newif\ifcomments
\newif\ifchanges
\commentsfalse\changesfalse
\commentstrue\changestrue %

\newcommand{\superpolylog}{\ensuremath{(\log n)^{\bigO(\log \log n)}}}
\newcommand{\plogtable}{\ensuremath{(\log n)^{\bigO(1)}}}

\newcommand{\tpl}{\bar}

\newcommand{\mtext}[1]{\textsc{#1}}
\newcommand{\dist}{\mtext{dist}\xspace}

\newcommand{\ins}{\mtext{ins}\xspace}
\newcommand{\del}{\mtext{del}\xspace}

\newcommand{\schema}{\ensuremath{\sigma}\xspace}

\newcommand{\query}{\ensuremath{Q}}
\newcommand{\inp}{\ensuremath{\calI}\xspace}
\newcommand{\aux}{\ensuremath{\calA}\xspace}%

\newcommand{\struc}{\calS}
\newcommand{\univ}{D}

\newcommand{\restrict}[2]{#1[#2]}
\newcommand{\N}{\ensuremath{\mathbb{N}}}

\newcommand{\bigO}{\ensuremath{\mathcal{O}}}

\newcommand{\df}{\ensuremath{\mathrel{\smash{\stackrel{\scriptscriptstyle{
    \text{def}}}{=}}}} \;}

\DeclareMathOperator{\polylog}{polylog}

\newcommand  {\myclass} [1]  {\ensuremath{\textsf{\upshape #1}}}

\newcommand{\StaClass}[1]{\myclass{#1}\xspace}

\newcommand{\DynClass}[1]{\myclass{Dyn#1}\xspace}

\newcommand     {\LOGSPACE}     {\StaClass{LOGSPACE}}
\newcommand     {\NC}   {\StaClass{NC}}

\newcommand     {\AC}   {\StaClass{AC}}
\newcommand     {\Ind}   {\StaClass{IND}}

\newcommand     {\SAC}   {\myclass{SAC}}

\newcommand{\ACz}{\mbox{\myclass{AC}$^0$}\xspace}
\newcommand{\TCz}{\mbox{\myclass{TC}$^0$}\xspace}
\newcommand{\TCo}{\mbox{\myclass{TC}$^1$}\xspace}

\newcommand{\FO}{\StaClass{FO}}

\newcommand{\FOar}{\StaClass{FO$(\leq,+,\times)$}}%
\newcommand{\FOLL}{\StaClass{FOLL}}
\newcommand{\MSO}{\StaClass{MSO}}

\newcommand{\CQ}[1][]{\StaClass{CQ}}
\newcommand{\UCQ}[1][]{\StaClass{UCQ}}
\newcommand{\CQneg}[1][]{\StaClass{CQ\ensuremath{^{\mneg}}}}
\newcommand{\UCQneg}[1][]{\StaClass{UCQ\ensuremath{^{\mneg}}}}

\newcommand{\mneg}{\neg} %

\newcommand{\DynFO}{\DynClass{FO}}
\newcommand{\DynAC}{\DynClass{AC}}
\newcommand{\DynFOLL}{\DynClass{FOLL}}

\newcommand{\DynFOar}{\DynClass{FO$(\leq,+,\times)$}}
\theoremstyle{definition}
\newtheorem*{question*}{Question}
\newtheorem*{openquestion*}{Open question}

\providecommand {\calA}      {{\mathcal A}\xspace}

\providecommand {\calI}      {{\mathcal I}\xspace}

\providecommand {\calR}      {{\mathcal R}\xspace}
\providecommand {\calS}      {{\mathcal S}\xspace}

\providecommand {\calX}      {{\mathcal X}\xspace}

\newcommand{\prog}{\ensuremath{\Pi}\xspace}

\newcommand{\auxSchema}{\ensuremath{\schema_{\text{aux}}}\xspace}

\ifcomments
\newcommand{\commentbox}[1]{\noindent\framebox{\parbox{0.98\linewidth}{#1}}}

\setlength{\marginparwidth}{2.5cm}
\setlength{\marginparsep}{3pt}

\newcommand{\acomment}[2]{\ \\ \fbox{\parbox{0.98\linewidth}{{\sc #1}: #2}}}
\newcommand{\mcomment}[2]{{\color{blue}(#1)}\footnote{#1: #2}} %
\else
\newcommand{\commentbox}[1]{}
\newcommand{\mcomment}[2]{}
\newcommand{\acomment}[2]{}
\fi

\ifchanges

\setul{}{0.2mm}
\setstcolor{red}

\else

\fi

\definecolor{iltisBeige1}{HTML}{fef6ee}
\definecolor{iltisBeige2}{HTML}{e3d5c8}
\definecolor{iltisBeige3}{HTML}{cfbeb0}
\definecolor{iltisBeige4}{HTML}{bfac9b}

\definecolor{iltisGrey1}{HTML}{edebe8}
\definecolor{iltisGrey2}{HTML}{d9d6d2}
\definecolor{iltisGrey3}{HTML}{c2bfbc}
\definecolor{iltisGrey4}{HTML}{a6a4a1}

\definecolor{iltisLightGreen1}{HTML}{f4ffe9}
\definecolor{iltisLightGreen2}{HTML}{d4f7b2}
\definecolor{iltisLightGreen3}{HTML}{bde697}
\definecolor{iltisLightGreen4}{HTML}{a3d177}

\definecolor{iltisGreen1}{HTML}{cfffe1}
\definecolor{iltisGreen2}{HTML}{a2e8bd}
\definecolor{iltisGreen3}{HTML}{86d1a2}
\definecolor{iltisGreen4}{HTML}{6fbf8d}

\definecolor{iltisYellow1}{HTML}{fef2d0}
\definecolor{iltisYellow2}{HTML}{ffe3a2}
\definecolor{iltisYellow3}{HTML}{ffd77d}
\definecolor{iltisYellow4}{HTML}{f2c55e}

\definecolor{iltisRed1}{HTML}{ffe9e6}
\definecolor{iltisRed2}{HTML}{eda498}
\definecolor{iltisRed3}{HTML}{e08475}
\definecolor{iltisRed4}{HTML}{c26e60}

\definecolor{iltisOrange1}{HTML}{ffdfb3}
\definecolor{iltisOrange2}{HTML}{ffc97d}
\definecolor{iltisOrange3}{HTML}{ebb467}
\definecolor{iltisOrange4}{HTML}{e0a34c}

\definecolor{iltisCyan1}{HTML}{e0fffe}
\definecolor{iltisCyan2}{HTML}{b4e0df}
\definecolor{iltisCyan3}{HTML}{95c7c5}
\definecolor{iltisCyan4}{HTML}{81b3b0}

\definecolor{iltisBlue1}{HTML}{cce8ff}
\definecolor{iltisBlue2}{HTML}{8db8d9}
\definecolor{iltisBlue3}{HTML}{6f9abd}
\definecolor{iltisBlue4}{HTML}{5c86a8}
\definecolor{iltisBlue5}{HTML}{2e5b80}

\definecolor{iltisViolet1}{HTML}{ede6ff}
\definecolor{iltisViolet2}{HTML}{b1a5cc}
\definecolor{iltisViolet3}{HTML}{9b8eba}
\definecolor{iltisViolet4}{HTML}{8578a6}

\colorlet{midgray}{black!50}

\tikzstyle{dEdge}=[
  -latex', %
  thick, 
  shorten >=2pt, 
  shorten <=2pt,
  draw=black!80,
]

\tikzstyle{dhEdge}=[
  -latex', %
  thick, 
  shorten >=3pt, 
  shorten <=3pt,
  draw=black!80,
]
\tikzstyle{uEdge}=[
  thick, 
  shorten >=3pt, 
  shorten <=3pt,
  draw=black!80,
]
\tikzstyle{uhEdge}=[
  thick, 
  shorten >=3pt, 
  shorten <=3pt,
  draw=black!80,
]

\tikzstyle{cEdge}=[
  ultra thick, 
  shorten >=3pt, 
  shorten <=3pt,
  draw=black!80,
]

\tikzstyle{dotsEdge}=[
  very thick, 
  loosely dotted, 
  shorten >=3pt, 
  shorten <=3pt
]

\tikzstyle{snakeEdge}=[
  ->, 
  decorate, 
    shorten >=2pt, 
  shorten <=2pt,
  decoration={snake,amplitude=.4mm,segment length=2.5mm,post length=0.0mm},
]

\tikzstyle{snakeEdgea}=[
  ->, 
  decorate, 
  decoration={snake,amplitude=.4mm,segment length=3mm,post length=0.5mm}
]
\tikzstyle{blackNode}=[
	shape=circle, draw, fill,inner sep=0pt,minimum size=4pt
]

\begin{document}

\maketitle

\begin{abstract}
Which dynamic queries can be maintained efficiently? For constant-size changes, it is known that constant-depth circuits or, equivalently, first-order updates suffice for maintaining many important queries, among them reachability, tree isomorphism, and the word problem for context-free languages. In other words, these queries are in the dynamic complexity class \DynFO. We show that most of the existing results for constant-size changes can be recovered for batch changes of polylogarithmic size if one allows circuits of depth $\bigO(\log \log n)$ or, equivalently, first-order updates that are iterated $\bigO(\log \log n)$ times.
\end{abstract}

\section{Introduction}\label{section:introduction}
Dynamic descriptive complexity \cite{PatnaikI97, DongS93} is a framework for studying the amount of resources that are necessary to \emph{maintain} the result of a query when the input changes slightly, possibly using additional auxiliary data (which needs to be maintained as well). Its main class \DynFO contains all queries for which the update of the query result (and possibly of further useful auxiliary data) can be expressed in first-order logic \FO. Equivalently\footnote{assuming that first-order formulas have access to numeric predicates $\leq, +, \times$}, the updates can be computed using (DLOGTIME) uniform circuits with constant-depth and polynomial size that consist of $\neg\hspace{0.2mm}$- as well as $\wedge$- and $\vee$-gates with unbounded fan-in, that is, within uniform~\ACz.

It is known that many important queries can be maintained in \DynFO if only one bit of the input changes in every step. This includes reachability for acyclic graphs \cite{DongS95, PatnaikI97}, undirected graphs \cite{PatnaikI97, DongS98, GradelS12}, and general directed graphs \cite{DattaKMSZ18}, tree isomorphism \cite{Etessami98} and every problem definable in monadic second-order logic \MSO for graphs of bounded treewidth~\cite{DattaMSVZ19}, all under insertions and deletions of single edges. Also, membership in context-free languages can be maintained under changes of single positions of the input word \cite{GeladeMS12}.

Some of these results have been extended to changes beyond single-bit changes: reachability in undirected graphs is in \DynFO if simultaneously $\polylog(n) = \plogtable$ edges can be inserted or deleted \cite{DattaKMTVZ20}, where $n$ is the size of the graph; regular languages are in \DynFO under changes of $\polylog(n)$ positions at once \cite{TschirbsVZ23}. Reachability in directed graphs can be maintained under insertions and deletions of $\bigO(\frac{\log n}{\log \log n})$ many edges \cite{Datta0VZ18}. 

Thus, only for few problems it is known that changes of polylogarithmic size (or: even non-constant size) can be handled in $\DynFO$, or, equivalently, by $\AC^0$-updates.
Trivially, if a problem can be maintained in \DynFO under single-bit changes it can also be maintained under $\polylog(n)$ changes using \AC-circuits of $\polylog(n)$-depth. This is achieved by processing the changed bits ``sequentially'' by ``stacking'' $\polylog(n)$ copies of the constant-depth circuit for processing single-bit changes.

The starting point for the present paper is the question which problems can be maintained by \AC-circuits of less than $\polylog(n)$ depth under $\polylog(n)$-sized changes, in particular which of the problems  known to be in $\DynFO$ under single-bit changes. The answer is short: for almost all of them circuits of depth $\bigO(\log\log n)$ suffice.

A first observation is that directed reachability under polylogarithmic changes can be maintained by $\AC$-circuits of depth $\bigO(\log \log n)$. This can be derived by analyzing the proof from \cite{Datta0VZ18} (see Section \ref{section:small-structures}). For this reason, we introduce the dynamic complexity class\footnote{The class could equally well be called (uniform) $\DynAC[\log\log n]$. We opted for the name $\DynFOLL$ as it extends $\DynFO$ and its static variant was introduced as \FOLL \cite{BarringtonKLM01}.} \DynFOLL of problems that can be maintained using circuits with polynomial size and depth $\bigO(\log \log n)$ or, equivalently, by first-order formulas that are iterated $\bigO(\log \log n)$ times. We investigate its power when changes affect $\polylog(n)$ input bits and prove that almost all problems known to be maintainable in \DynFO for constant-size changes fall into this class for changes of $\polylog(n)$-size, see Table \ref{table:overview}. One important problem left open is whether all \MSO-definable queries for bounded treewidth graphs can be maintained in \DynFOLL under $\polylog(n)$ changes. We present an intermediate result and show that tree decompositions can be maintained within \DynFOLL (see Section~\ref{section:bodlaender}).

This power of depth-$\bigO(\log\log n)$ update circuits came as a surprise to us. Statically, circuits of this depth and polynomial size still cannot compute the parity of $n$ bits due to H\aa{}stad's famous lower bound for parity: depth-$(d+1)$ \AC-circuits with alternating $\wedge$- and $\vee$-layers require $2^{\Omega(n^{1/d})}$ gates for computing parity (see, e.g., \cite[Theorem 12.3]{Jukna2012}). Dynamically, while such update circuits are powerful for changes of non-constant size, they seem to provide not much more power for single-bit changes. As an example, the parity-exists query from \cite{VortmeierZ21} is conjectured to not be in $\DynFO$, and it also cannot easily be seen to be in \DynFOLL.

The obtained bounds are almost optimal. For all mentioned problems, \DynFO can handle changes of size at most $\polylog(n)$ and $\DynFOLL$ can handle changes of size at most $\superpolylog$. This is an immediate consequence of H\aa{}stad's lower bound for parity  and standard reductions from parity to these problems. For the queries that are known to be maintainable under $\polylog(n)$ changes in \DynFO, we show that they can be maintained under $\superpolylog$ changes in \DynFOLL.

\begin{table}[t]
\centering
\begin{tabular}{@{}l|l|l@{}}
 Dynamic query &  \multicolumn{1}{l|}{\DynFO} & \multicolumn{1}{l}{\DynFOLL} \\ \hline
reachability  & & \\
\hspace{4mm} general graphs & $\bigO(\frac{\log n}{\log \log n})\quad$ \hfill \cite{Datta0VZ18} & $\plogtable$ \hfill(Theorem \ref{thm:reachMain})\\
\hspace{4mm} undirected graphs & $\plogtable$ \hfill\cite{DattaKMTVZ20} & $\superpolylog \quad$ (Theorem \ref{thm:undirected}) \\
\hspace{4mm} acyclic graphs & $\bigO(\frac{\log n}{\log \log n})$ \hfill\cite{Datta0VZ18} & $\plogtable$ \hfill(Theorem \ref{thm:reachMain})\\
\hline
distances & &  \\
\hspace{4mm} general graphs & open & open \\
\hspace{4mm} undirected graphs & $\bigO(1)$ \hfill\cite{GradelS12} & $\plogtable$ \hfill(Theorem \ref{thm:distances})\\
\hspace{4mm} acyclic graphs & $\bigO(1)$  & $\plogtable$ \hfill(Theorem \ref{thm:distances})\\
\hline
bounded tree width & &  \\
\hspace{4mm} tree decomposition & open & $\plogtable$ \hfill(Theorem \ref{theorem:bodlaender})\\
\hspace{4mm} MSO properties & $\bigO(1)$ \hfill \cite{DattaMSVZ19}& $\bigO(1)$ \\
\hline
other graph problems & &  \\
\hspace{4mm} tree isomorphism & $\bigO(1)$ \hfill\cite{Etessami98} &  $\plogtable$ \hfill(Theorem \ref{thm:tiso:dynfoll})\\
\hspace{4mm} minimum spanning forest & $\plogtable\quad$ \hfill(Theorem \ref{thm:small-structure-further}) & $\superpolylog \quad$ (Theorem \ref{thm:small-structure-further}) \\
\hspace{4mm} maximal matching & $\bigO(1)$ \hfill \cite{PatnaikI97} & $\plogtable \quad$ \hfill (Theorem \ref{thm:small-structure-further}) \\
\hspace{4mm} $(\delta+1)$-colouring & $\plogtable$ \hfill(Theorem \ref{thm:small-structure-further}) & $n^2 \quad$ \hfill (static, \cite{GoldbergP87}) \\
\hline
word problem & &  \\
\hspace{4mm}regular languages & $\plogtable$ \hfill\cite{TschirbsVZ23} &  $\superpolylog$ \hfill (Theorem \ref{thm:undirected})\\
\hspace{4mm}context-free languages & $\bigO(1)$ \hfill\cite{GeladeMS12}& $\plogtable$ \hfill(Theorem \ref{thm:cfl_dynfoll})\\
\end{tabular}\vspace{2mm}

\caption{Overview of results for \DynFO and \DynFOLL. Entries indicate the size of changes that can be handled by \DynFO and \DynFOLL programs, respectively.}
\label{table:overview}
\end{table}

Our results rely on two main techniques for handling changes of polylogarithmic size:
\begin{itemize}
 \item In the \emph{small-structure technique} (see Section \ref{section:small-structures}), it is exploited that on structures of polylogarithmic size, depth-$\bigO(\log\log n)$ circuits have the power of $\NC^2$ circuits. Dynamic programs that use this technique first construct a substructure of polylogarithmic size depending on the changes and the current auxiliary data, then perform a $\NC^2$-computation on this structure, and finally combine the result with the rest of the current auxiliary data to obtain the new auxiliary data. This technique is a slight generalization of previously used techniques for $\DynFO$.
 \item In the \emph{hierarchical technique} (see Section \ref{section:hierarchical}), it is exploited that auxiliary data used in dynamic programs is often ``composable''. Dynamic programs that use this technique first construct polynomially many structures depending on the current auxiliary data, each of them associated with one of the changes (in some cases, known dynamic programs for single changes can be exploited for this step). Then, in $\bigO(\log\log n)$ rounds, structures are combined hierarchically such that after $\ell$ rounds the program has computed polynomially many structures, each associated with $2^\ell$ changes. 
\end{itemize}

\section{Preliminaries and setting}\label{section:setting}
We introduce some notions of finite model theory, circuit complexity and the dynamic complexity framework.

\subparagraph*{Finite model theory \& circuit complexity.} A (relational) schema $\schema$ is a set of relation symbols and constant symbols. A relational structure $\struc$ over a schema $\schema$ consists of a finite domain~$\univ$, relations $R^\struc \subseteq \univ^k$ for every $k$-ary relation symbol $R \in \schema$, and interpretations $c^\struc \in \univ$ of every constant symbol $c \in \schema$. We assume in this work that every structure has a linear order $\leq$ on its domain. We can therefore identify $D$ with the set $\{0, \ldots, n-1\}$.

First-order logic \FO is defined in the usual way. Following \cite{ImmermanDC}, we allow first-order formulas to access the linear order on the structures and corresponding relations $+$ and $\times$ encoding addition and multiplication. We write $\FOar$ to make this explicit. \FOar can express iterated addition and iterated multiplication for polylogarithmically many numbers that consist of $\polylog(n)$ bits, see \cite[Theorem 5.1]{HAB}.

First-order logic with $\leq, +, \times$ is equivalent to (DLOGTIME) uniform $\AC^0$, the class of problems decidable by uniform families of constant-depth circuits with polynomially many ``not''-, ``and''- and ``or''-gates with unbounded fan-in. We write $\AC[f(n)]$ for the class that allows for polynomial-sized circuits of depth $\bigO(f(n))$, where $n$ is the number of input bits. For polynomially bounded and first-order constructible functions $f$, the class $\AC[f(n)]$ is equal to $\Ind[f(n)]$, the class of problems that can be expressed by inductively applying an \FOar formula $\bigO(f(n))$ times \cite[Theorem 5.22]{ImmermanDC}. So, we can think of an $\AC[f(n)]$ circuit as being a stack of $\bigO(f(n))$ copies of some $\ACz$ circuit.
The class \FOLL, see \cite{BarringtonKLM01}, is defined as $\Ind[\log \log n] = \AC[\log \log n]$.

The circuit complexity classes uniform $\NC^i$ and $\SAC^i$ are defined via uniform circuits of polynomial size and depth $\bigO((\log n)^i)$; besides ``not''-gates, $\NC$ circuits use ``and''- and ``or''-gates with fan-in $2$, $\SAC$ circuits allow for ``or''-gates with unbounded fan-in.

\subparagraph*{Dynamic complexity.}

The goal of a \emph{dynamic program} $\prog$ is to maintain the result of a query applied to an input structure $\inp$ that is subject to changes. 
In this paper, we consider changes of the form $\ins_R(P)$, the insertion of a set $P$ of tuples into the relation $R$ of $\inp$, and $\del_R(P)$, the deletion of the set $P$ from $R$. We usually restrict the size of the set $P$ to be bounded by a function $s(n)$, where $n$ is the size of the domain of $\inp$. Most of the time, the bound is polylogarithmic in $n$, so $s(n) = \log(n)^c$ for some constant $c$.
A pair $(\query, \Delta)$ of a query $\query$ and a set $\Delta$ of (size-bounded) change operation $\ins_R, \del_R$ is called a \emph{dynamic query}.

To maintain some dynamic query over $\schema$-structures, for some schema $\schema$, $\prog$ stores and updates a set $\calA$ of auxiliary relations over some schema $\auxSchema$ and over the same domain as the input structure.
For every auxiliary relation symbol $A \in \auxSchema$ and every change operation $\delta$, $\prog$ has an update program $\varphi_\delta^A(\tpl x)$, which can access input and auxiliary relations.
Whenever an input structure $\inp$ is changed by a change $\delta(P)$, resulting in the structure $\inp'$, the new auxiliary relation $A^{\aux'}$ in the updated auxiliary structure $\aux'$ consists of all tuples $\tpl a$ such that $\varphi_\delta^A(\tpl a)$ is satisfied in the structure $(\inp', \aux)$.

We say that a dynamic program $\prog$ \emph{maintains} a dynamic query $(\query, \Delta)$, if after applying a sequence $\alpha$ of changes over $\Delta$ to an initial structure $\inp_0$ and applying the corresponding update programs to $(\inp_0, \aux_0)$, where $\aux_0$ is an initial auxiliary structure, a dedicated auxiliary relation is always equal to the result of evaluating $\query$ on the current input structure.
Following Patnaik and Immerman \cite{PatnaikI97}, we demand that the initial input structure $\inp_0$ is \emph{empty}, so, has empty relations. The initial auxiliary structure is over the same domain as $\inp_0$ and is defined from $\inp_0$ by some first-order definable initialization.

The class \DynFO is the class of all dynamic queries that are maintained by a dynamic program with \FOar formulas as update programs\footnote{Other papers write \DynFO for the class that uses \FO update formulas without a priori access to the arithmetic relations $\leq, +, \times$ and \DynFOar for the class that uses \FOar update formulas. If changes only affect single tuples, there is no difference for most interesting queries, see \cite[Proposition~7]{DattaKMSZ18}. For changes that affect sets of tuples of non-constant size, all \DynFO maintainability results use \FOar update formulas, as \FO update formulas without arithmetic are not strong enough to maintain interesting queries. We therefore just write \DynFO and omit the suffix $(\leq, +, \times)$ to avoid visual clutter.}. Equivalently, we can think of the update programs as being $\ACz$ circuits.
The class $\DynFO[f(n)]$ allows for $\AC[f(n)]$ circuits as update programs. We often use the equivalence $\AC[f(n)] = \Ind[f(n)]$ and think of update programs that apply an \FOar update formula $f(n)$ times.
In this paper, we are particularly interested in the class $\DynFOLL = \DynFO[\log \log n]$.

\section{The small-structure technique}\label{section:small-structures}
The small-structure technique has been used for obtaining maintenance results for $\DynFO$ for non-constant size changes \cite{DattaKMTVZ20,TschirbsVZ23}. The idea is simple: for changes of size $m$, (1) compute a structure with a domain of size roughly $m$, depending on the changes and the current auxiliary data, then (2) compute information about this structure (as $m \ll n$, this computation can be more powerful than $\AC^0$), and (3) combine the result with the current auxiliary data to obtain the new auxiliary data. 

For $\DynFO$ and changes of polylogarithmic size, one can use $\SAC^1$-computations in step~(2), as formalized in the next lemma.
\begin{lemma}[{\cite[Corollary~3]{TschirbsVZ23}}]\label{theorem:smallstructure-fo}
Let $\query$ be a $k$-ary query on $\schema$-structures, for some $k \in \N$.
If $\query$ is uniform $\SAC^1$-computable, then there is an \FOar formula $\varphi$ over schema $\schema \cup \{C\}$ such that for any $\schema$-structure $\struc$ with $n$ elements, any subset $C$ of its domain of size $\polylog(n)$ and any $k$-tuple $\tpl a \in {C}^k$ it holds that: $\tpl a \in \query(\restrict{\struc}{C})$ if and only if $(\struc, C) \models \varphi(\tpl a)$. 
Here, $\restrict{\struc}{C}$ denotes the substructure of $\struc$ induced by~$C$.
\end{lemma}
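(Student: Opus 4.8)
The plan is to turn the evaluation of $\query$ on the small induced substructure $\restrict{\struc}{C}$ into the evaluation of a single $\SAC^1$ circuit of polylogarithmic size, and then to argue that such a circuit can be evaluated by one $\FOar$ formula that has all of $\struc$ and the set $C$ at its disposal. The first ingredient is an $\FOar$-definable encoding of $\restrict{\struc}{C}$ by a bit string of length $\polylog(n)$. Using the linear order, the rank $\mathrm{rank}(x) = |\{y \in C : y \le x\}|$ of an element $x \in C$ is a number below $|C| = \polylog(n)$, and it is $\FOar$-definable because it is an iterated addition of polylogarithmically many bits. This gives an $\FOar$-definable bijection between $C$ and an initial segment $\{0, \dots, |C|-1\}$; translating the relations of $\restrict{\struc}{C}$ along it turns the substructure into a $\polylog(n)$-bit string whose bits are addressable in $\FOar$, and it lets us move back and forth between a tuple $\tpl a \in C^k$ and the tuple of ranks that indexes the corresponding output position.

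Next I would use uniformity. For inputs of size $m = \polylog(n)$ there is a $\SAC^1$ circuit $D_m$ of polynomial size, hence of size $\polylog(n)$, and of depth $\bigO(\log m) = \bigO(\log \log n)$ that computes $\query$. Its gates are indexed by numbers of $\bigO(\log \log n)$ bits and its uniformity machine runs in time $\bigO(\log \log n)$, so the gate-type and wiring predicates of $D_m$ are themselves $\FOar$-definable from the input size $m$. It then remains to evaluate $D_m$ on the encoded input inside $\FOar$.

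The heart of the argument, and the step I expect to be the main obstacle, is to evaluate a circuit of depth $\bigO(\log \log n)$ in \emph{constant} depth with arithmetic: a naive layer-by-layer evaluation spends one $\FOar$ round per layer and hence would only place $\query$ in $\FOLL = \AC[\log \log n]$ rather than in $\FOar$. To collapse the depth I would arithmetize $D_m$ over the counting semiring, replacing every (unbounded fan-in) $\vee$-gate by a sum and every fan-in-$2$ $\wedge$-gate by a product; the number computed at the output then counts the proof trees of $D_m$, and the Boolean value of $D_m$ is exactly whether this count is positive. Because $D_m$ is \emph{semi-unbounded} and of depth $\bigO(\log \log n)$, every proof tree has at most $2^{\bigO(\log \log n)} = \polylog(n)$ leaves, so all intermediate counts stay below $2^{\polylog(n)}$ and thus have only $\polylog(n)$ bits. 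The goal is then to express this bounded-depth arithmetic evaluation as a single iterated arithmetic operation on polylogarithmically many operands of polylogarithmic bit-length, concretely as an iterated product of $\bigO(\log \log n)$ many objects of polylogarithmic size, which lies in $\FOar$ by the iterated addition and multiplication result for $\polylog(n)$-many $\polylog(n)$-bit numbers recalled in Section~\ref{section:setting}. The delicate points I would have to get right are the non-linear effect of the $\wedge$-gates, so that the evaluation is genuinely captured by such an iterated operation and not merely by $\log \log n$ sequential rounds, and the bookkeeping that keeps every intermediate magnitude within $\polylog(n)$ bits; it is precisely the semi-unbounded structure of $\SAC^1$, as opposed to the fully unbounded $\AC^1$, that prevents the proof-tree counts from exploding. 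Composing the encoding, the $\FOar$-definition of $D_m$, and this evaluation, and substituting the ranks of $\tpl a$ for the output index, then yields the required formula $\varphi(\tpl x)$.
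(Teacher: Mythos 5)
Your first two ingredients are sound and are indeed what any proof of this lemma needs: the rank-based $\FOar$ encoding of $\restrict{\struc}{C}$ as a $\polylog(n)$-bit string, and the observation that for $m = \polylog(n)$ the uniform $\SAC^1$ circuit $D_m$ has size $\polylog(n)$, depth $\bigO(\log m) = \bigO(\log \log n)$, and $\FOar$-definable wiring. The gap is exactly where you flag it, and it is not a repairable ``delicate point'' but the entire content of the lemma: no mechanism is given (and none is known) for expressing the depth-$\bigO(\log\log n)$ arithmetic evaluation as a single iterated operation covered by the result of \cite{HAB}. Since the $\wedge$-gates multiply two values of the \emph{previous layer}, the layer-to-layer recurrence is bilinear, not linear, so the evaluation is not an iterated product of any fixed sequence of numbers or matrices --- iterated products capture only linear (skew) recurrences, i.e.\ $\NL$-style computations, not $\SAC^1$. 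If you instead expand the multilinear expression, the proof-tree count becomes a sum with up to $2^{\Theta(\polylog(n))}$ summands: superpolynomially many, far outside the scope of iterated addition in $\FOar$, which handles only $\polylog(n)$ many summands; by H\aa{}stad's lower bound, constant-depth polynomial-size circuits cannot even compute parity of that many bits. Note also that even the purely Boolean disjunction over all proof trees has superpolynomial fan-in, so brute-force enumeration is hopeless. Finally, there is concrete evidence that the counting detour cannot work at all: your plan would compute a counting-$\SAC^1$ quantity on a $\polylog(n)$-size input in $\FOar$, whereas the closely related task of computing determinants of $\polylog(n)$-dimensional matrices (\cite[Lemma~16]{ADKSVV23}, used in this paper for Theorem~\ref{thm:reachMain}) is only known to be in \FOLL, not in \FOar.

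For comparison, the actual proof (this paper imports the lemma from \cite{TschirbsVZ23} and proves the analogous \FOLL statement, Lemma~\ref{theorem:smallstructure-foll}(b), by the same method) is machine-based rather than arithmetic. By \cite{Cook70}, every $\SAC^1$ language is decided by a nondeterministic Turing machine with polynomial time $m^c$ and space $\bigO((\log m)^2)$. The depth-reduction of \cite[Lemma~8.1]{AllenderHMPS08} --- a Savitch-style divide-and-conquer that existentially guesses $m^{c/t}$ intermediate configurations of $(\log m)^d$ bits each and recurses for $t$ levels --- turns this into a uniform \AC circuit of depth $\bigO(t)$ and size $2^{\bigO(m^{c/t}(\log m)^d)}$. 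For $m = \polylog(n)$ one chooses $t$ to be a sufficiently large \emph{constant}, which makes the exponent $o(\log n)$ and hence the size polynomial in $n$; composing with your encoding step yields the $\FOar$ formula. (The paper's \FOLL variant takes $t = \Theta(\log\log n)$ to handle $C$ of size $\superpolylog$.) The crucial feature your approach misses is that this decomposition determines acceptance without ever enumerating whole proof trees or computing their number: it only quantifies over small tuples of configurations, which is what keeps the circuit simultaneously shallow and small.
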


For $\DynFOLL$, this generalizes in two directions: (a) for structures of size $\polylog(n)$ one can use $\NC^2$-computations, (b) for structures of size $(\log n)^{\bigO(\log \log n)}$ one can use $\SAC^1$-computations. This is captured by the following lemma.

\begin{lemma}\label{theorem:smallstructure-foll}
Let $\query$ be a $k$-ary query on $\schema$-structures, for some $k \in \N$.
\begin{enumerate}[(a)] 
 \item If $\query$ is uniform $\NC^2$-computable, then there is an \FOLL formula $\varphi$ over schema $\schema \cup \{C\}$ such that for any $\schema$-structure $\struc$ with $n$ elements, any subset $C$ of its domain of size $\polylog(n)$ and any $k$-tuple $\tpl a \in {C}^k$ it holds that: $\tpl a \in \query(\restrict{\struc}{C})$ if and only if $(\struc, C) \models \varphi(\tpl a)$. 
 \item If $\query$ is uniform $\SAC^1$-computable, then there is an \FOLL formula $\varphi$ over schema $\schema \cup \{C\}$ such that for any $\schema$-structure $\struc$ with $n$ elements, any subset $C$ of its domain of size $\superpolylog$ and any $k$-tuple $\tpl a \in {C}^k$ it holds that: $\tpl a \in \query(\restrict{\struc}{C})$ if and only if $(\struc, C) \models \varphi(\tpl a)$. 
\end{enumerate}

\end{lemma}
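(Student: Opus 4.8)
The plan is to read Lemma~\ref{theorem:smallstructure-foll} as two orthogonal strengthenings of Lemma~\ref{theorem:smallstructure-fo}: part~(a) keeps the structure size at $\polylog(n)$ but enlarges the circuit class from $\SAC^1$ to $\NC^2$, while part~(b) keeps the circuit class at $\SAC^1$ but enlarges the admissible structure size to $\superpolylog$. In both cases the extra power we may spend is a factor $\bigO(\log\log n)$ of iteration, since $\FOLL=\Ind[\log\log n]$ lets us compose $\bigO(\log\log n)$ \FOar-definable updates, carrying intermediate data in auxiliary relations. I would therefore prove the two parts by two different uses of this iteration budget, with Lemma~\ref{theorem:smallstructure-fo} as the per-round engine.

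For part~(a), bring a uniform $\NC^2$-circuit for $\query$ on the induced substructure $\restrict{\struc}{C}$ into layered form. Since $|C|=\polylog(n)$, this circuit has $\polylog(n)$ gates, bounded fan-in, and depth $\bigO((\log\log n)^2)$. I would cut it into $\bigO(\log\log n)$ consecutive \emph{slabs}, each of depth $\bigO(\log\log n)$, and evaluate them bottom-up, storing the values of the gates on slab boundaries between rounds. A single slab is a bounded-fan-in circuit of depth $\bigO(\log\log n)$ over $\polylog(n)$ gates, so computing its output values from its input values is a circuit-value task for a circuit of size $p=\polylog(n)$ and depth $\bigO(\log p)$, i.e.\ a uniform $\NC^1\subseteq\SAC^1$ query over a structure of size $\polylog(n)$. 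Hence Lemma~\ref{theorem:smallstructure-fo} evaluates each slab by one \FOar formula, and iterating over the $\bigO(\log\log n)$ slabs yields the required \FOLL formula. This direction should go through with only routine encoding of gates, wires and boundary values inside $C$.

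For part~(b) the natural move is instead to revisit the \emph{proof} of Lemma~\ref{theorem:smallstructure-fo} and track exactly where the structure size enters. A uniform $\SAC^1$-circuit for $\query$ on $\restrict{\struc}{C}$ with $|C|=\superpolylog$ has $\superpolylog$ many gates, depth $\bigO((\log\log n)^2)$, unbounded fan-in at $\vee$-gates and fan-in $2$ at $\wedge$-gates. Using the parse-tree semantics of semi-unbounded circuits — select one child at each $\vee$-gate and both children at each $\wedge$-gate — the Boolean value of a gate equals the positivity of its number of accepting parse trees, which turns circuit evaluation into the evaluation of an arithmetic circuit ($\vee\mapsto+$, $\wedge\mapsto\times$) and ultimately into iterated additions and multiplications. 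The key point is that in Lemma~\ref{theorem:smallstructure-fo} the structure size affects this reduction only through the \emph{scale} of these iterated operations: for $\polylog(n)$-size structures they involve $\polylog(n)$ operands of $\polylog(n)$ bits, which is exactly the regime that \FOar handles \cite{HAB}. For $|C|=\superpolylog$ the same reduction produces iterated operations at the larger scale of $\superpolylog$ many operands of $\superpolylog$ bits, because there are $\superpolylog$ gates and the parse-tree counts are bounded by $2^{\superpolylog}$.

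The heart of part~(b), and what I expect to be the main obstacle, is therefore twofold. First, I must prove the arithmetic primitive that iterated addition and multiplication of $\superpolylog$ many $\superpolylog$-bit numbers are in \FOLL; the plan is to bootstrap the polylog-scale fact \cite{HAB} by a $\bigO(\log\log n)$-fold combination, using $\superpolylog=\polylog(n)^{\bigO(\log\log n)}$ to organise both the operands and their bit-chunks into $\polylog(n)$-ary trees of depth $\bigO(\log\log n)$ whose nodes each perform one \FOar-computable polylog-scale operation (with carry handling across chunks done in the same tree). Second — and this is the genuinely delicate part — I must verify that the depth-collapsing step of Lemma~\ref{theorem:smallstructure-fo}, which absorbs the $\bigO((\log\log n)^2)$ circuit depth into a constant number of iterated-arithmetic primitives rather than evaluating layer by layer, still applies at the $\superpolylog$ scale; otherwise the round count degrades from $\bigO(\log\log n)$ to $\bigO((\log\log n)^2)$ and the bound falls back to $\Ind[(\log\log n)^2]$ instead of \FOLL. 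Confirming that the size of the structure enters the proof of Lemma~\ref{theorem:smallstructure-fo} purely through this arithmetic scale is the crux on which part~(b) rests.
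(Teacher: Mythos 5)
Your part (a) is correct and has the same overall structure as the paper's argument: both cut the depth-$\bigO((\log\log n)^2)$ fan-in-two circuit into $\bigO(\log\log n)$ slabs of depth $\bigO(\log\log n)$ and compress each slab to constant first-order depth. The paper compresses a slab combinatorially (each slab output depends on at most $2^{\log\log n}=\log n$ slab inputs, so it can be replaced by its CNF, which has size polynomial in $n$), while you invoke \cref{theorem:smallstructure-fo} per slab; this also works, because evaluating a given fan-in-two circuit of size $p$ and depth $\bigO(\log p)$ is in $\LOGSPACE\subseteq\SAC^1$ (unfold it into a polynomial-size formula in logspace and apply formula evaluation). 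Your stronger claim that this task is a uniform $\NC^1$ query is doubtful, but it is not needed. Note also that you apply \cref{theorem:smallstructure-fo} to a $\polylog(n)$-size structure encoding the slab rather than to an induced substructure $\restrict{\struc}{C}$; this mild generalization is routine, and the paper itself uses the lemma in this liberal form elsewhere.

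Part (b), however, has a genuine gap, and you located it yourself: everything hinges on ``confirming that the size of the structure enters the proof of \cref{theorem:smallstructure-fo} purely through this arithmetic scale,'' and that confirmation is not available, because \cref{theorem:smallstructure-fo} is not proved by arithmetization. Parse-tree counting does turn $\SAC^1$ evaluation into evaluating a polynomial-degree arithmetic circuit of depth $\bigO(\log m)=\bigO((\log\log n)^2)$, but no known argument collapses such a circuit to a constant number of nested iterated scalar additions and multiplications; layer-by-layer evaluation yields exactly the $\Ind[(\log\log n)^2]$ bound you are trying to avoid, and your sketch contains no mechanism that does better. What actually powers both \cref{theorem:smallstructure-fo} and the paper's proof of part (b) is entirely different: by Cook's theorem \cite{Cook70}, every $\SAC^1$ language is decided by a nondeterministic machine in time $m^c$ and space $(\log m)^2$, and a Nepomnjascii-style trade-off extending \cite[Lemma~8.1]{AllenderHMPS08} evaluates such a machine by a uniform \AC circuit of depth $\bigO(t)$ and size $2^{\bigO(m^{c/t}(\log m)^2)}$: split the computation into $m^{c/t}$ blocks, existentially guess all configurations at block boundaries, and recurse for $t$ levels. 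The whole point of part (b) is the parameter balance this enables: with $m=\superpolylog$ and $t=\Theta(\log\log n)$, each level guesses only $m^{c/t}(\log m)^2 = \sqrt{\log n}\,(\log\log n)^{\bigO(1)} = o(\log n)$ bits, which fit into $\bigO(1)$ first-order variables over the domain of size $n$, giving constant first-order depth per level, $\bigO(\log\log n)$ levels, and polynomial size. Nothing resembling this configuration-guessing step appears in your proposal, so the crux on which your part (b) rests remains unproven. (Secondarily, even your preliminary step --- iterated multiplication of $\superpolylog$ many $\superpolylog$-bit numbers in \FOLL --- does not follow from mere chunking into $\polylog(n)$-ary trees, since after one round the operands already exceed the $\polylog(n)$-bit regime of \cite{HAB}; one would need, e.g., Chinese remaindering. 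But even granted, this would not close the main gap.)
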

\begin{proof}%
\begin{enumerate}[(a)] 
\item Let $C$ have size $m$, which is polylogarithmically bounded in $n$. The $\NC^2$-circuit for $\query$ has polynomial size in $m$ and depth $\bigO((\log m)^2)$, so its size is polylogarithmic in $n$ and the depth is $\bigO((\log \log n)^2)$. It is well-known\footnote{Divide the circuit into layers of depth $\log \log n$. Each layer depends only on $\log n$ gates of the previous layer, as each gate has fan-in at most $2$, and can be replaced by a constant-depth circuit for the CNF of the layer, which has polynomial size.} that for every $\NC$-circuit of depth $f(n)$ there is an equivalent $\AC$-circuit of depth $\bigO(\frac{f(n)}{\log \log n})$ and size polynomial in the original circuit, so we can obtain an \AC-circuit for answering $\query$ on $C$ with depth $\bigO(\log \log n)$.
\item The proof of \cite[Lemma 8.1]{AllenderHMPS08} can easily be extended towards the following statement: if a language $L$ is decided by a non-deterministic Turing machine with polynomial time bound $m^c$ and polylogarithmic space bound $(\log m)^d$ then for every positive, non-decreasing and first-order constructible function $t(n)$ there is a uniform \AC circuit family for $L$ with depth $\bigO(t(n))$ and size $2^{\bigO(m^{\frac{c}{t(n)}} (\log m)^d)}$.
For $m = \log n^{e \log \log n}$ and $t(n) = 2ce \log \log n$, the size is exponential in $\sqrt{\log n} \, (\log \log n)^{\bigO(1)}$, and therefore, as this function grows slower than $\log n$, polynomial in $n$.
The statement follows as all $\SAC^1$ languages can be decided by a non-deterministic Turing machine with polynomial time bound and space bounded by $(\log n)^2$, see \cite{Cook70}.
\end{enumerate}%
\end{proof}

A straightforward application of the technique to dynamic programs from the literature yields the following \DynFOLL-programs. For directed reachability, adapting the \DynFO-program for $\bigO(\frac{\log n}{\log \log n})$ changes from \cite{Datta0VZ18} yields (with more proof details in the appendix):

\begin{theorem}\label{thm:reachMain}
Reachability in directed graphs is in \DynFOLL under insertions and deletions of $\polylog(n)$ edges.
\end{theorem}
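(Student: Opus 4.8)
The plan is to reuse the \DynFO program of Datta et al.~\cite{Datta0VZ18} for $\bigO(\frac{\log n}{\log \log n})$ edge changes almost verbatim and to replace the one step of that construction whose cost depends on the number of changes by an application of the small-structure technique, namely Lemma~\ref{theorem:smallstructure-foll}(a). Recall that the maintenance of directed reachability in \DynFO~\cite{DattaKMSZ18} proceeds by reducing reachability to the maintenance of the \emph{rank} of matrices over a field ($\mathbb{F}_p$ for a suitable prime, or $\Q$): the auxiliary data stores the ranks of the relevant graph matrices together with enough linear-algebraic witness data (a maximal independent set of rows/columns and associated inverse-type information) so that the rank can be updated after the matrix is perturbed, and reachability is then read off from these ranks. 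The decisive point is that a batch of $m$ edge insertions and deletions perturbs the associated matrix in at most $m$ entries, hence changes it by a matrix of rank at most $m$; by a Schur-complement (matrix-determinant-lemma) argument the new rank equals the old rank plus a correction that is completely determined by an $\bigO(m) \times \bigO(m)$ matrix $M_\Delta$ assembled from the perturbed entries and the maintained witnesses.

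Concretely, I would keep exactly the auxiliary data of~\cite{DattaKMSZ18,Datta0VZ18}: the reachability relation of the current graph together with the rank and witness data of the relevant matrices, all of polynomial size and first-order initialisable on the empty graph. Given a batch $\Delta$ of $m = \polylog(n)$ insertions and deletions, assembling the small matrix $M_\Delta$ of dimension $\bigO(m)$ is a first-order (that is, \ACz) operation: it only gathers the $\polylog(n)$ endpoints of the changed edges and looks up polynomially many maintained values indexed by them. This realises step~(1) of the small-structure technique and produces a structure whose domain $C$ has size $\bigO(m) = \polylog(n)$.

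The key step is the computation of the rank correction from $M_\Delta$, that is, step~(2). This is exactly where the bound $\bigO(\frac{\log n}{\log \log n})$ enters in~\cite{Datta0VZ18}: for $m$ of that size one has $m! = n^{\bigO(1)}$, so plain first-order formulas can evaluate the rank of the $\bigO(m) \times \bigO(m)$ matrix $M_\Delta$ by brute force over the small structure. For $m = \polylog(n)$ this brute force is unavailable, but the operation required is merely linear algebra over a field---computing the rank of $M_\Delta$ and solving the associated $\bigO(m)$-dimensional system---which is \emph{uniform $\NC^2$-computable} (rank and linear-system solving over a field are in uniform $\NC^2$). Applying Lemma~\ref{theorem:smallstructure-foll}(a) to this $\NC^2$ query on the domain $C$ of size $\polylog(n)$ yields an \FOLL formula that performs step~(2) while having access to the full structure. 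Finally, combining the correction with the stored rank and updating the reachability relation (step~(3)) is again first-order and is inherited unchanged from the base program. Composing an \ACz step, an $\AC[\log \log n]$ step and an \ACz step gives a single update in $\AC[\log \log n] = \FOLL$, which is what \DynFOLL permits; insertions and deletions are treated uniformly, since the rank-update machinery is symmetric and already handles deletions in the single-change case.

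The main obstacle I anticipate is not any individual computation but the verification that the number-of-changes-dependent step is indeed the \emph{only} place where the bound $\bigO(\frac{\log n}{\log \log n})$ is used in~\cite{Datta0VZ18}, and that the quantity computed there is genuinely a uniform $\NC^2$ linear-algebra task on the size-$\bigO(m)$ structure---rather than something that implicitly requires small-depth access to all $n$ elements, which would break the reduction to Lemma~\ref{theorem:smallstructure-foll}(a). Once this is confirmed, the remainder is the mechanical substitution of the $\NC^2$ rank computation (via the small-structure lemma) for the first-order brute-force rank computation, and the full details are deferred to the appendix as indicated.
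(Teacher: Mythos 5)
Your proposal is correct and follows essentially the same route as the paper's own proof: the paper also keeps the program of \cite{Datta0VZ18} unchanged, observes that the only step depending on the batch size is the linear algebra on the small matrix arising from the Sherman--Morrison--Woodbury update (concretely, the determinant of the $k \times k$ matrix $(I+BVA^{-1}U)$, which \cite{Datta0VZ18} evaluates by first-order brute force precisely because $k! = n^{\bigO(1)}$ for $k = \bigO(\frac{\log n}{\log \log n})$), and replaces that step by an \FOLL computation via Lemma~\ref{theorem:smallstructure-foll}(a), using that the determinant is in $\TCo \subseteq \NC^2$. The only cosmetic difference is framing: the paper maintains a matrix inverse and routes the small-matrix step through a determinant lemma of \cite{ADKSVV23}, whereas you phrase it as rank maintenance with a Schur-complement correction --- mathematically the same substitution, and your anticipated obstacle (that this is the \emph{only} place the change-size bound enters) is indeed confirmed by the paper.
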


For undirected reachability and regular languages, replacing Lemma~\ref{theorem:smallstructure-fo} by Lemma~\ref{theorem:smallstructure-foll}(b) in the \DynFO maintainability proofs for $\polylog(n)$ changes from \cite{DattaKMTVZ20, TschirbsVZ23} directly yields:

\begin{theorem}\label{thm:undirected}
\begin{enumerate}[(a)] 
 \item Reachability in undirected graphs is in \DynFOLL under insertions and deletions of $\superpolylog$ edges.
 \item Membership in regular languages is in \DynFOLL under symbol changes at $\superpolylog$ positions.
\end{enumerate}
\end{theorem}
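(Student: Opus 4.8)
The plan is to reuse the two existing \DynFO programs essentially unchanged and to upgrade only their small-structure core. Both the program for undirected reachability under $\plogtable$ changes \cite{DattaKMTVZ20} and the program for regular languages under $\plogtable$ changes \cite{TschirbsVZ23} instantiate the three-step template of Section~\ref{section:small-structures}: on a batch change they (1)~assemble, from the changed tuples and the current auxiliary data, a substructure whose size is bounded by the number of changes up to a polylogarithmic factor, (2)~evaluate a suitable $\SAC^1$-computable query on that substructure, invoking Lemma~\ref{theorem:smallstructure-fo} to turn it into an \FOar update, and (3)~fold the outcome back into the auxiliary relations by a first-order definition. The only place where the bound $\plogtable$ on the change size is actually exploited is the appeal to Lemma~\ref{theorem:smallstructure-fo} in step~(2). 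I would therefore keep steps~(1) and~(3) verbatim, relax the size bound on the changes to $\superpolylog$ throughout, and replace every use of Lemma~\ref{theorem:smallstructure-fo} by Lemma~\ref{theorem:smallstructure-foll}(b), which tolerates substructures of size $\superpolylog$ while still producing an \FOLL, and hence \DynFOLL, update.

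Concretely, for part~(a) the program maintains representatives of the connected components together with a spanning-forest-like auxiliary structure. On a batch of $\superpolylog$ edge insertions and deletions one contracts the graph onto the endpoints of the changed edges and the representatives of the affected components, obtaining a structure of size $\superpolylog$; connectivity of this contracted graph is decidable in $\LOGSPACE \subseteq \NL \subseteq \SAC^1$, so Lemma~\ref{theorem:smallstructure-foll}(b) makes its evaluation \FOLL, and the new representatives are derived exactly as in \cite{DattaKMTVZ20}. For part~(b) the program maintains products in the transition monoid of the automaton over segments of the word; on a batch of $\superpolylog$ symbol changes one builds a word-like structure of size $\superpolylog$ from the affected segments together with their boundary products and re-evaluates membership as an iterated monoid product, which lies in $\NCo \subseteq \SAC^1$. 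Again Lemma~\ref{theorem:smallstructure-foll}(b) turns this into an \FOLL update, and the segment products are recombined by the same first-order definition as in \cite{TschirbsVZ23}.

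The one point that genuinely requires checking is that steps~(1) and~(3)---assembling the small structure and merging its answer back into the auxiliary relations---remain \FOar-definable once the change set has $\superpolylog$ rather than $\plogtable$ elements. Since these steps are plain first-order definitions over the input relations, the auxiliary relations, and the change set supplied as a relation, their definability does not depend on the cardinality of the change set and they transfer unchanged. The only subtlety is any implicit use of \FOar's iterated arithmetic, which the excerpt guarantees only for $\plogtable$-many numbers: I would verify that in both proofs such counting or summing is either over a bounded number of quantities or can be absorbed into the $\SAC^1$ core, which under Lemma~\ref{theorem:smallstructure-foll}(b) already copes with size $\superpolylog$. As this is the sole size-sensitive ingredient, I expect this verification to be the main---and essentially only---obstacle, and a routine one.
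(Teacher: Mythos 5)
Your proposal matches the paper's proof essentially verbatim: the paper's entire argument for this theorem is the single observation that replacing Lemma~\ref{theorem:smallstructure-fo} by Lemma~\ref{theorem:smallstructure-foll}(b) in the \DynFO maintainability proofs of \cite{DattaKMTVZ20, TschirbsVZ23} directly yields the result. Your closing caveat about iterated arithmetic over a $\superpolylog$-sized change set is legitimate and resolves exactly as you suggest---by absorbing the counting into the $\SAC^1$ core---which is also how the paper handles the analogous issue in its minimum-spanning-forest proof (Theorem~\ref{thm:small-structure-further}(a)).
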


The small-structure technique has further applications beyond graph reachability and regular languages. We mention a few here. The proofs are deferred to the appendix.

\begin{theorem}\label{thm:small-structure-further}
\begin{enumerate}[(a)] 
 \item A minimum spanning forest for weighted graphs can be maintained
  \begin{enumerate}[(i)]
  \item in \DynFO under changes of $\polylog(n)$ edges, and
  \item in \DynFOLL under changes of $\superpolylog$ edges.
\end{enumerate}   
 \item A maximal matching can be maintained in \DynFOLL under changes of $\polylog(n)$ edges.
 \item For graphs with maximum degree bounded by a constant $\delta$, a proper $(\delta+1)$-colouring can be maintained in $\DynFO$ under changes of $\polylog(n)$ edges.
\end{enumerate}
\end{theorem}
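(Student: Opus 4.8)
The plan is to handle all three parts uniformly with the small-structure technique: in each case I would maintain the current solution, together with a small amount of derived data, as auxiliary relations, and on a batch change of size $m = \polylog(n)$ I would (1) isolate an \emph{affected region} of polylogarithmic size, (2) encode it, together with the globally relevant information as \FOar-definable relations, as a structure on a small subdomain $C$, (3) recompute the solution on $C$ by Lemma~\ref{theorem:smallstructure-fo} (for \DynFO, where an $\SAC^1$ computation on $C$ becomes an \FOar update) or Lemma~\ref{theorem:smallstructure-foll} (for \DynFOLL, where $\NC^2$ on $\polylog(n)$-sized $C$, and $\SAC^1$ on \superpolylog-sized $C$, become \FOLL updates), and (4) stitch the result back into the maintained solution in \FOar. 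The recurring difficulty is to certify that the affected region plus \FOar-definable auxiliary data already determines the correct global update; the weight, degree and size bounds are exactly what make this local-to-global step go through.

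\textbf{Minimum spanning forest (a).} Assuming distinct edge weights so that the minimum spanning forest $F$ is unique, I would maintain $F$, the same-tree relation, and for every pair $u,v$ in one $F$-tree the heaviest edge on the tree path between them (the \emph{bottleneck}), from which membership in $F$ under insertions is decided by the cycle rule. On a change deleting $D$ and inserting $I$ with $|D|+|I| = m$, removing the tree edges in $D$ splits $F$ into $\bigO(m)$ fragments; I contract each fragment to a super-node, obtaining a graph $H$ on $\bigO(m)$ nodes whose edges are the inserted edges in $I$ plus, for every pair of fragments, the lightest surviving original edge joining them, and recompute the minimum spanning forest of $H$. Crucially, computing the minimum spanning forest of a $\polylog(n)$-vertex graph reduces, via the cycle rule, to polylogarithmically many undirected-connectivity queries in the lighter subgraphs and is therefore in $\SAC^1$ (in fact in \LOGSPACE), so Lemma~\ref{theorem:smallstructure-fo} yields \DynFO for $\polylog(n)$ changes and Lemma~\ref{theorem:smallstructure-foll}(b) yields \DynFOLL for \superpolylog\ changes. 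The main obstacle is to make the inter-fragment minima \FOar-definable: the lightest original edge between two fragments must be recovered from the maintained bottleneck data, and one must verify that these bottleneck values compose correctly across a multi-edge cut rather than a single-edge cut.

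\textbf{Maximal matching (b).} I would maintain a maximal matching $M$ and each vertex's matched status. Deleting non-matching edges and inserting edges between matched vertices preserve maximality, so the only vertices that can become the endpoint of a newly addable edge are the set $Fr$ of endpoints freed by deleted matching edges and the endpoints of the inserted set $I$; every addable edge meets $Fr$ or lies in $I$. I would build a small structure containing these $\bigO(m)$ \emph{active} vertices together with, for each of them, $|Fr|+|V(I)| = \bigO(m)$ currently unmatched neighbours, compute a maximal matching of the induced graph of addable edges ($\NC^2$ on a $\polylog(n)$-vertex structure, hence \DynFOLL by Lemma~\ref{theorem:smallstructure-foll}(a)), and add it to $M$. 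Correctness is a pigeonhole argument: if some active $x$ kept an unmatched neighbour $y$, then either $(x,y)$ lies in the small structure, contradicting local maximality, or among the $\bigO(m)$ candidates of $x$ at most $|Fr|+|V(I)|-1$ are taken by other active vertices, leaving $x$ a free partner inside the structure---again a contradiction. The main obstacle is the \FOar-definable selection of polylogarithmically many unmatched neighbours per vertex, since selecting the $k$ smallest qualifying neighbours would require counting up to $k = \polylog(n)$, which is beyond \FOar; I would resolve this by maintaining auxiliary ranking data for the neighbourhoods (an order-based enumeration) that is itself updatable under batch changes.

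\textbf{$(\delta+1)$-colouring (c).} Here deletions never create conflicts, and after inserting $\bigO(m)$ edges the only vertices that may need recolouring are the set $W$ of endpoints of inserted monochromatic edges, with $|W| = \bigO(m)$. I would recolour $W$ by solving a list-colouring instance on the induced subgraph $G[W]$, where each $v \in W$ receives the list $L(v)$ of the $\delta+1$ colours not used by its neighbours outside $W$; since $G$ has maximum degree $\delta$, one checks $|L(v)| \ge (\delta+1)-|N(v)\setminus W| > |N(v)\cap W| = \deg_{G[W]}(v)$ for every $v \in W$, so a proper list-colouring of $G[W]$ exists and can be found in $\NC$. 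Recolouring $W$ this way keeps all outside vertices properly coloured, restoring a proper $(\delta+1)$-colouring. The obstacle specific to obtaining \DynFO (rather than \DynFOLL) is to show that this bounded-degree list-colouring is computable in $\SAC^1$---the degree bound is used a second time here---so that Lemma~\ref{theorem:smallstructure-fo} applies; with only $\NC^2$-computability one would still obtain \DynFOLL via Lemma~\ref{theorem:smallstructure-foll}(a).
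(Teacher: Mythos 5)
Your contracted graph $H$ is built only from the fragments of $F \setminus D$, so an inserted edge whose endpoints lie in the same fragment becomes a self-loop and is ignored -- yet such an edge can displace a tree edge. Concretely, let $F$ be the path on $a,b,c$ with $w(a,b)=5$, $w(b,c)=6$, and let the batch insert just the edge $(a,c)$ with weight $1$. Then there is a single fragment, $H$ is one super-node with a self-loop, its spanning forest is empty, and your update keeps the old forest of weight $11$, while the true minimum spanning forest is $\{(a,b),(a,c)\}$ of weight $6$. The fallback you allude to, deciding membership ``by the cycle rule'' from the bottleneck relation, also fails for batches, because several inserted edges can compete for the same bottleneck edge: inserting $(a,d)$ and $(a,c)$, both lighter than the heaviest middle edge $(b,c)$, into the path on $a,b,c,d$ evicts $(b,c)$ once but adds both new edges, creating a cycle. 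A correct local recomputation would have to split fragments at endpoints of inserted edges as well and represent the connecting tree paths in $H$ by their bottleneck edges; you also never explain how the bottleneck relation itself is updated under a batch. The paper takes an entirely different route that avoids all of this: it maintains the weight-sorted edge list (positions are corrected by counting inside $\Delta E$, via Lemma~\ref{theorem:smallstructure-fo}, resp.\ Lemma~\ref{theorem:smallstructure-foll}(b)) and maintains connectivity of every prefix graph $G_i$ formed by the $i$ lightest edges; each $G_i$ changes by $\bigO(|\Delta E|)$ edges per batch, so \cite{DattaKMTVZ20}, resp.\ Theorem~\ref{thm:undirected}(a), applies, and Kruskal's rule (``$e_i$ belongs to the forest iff its endpoints are disconnected in $G_{i-1}$'') defines the result.

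\textbf{Part (b).} Here you essentially reproduce the paper's proof: a polylog-sized graph on the freed and inserted endpoints plus polylogarithmically many unmatched neighbours per such vertex, a maximal matching on it via $\NC^2$ \cite{Luby86} and Lemma~\ref{theorem:smallstructure-foll}(a), and the same pigeonhole for maximality. The obstacle you flag is not real: the paper selects exactly the first $(\log n)^c$ unmatched neighbours with respect to the linear order, and this threshold-$\polylog$ selection is \FOar-expressible -- for instance, $|S| \le t$ holds iff for some $a$ the map $x \mapsto (ax \bmod p) \bmod t^2$, with $p$ a fixed prime above $n$, is injective on $S$ and its image has at most $t$ elements; that image lies in a polylog-sized range, so its size is computable by iterated addition \cite{HAB}. (Your alternative of maintaining neighbourhood ranks would also work, at the price of extra auxiliary relations.) One quantitative slip: a single edge of the local matching can absorb two candidates of an active vertex $x$, so you should reserve roughly twice as many candidates per vertex; this is cosmetic.

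\textbf{Part (c).} This part is correct but takes a slightly different route. You recolour only the conflict endpoints $W$ within the old palette by list-colouring $G[W]$; the paper instead recolours all affected vertices with a disjoint palette of $2^\delta$ fresh colours (computable in \LOGSPACE{} \cite{DattaKM16}, hence \FOar-definable by Lemma~\ref{theorem:smallstructure-fo}) and then reduces the resulting constantly many colours to $\delta+1$ globally, by constantly many \FOar-definable rounds of maximal-independent-set extraction in the style of Goldberg and Plotkin \cite{GoldbergP87}. Your remaining obstacle dissolves the same way: first $2^\delta$-colour $G[W]$ in $\LOGSPACE \subseteq \SAC^1$ via Lemma~\ref{theorem:smallstructure-fo}, then process the constantly many colour classes in order, all vertices of the current class simultaneously picking a list colour unused by already-fixed neighbours; independence within a class makes this safe, and it takes only a constant number of \FO-definable rounds, so \DynFO follows. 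Your degree-counting argument for the existence of the list colouring is correct, and both routes prove the claim.
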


\section{The hierarchical technique}\label{section:hierarchical}
\newcommand{\dreieck}[1]{
	\draw (#1) -- ($(#1.center)+(-0.25,-0.5)$) -- ($(#1)+(0.25,-0.5)$)  -- (#1.center);
}
\newcommand{\caseOneA}{
\begin{tikzpicture}[
			yscale=1,
			xscale=1
		]
		
\node (tmp) at (-4,0){(1A)};
		
				\node[blackNode, label=right:$x$, draw] (x) at (0,1){};
				\node[blackNode, label=right:$r$] (r) at (0,-0){};
				\node[blackNode, label=right:$z$] (z) at (0,-1){};

				\node[blackNode, label=right:$h$] (h) at (0,-3){};
				\draw [snakeEdgea] (x) to (r);
				\draw [snakeEdgea] (r) to (z);
				\draw [snakeEdgea] (z) to (h);

				\node (D1) at (-1,-1.25){$D_1$};
				\node (D2) at (-2,-4){$D_2$};
				
				\begin{pgfonlayer}{background}
					\filldraw[fill=blue!10] (r) -- (-6, -4.5) -- (-5,-4.5) -- (z) -- (5,-4.5) -- (6,-4.5) -- (r);
					\filldraw[fill=blue!10] (z) -- (-5, -4.5) -- (-2,-4.5) -- (h) -- (2,-4.5) -- (5,-4.5) -- (z);
				\end{pgfonlayer}

	 	\end{tikzpicture}
	 }
 \newcommand{\caseTwoA}{
 	\begin{tikzpicture}[
 	yscale=1,
 	xscale=1
 	]
 	
 			\node (tmp) at (-5,0.5){(2A)};
 			
 				\node[blackNode, label=right:$x$, draw] (x) at (0,1){};
 \node[blackNode, label=right:$r$] (r) at (0,-0){};
 \node[blackNode, label=right:$z$] (z) at (0,-1){};

 \node[blackNode, label=right:$y$] (y1) at (-1,-2){};
 \node[blackNode, label=right:$ $] (y2) at (0,-2){};
 \node[blackNode, label=right:$ $] (y3) at (1,-2){};

 \dreieck{y2}
 \dreieck{y3}
 
 \draw [dotsEdge] (y2) to (y3);
 \draw [dEdge] (z) to (y1);
 \draw [dEdge] (z) to (y2);
 \draw [dEdge] (z) to (y3);

 \node[blackNode, label=right:$h$] (h) at (-1.5,-3){};
 \draw [snakeEdgea] (x) to (r);
 \draw [snakeEdgea] (r) to (z);
 \draw [snakeEdgea] (y1) to (h);
 
 \node (D1) at (-1,-1.25){$D_1$};
 \node (D2) at (-2.5,-4){$D_2$};
 
 \begin{pgfonlayer}{background}
 \filldraw[fill=blue!10] (r) -- (-6, -4.5) -- (-5,-4.5) -- (z) -- (5,-4.5) -- (6,-4.5) -- (r);
 \filldraw[fill=blue!10] (y1) -- (-4, -4.5) -- (-2,-4.5) -- (h) -- (-1,-4.5) -- (0,-4.5) -- (y1);
 \end{pgfonlayer}

 	\end{tikzpicture}
 }
\newcommand{\caseOneB}{
	\begin{tikzpicture}[
	yscale=1,
	xscale=0.85
	]
	\node (tmp) at (-5,0){(1B)};
	\node[blackNode, label=right:$x$, draw] (x) at (0,1){};
	\node[blackNode, label=right:$r$] (r) at (0,-0){};
	\node[blackNode, label=right:$v$] (v) at (0,-1){};
	
	\node[blackNode, label=right:$y_1$] (y1) at (-1.5,-2){};
	\node[blackNode, label=right:$y_2$] (y2) at (-0.5,-2){};
	\node[blackNode, label=right:$ $] (y3) at (0.5,-2){};
	\dreieck{y3}
	\node[blackNode, label=right:$ $] (y4) at (1.5,-2){};
	\dreieck{y4}
	
	\node[blackNode, label=right:$h$] (h) at (-2.5,-3){};
	\node[blackNode, label=right:$z$] (z) at (-0.5,-3){};

	\draw [snakeEdgea] (x) to (r);
	\draw [snakeEdgea] (r) to (v);
	\draw [snakeEdgea] (y1) to (h);
	\draw [snakeEdgea] (y2) to (z);
	
	\draw [dEdge] (v) to (y1);
	\draw [dEdge] (v) to (y2);
	\draw [dEdge] (v) to (y3);
	\draw [dEdge] (v) to (y4);

	\draw [dotsEdge] (y3) to (y4);

	\node (D1) at (-1,-1){$D_1$};
	\node (D2) at (-4,-4){$D_2$};
	\node (D3) at (-1.5,-4){$D_3$};
	\node (D4) at (-0.25,-4){$T$};
	
	\begin{pgfonlayer}{background}
	\filldraw[fill=blue!10] (r) -- (-8, -4.5) -- (-7,-4.5) -- (v) -- (7,-4.5) -- (8,-4.5) -- (r);
	\filldraw[fill=blue!10] (y1) -- (-6, -4.5) -- (-4,-4.5) -- (h) -- (-3.5,-4.5) -- (-2.5,-4.5) -- (y1);
	\filldraw[fill=blue!10] (y2) -- (-2.25, -4.5) -- (-1.5,-4.5) -- (z) -- (1,-4.5) -- (2,-4.5) -- (y2);
	\filldraw[fill=blue!10] (z) -- (-1.5, -4.5)  -- (1,-4.5) -- (z);
	\end{pgfonlayer}
	
	\end{tikzpicture}
}
\newcommand{\caseTwoB}{
	\begin{tikzpicture}[
	yscale=1,
	xscale=0.85
	]
	 			\node (tmp) at (-6,0.5){(2B)};
 	
	\node[blackNode, label=right:$x$, draw] (x) at (0,1){};
	\node[blackNode, label=right:$r$] (r) at (0,-0){};
	\node[blackNode, label=right:$v$] (v) at (0,-1){};
	
	\node[blackNode, label=right:$y_1$] (y1) at (-1.5,-2){};
	\node[blackNode, label=right:$y_2$] (y2) at (-0.5,-2){};
	\node[blackNode, label=right:$ $] (y3) at (0.5,-2){};
	\dreieck{y3}
	\node[blackNode, label=right:$ $] (y4) at (1.5,-2){};
	\dreieck{y4}
	
	\node[blackNode, label=right:$h$] (h) at (-2.5,-3){};
	\node[blackNode, label=right:$z$] (z) at (-0.5,-3){};
	
	\node[blackNode, label=right:$ $] (z1) at (-0.75,-4){};
	\node[blackNode, label=right:$ $] (z2) at (0.25,-4){};
	\dreieck{z1}
	\dreieck{z2}
	
	\draw [snakeEdgea] (x) to (r);
	\draw [snakeEdgea] (r) to (v);
	\draw [snakeEdgea] (y1) to (h);
	\draw [snakeEdgea] (y2) to (z);
	
	\draw [dEdge] (v) to (y1);
	\draw [dEdge] (v) to (y2);
	\draw [dEdge] (v) to (y3);
	\draw [dEdge] (v) to (y4);
	
	\draw [dEdge] (z) to (z1);
	\draw [dEdge] (z) to (z2);
	
	\draw [dotsEdge] (y3) to (y4);
	\draw [dotsEdge] (z1) to (z2);
	
	\node (D1) at (-1,-1){$D_1$};
	\node (D2) at (-4,-4){$D_2$};
	\node (D3) at (-1.5,-4){$D_3$};
	
	\begin{pgfonlayer}{background}
	\filldraw[fill=blue!10] (r) -- (-8, -4.5) -- (-7,-4.5) -- (v) -- (7,-4.5) -- (8,-4.5) -- (r);
	\filldraw[fill=blue!10] (y1) -- (-6, -4.5) -- (-4,-4.5) -- (h) -- (-3.5,-4.5) -- (-2.5,-4.5) -- (y1);
	\filldraw[fill=blue!10] (y2) -- (-2.25, -4.5) -- (-1.25,-4.5) -- (z) -- (1,-4.5) -- (2,-4.5) -- (y2);
	\end{pgfonlayer}
	
	\end{tikzpicture}
}
In this section we describe and use a simple, yet powerful hierarchical technique for handling polylogarithmic changes in $\DynFOLL$. After changing $m \df (\log n)^c$ many tuples, auxiliary data $\calR^1, \ldots, \calR^k$ is built in $k \df d \log \log n$ rounds, for suitable $d$. The auxiliary data $\calR^{\ell - 1}$ after round $\ell-1$ encodes information for certain subsets of the changes of size $2^{\ell-1}$. This information is then combined, via first-order formulas, to information on $2^\ell$ changes in round~$\ell$. The challenge for each concrete dynamic query is to find suitable auxiliary data which is defined depending on a current instance as well as on subsets of changes, and can be combined via first-order formulas to yield auxiliary data for larger subsets of changes.

We apply this approach to maintaining distances in acyclic and undirected graphs, context-free language membership, and tree-isomorphism under polylogarithmic changes. In these applications of the hierarchical technique, information is combined along paths, binary trees, and arbitrary trees, respectively.

\subsection{Undirected and acyclic reachability and distances}

The articles that introduced the class \DynFO showed that reachability for undirected and for acyclic graphs is in \DynFO under single-edge changes \cite{DongS93, PatnaikI97}. For these classes of graphs,  also distances, that is, the number of edges in a shortest path between two reachable nodes, can be maintained. For undirected graphs, this was proven in \cite{GradelS12}, for acyclic graphs it is a straightforward extension of the proof for reachability from \cite{DongS93}. 

While reachability for undirected graphs is in \DynFO under polylogarithmically many edge changes \cite{DattaKMTVZ20}, we only know the general $\bigO(\frac{\log n}{\log \log n})$ bound for acyclic graphs \cite{Datta0VZ18}. It is unknown whether distances can be maintained in \DynFO under changes of non-constant size, both for undirected and for directed, acyclic graphs.

\begin{theorem}\label{thm:distances}
Distances can be maintained in \DynFOLL under insertions and deletions of $\polylog(n)$ edges for 
\begin{enumerate}[(a)]
 \item undirected graphs, 
 \item acyclic directed graphs.
\end{enumerate}
\end{theorem}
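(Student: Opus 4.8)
The plan is to maintain, as the central auxiliary data, the full distance matrix $\dist_G$ of the current graph $G$, with each entry stored as an $\bigO(\log n)$-bit number (and a flag for $\infty$), and to update it to $\dist_{G'}$ after a batch change. A batch of at most $m = \plogtable$ changes I would process as a deletion phase $E^-$ followed by an insertion phase $E^+$. The insertion phase is the easy, small-structure part: writing $C^+$ for the endpoints of the inserted edges, a shortest $u$-$v$ path in the post-insertion graph factors, at the inserted-edge endpoints, into segments that are either inserted edges (weight $1$) or geodesics of the post-deletion graph. Since the post-deletion distance matrix is available, these segment weights are known, so the distances are obtained from all-pairs shortest paths in a contracted weighted graph on the $\polylog(n)$-vertex set $C^+$; as all-pairs shortest paths are in $\NC^2$, Lemma~\ref{theorem:smallstructure-foll}(a) yields an \FOLL formula for this step, and the extension to arbitrary pairs $u,v$ is a single \FOar minimisation over $\polylog(n)$ many $\bigO(\log n)$-bit numbers. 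Thus the whole difficulty is concentrated in computing all-pairs distances after the \emph{deletions}.

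For the deletion phase, set $G' = G - E^-$ and let $C$ be the endpoints of the deleted edges, so $|C| \le 2m = \polylog(n)$. Here distances genuinely \emph{increase}, and the crucial point is that $\dist_G$ cannot be used directly, because a geodesic realising $\dist_G(a,b)$ may run through a deleted edge. The starting observation is again a decomposition at $C$: since every deleted edge has both endpoints in $C$, any $G'$-path whose internal vertices avoid $C$ uses only surviving edges, apart from a possible single direct edge between its endpoints. Hence a shortest path in $G'$ factors into segments that are geodesics of $G$ avoiding $C$ internally, and the update reduces to computing the \emph{avoiding distances} $g(a,b)$ $=$ the length of a shortest $G$-path from $a$ to $b$ all of whose internal vertices lie outside $C$ (for $a,b\in C$, and, for the extension to all pairs, from every source $u\in V$ to each $a\in C$).

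These avoiding distances are what I would compute with the hierarchical technique, combining distance information along paths over $k = \bigO(\log\log n)$ rounds. The design goal is auxiliary data, attached to dyadic blocks $S \subseteq C$ of the deleted-edge endpoints, that (i) after round $\ell$ captures the avoidance of a block of $2^\ell$ vertices, (ii) can be combined from two sibling blocks by a first-order formula operating on a $\polylog(n)$-sized portal structure, and (iii) yields the required avoiding distances $g$ at the top level. Because each combination acts on a structure of size $\polylog(n)$, the min-plus bookkeeping per round is \FOar, and stacking the $\bigO(\log\log n)$ rounds gives an $\AC[\log\log n] = \FOLL$ update, placing the query in \DynFOLL. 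The two graph classes enter at exactly this point: for acyclic directed graphs I would organise the combination along a topological order, which bounds how the segments of distinct blocks interleave; for undirected graphs I would exploit that avoiding geodesics may be traversed in either direction, so that segments can be glued symmetrically at portal vertices.

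The main obstacle, and the reason the auxiliary data must be designed so carefully, is precisely the composability of deletions. "Forbidding more vertices" is the hard direction for shortest paths: naively concatenating avoiding geodesics of two blocks does not avoid the union of the blocks, since each block's geodesics may pass through the other block's vertices, and removing one forbidden vertex at a time would cost $\polylog(n)$ rounds rather than $\bigO(\log\log n)$. The heart of the argument is therefore to show that, round by round, the block-wise data can be recombined into avoiding distances for the doubled set using only a first-order (min-plus) computation on the portal set $C$, and to prove its correctness — that the recombined value is realised by a genuine $G$-path avoiding all of the enlarged block internally. This is where the topological order (acyclic case) and the reversibility of undirected geodesics (undirected case) are used to control the interaction between blocks and to keep the per-round formula first-order.
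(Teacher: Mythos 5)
Your insertion phase is sound, and it is a genuinely different route from the paper's: you decompose a post-insertion shortest path at the endpoints of inserted edges and then run weighted all-pairs shortest paths on the $\polylog(n)$-size portal graph via Lemma~\ref{theorem:smallstructure-foll}(a), whereas the paper instead keeps a single relation $\dist$ and doubles, per first-order round, the number of inserted edges a path may use ($\varphi_\ins$). Both work for insertions. The genuine gap is in the deletion phase, and it sits exactly at the step you yourself defer to ``the heart of the argument''. First, your reduction replaces the deletion of $m$ edges by the requirement that path segments avoid the endpoint set $C$ \emph{internally}; this is a vertex-failure problem, equivalent to deleting all edges incident to $C$, of which there may be $n\cdot\polylog(n)$ many, so you have reduced the problem to a strictly harder one. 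Second, the hierarchical fix you sketch has no valid combination step: block-wise avoiding distances do not compose. A segment that is a geodesic avoiding $S_1$ internally may pass through vertices of $S_2$ (and may even use a deleted edge whose two endpoints both lie in $S_2$), and symmetrically for $S_2$; gluing such segments at portals therefore produces walks that avoid neither $S_1\cup S_2$ nor even the deleted edge set, so the min-plus value can undercut the true distance and is not realized by any legal path (unsound), while restricting to sound combinations misses the true avoiding paths (incomplete). Neither a topological order nor the reversibility of undirected geodesics addresses this, and no lemma is offered that would; this composability obstruction is the known reason why distance computation under multiple vertex failures is hard.

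What your proposal is missing is the paper's key idea of choosing \emph{composable} atomic pieces: \emph{safe} distances, i.e., pairs $(u,v)$ such that \emph{no} shortest $u$-$v$ path in the old graph $G$ uses a deleted edge. These are \FOar-definable from the old $\dist$ relation (there is no split $d = d_1 + d_2 + 1$ through a deleted edge), they remain valid in $G'$, and any concatenation of safe pieces and surviving edges is a genuine walk of $G'$, so every round of combination is automatically sound. Completeness is then supplied by the combinatorial statement you lack: Lemma~\ref{lem:shortestpaths} (for undirected graphs due to \cite{pang2005incremental}, extended in the paper to acyclic graphs) and Corollary~\ref{cor:shortestpaths}, which show that after deleting $m$ edges some shortest $G'$-path between any connected pair decomposes into at most $m+1$ safe shortest paths of $G$ joined by at most $m$ surviving edges. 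The bound $m+1$ on the number of pieces is precisely what makes $\bigO(\log \log n)$ doubling rounds suffice. Without this lemma, or a genuinely new composition argument for vertex-avoiding distances, your deletion case does not go through.
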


To maintain distances, a dynamic programs can use a relation of the form $\dist(u,v,d)$ with the meaning ``the shortest path from $u$ to $v$ has length $d$''. The proof of Theorem~\ref{thm:distances} is then a direct application of the hierarchical technique on paths. After inserting polylogarithmically many edges, distance information for two path fragments can be iteratively combined to distance information for paths fragments that involve more changed edges. Thus  $\polylog{n}$ path fragments (coming from so many connected components before the insertion) can be combined in $\log \log n$ many iterations. 

To handle edge deletions, we observe that some distance information is still guaranteed to be valid after the deletion: the shortest path from $u$ to $v$ surely has still length $d$ after the deletion of some edge $e$ if there was no path of length $d$ from $u$ to $v$ that used $e$. These ``safe'' distances can be identified using the \dist relation. We show that after deleting polylogarithmically many edges, shortest paths can be constructed from polylogarithmically many ``safe'' shortest paths of the original graph. We make this formal now.

\begin{lemma}
	\label{lem:shortestpaths}
Let $G = (V,E)$ be an undirected or acyclic graph, $e\in E$ an edge and $u,v\in V$ nodes such that there is a path from $u$ to $v$ in $G'=(V,E-e)$.
 For every shortest path $u = w_0, w_1, \ldots, w_{d-1}, w_d=v$ from $u$ to $v$ in $G'$ there is an edge $(w_i, w_{i+1})$ such that no shortest path from $u$ to $w_i$ and no shortest path from $w_{i+1}$ to $v$ in the original graph $G$ uses~$e$.
\end{lemma}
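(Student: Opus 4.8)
The plan is to locate the required splitting edge by tracking, along the given path $P = w_0, \dots, w_d$, two Boolean quantities: let $p_i = 1$ iff \emph{some} shortest path from $u$ to $w_i$ in $G$ uses $e$, and let $q_i = 1$ iff some shortest path from $w_i$ to $v$ in $G$ uses $e$. The edge $(w_i, w_{i+1})$ demanded by the lemma is then exactly one with $p_i = 0$ and $q_{i+1} = 0$, since $p_i = 0$ says that \emph{no} shortest $u$-to-$w_i$ path uses $e$, and $q_{i+1} = 0$ says the same for $w_{i+1}$-to-$v$. It is essential to phrase the indicators as ``all shortest paths avoid $e$'' rather than ``some shortest path avoids $e$'': the two differ precisely when a node admits both a shortest path through $e$ and one avoiding $e$, and the lemma insists on the stronger property.

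I would first record the boundary values $p_0 = 0$ and $q_d = 0$ (the trivial length-$0$ paths at $u$ and at $v$ use no edge), and then establish a monotonicity structure: $\{i : p_i = 1\}$ is a suffix $\{\tau_p, \dots, d\}$ of indices and $\{i : q_i = 1\}$ is a prefix $\{0, \dots, \tau_q\}$. The key computation uses that the prefix $w_0, \dots, w_i$ of $P$ is a shortest $u$-to-$w_i$ path in $G'$, so $d_{G'}(u, w_i) = i$, whence $d_G(u, w_i) = \min(i,\, m(i))$ where $m(i)$ is the least length of an $e$-using $u$-to-$w_i$ path; consequently $p_i = 1 \iff m(i) \le i$. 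Since $(w_i, w_{i+1})$ is an edge of $G$, the relevant distance between $w_{i+1}$ and an endpoint of $e$ differs from that of $w_i$ by at most one, so $m(i+1) \le m(i) + 1$ and therefore $m(i) - i$ is non-increasing; this yields the suffix property for $p$, and a symmetric argument gives the prefix property for $q$.

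The heart of the argument, and the step I expect to be the main obstacle, is to show that these two index sets are \emph{disjoint}: no $w_i$ on $P$ can satisfy both $p_i = 1$ and $q_i = 1$. Granting disjointness, the suffix $\{p = 1\}$ and the prefix $\{q = 1\}$ force $\tau_q < \tau_p$, so taking $i = \tau_q$ (or $i = 0$ if no $q_j$ equals $1$) produces an index in $\{0, \dots, d-1\}$ with $p_i = 0$ and $q_{i+1} = 0$, which finishes the proof; here I assume $u \neq v$, so $d \ge 1$ and $P$ really has an edge. For disjointness I would treat the two graph classes separately. In the acyclic directed case, $p_i = 1$ forces a directed path $b \leadsto w_i$ and $q_i = 1$ forces $w_i \leadsto a$, which together with the edge $a \to b$ close a directed walk through $e$ of positive length, impossible in a DAG. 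In the undirected case no cycle argument is available, so I would instead concatenate a shortest $u$-to-$w_i$ path and a shortest $w_i$-to-$v$ path that both use $e$: this is a $u$-to-$v$ walk of length at most $i + (d-i) = d$ that traverses $e$ twice, and a short case analysis on which endpoints of $e$ the two crossings meet lets one delete or reverse the middle segment (using undirectedness) to obtain a $u$-to-$v$ walk that \emph{avoids} $e$ of length at most $d-2$. As such a walk lives in $G'$, this contradicts $d_{G'}(u,v) = d$, establishing disjointness and hence the lemma.
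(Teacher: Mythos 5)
Your proof is correct, but it is organized quite differently from the paper's. The paper proves only the acyclic case directly: it picks the \emph{first} node $w_{i+1}$ on the path such that some shortest $u$-to-$w_{i+1}$ path in $G$ uses $e$ (so that, by minimality of the index, no shortest path to $w_i$ uses $e$, with the last edge $(w_{d-1},v)$ as fallback if no such node exists), and then rules out an $e$-using shortest path from $w_{i+1}$ to $v$ by exactly your DAG argument — the two $e$-using paths would close a directed cycle through $e$. For undirected graphs the paper simply cites prior work (Pang et al.\ on incremental shortest paths) rather than proving anything. So your disjointness step in the DAG case coincides with the paper's key step, while your suffix/prefix scaffolding replaces the paper's ``first flip'' choice, and your undirected walk surgery — concatenating the two $e$-using shortest paths, which cross $e$ twice, then deleting or reversing the middle to get a $u$-to-$v$ walk of length at most $d-2$ in $G'$ — supplies a self-contained proof of the case the paper delegates to a reference, which is a genuine plus.

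Two refinements. First, your monotonicity step is dispensable: the boundary values $p_0 = q_d = 0$ together with disjointness already force the desired index, since if every $i$ satisfied $p_i = 1$ or $q_{i+1} = 1$, then from $p_0 = 0$ one gets $q_1 = 1$, hence $p_1 = 0$ by disjointness, and inductively $q_i = 1$ for all $i \geq 1$, contradicting $q_d = 0$. This is in effect what the paper's choice of the first flip exploits, with no monotonicity needed. Second, within that (superfluous) step, the inequality $m(i+1) \leq m(i)+1$ is only valid if $m$ ranges over $e$-using \emph{walks} rather than paths: appending the edge $(w_i,w_{i+1})$ to a path need not yield a path, and shortening the resulting walk to a path may drop $e$. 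The equivalence $p_i = 1 \iff m(i) \leq i$ survives the change to walks (a minimal $e$-using walk whose length equals the distance is necessarily simple), so nothing breaks, but as written the claim is slightly imprecise.
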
 
\begin{proof}
For undirected graphs, this was proven in \cite[Lemma 3.5c]{pang2005incremental}. We give the similar proof for acyclic graphs.
If no node $w_{i+1}$ on a shortest path $u = w_0, w_1, \ldots, w_{d-1}, w_d=v$ from $u$ to $v$ in $G'$ exists such that some shortest path from $u$ to $w_{i+1}$ in $G$ uses the edge $e$, the edge $(w_{d-1},v)$ satisfies the lemma statement. Otherwise, let $w_{i+1}$ be the first such node on the path. It holds $i+1 \geq 1$, as the shortest path from $u$ to $u$ trivially does not use $e$. So, no shortest path from $u$ to $w_i$ in $G$ uses $e$.
There is no shortest path from $w_{i+1}$ to $v$ in $G$ that uses $e$: otherwise, there would be a path from $e$ to $w_{i+1}$ and a path from $w_{i+1}$ to $e$ in $G$, contradicting the assumption that $G$ is acyclic.
\end{proof}

\begin{corollary}\label{cor:shortestpaths}
	Let $G = (V, E)$ be an undirected or acyclic graph and $\Delta E \subseteq E$ with $|\Delta E| = m$. For all nodes $u$ and $v$ such that $v$ is reachable from $u$ in $G' \df (V, E \setminus \Delta E)$ there is a shortest path in $G'$ from $u$ to $v$ that is composed of at most $m$ edges and $m+1$ shortest paths of $G$, each from some node $u_i$ to some node $v_i$ for $i \leq m+1$, such that no shortest path from $u_i$ to $v_i$ in $G$ uses an edge from $\Delta E$.
\end{corollary}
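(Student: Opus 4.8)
The plan is to fix one \emph{specific} shortest path of $G'$ and cut it greedily into safe shortest paths of $G$, so that the only real content is bounding the number of cuts by $m$; the statement is essentially the $m$-fold iteration of Lemma~\ref{lem:shortestpaths}, whose single-edge case ($m=1$) already produces one cut and two safe pieces. Concretely, I would take any shortest path $P = w_0, \dots, w_d$ from $u$ to $v$ in $G' \df (V, E \setminus \Delta E)$ and scan it from the front, repeatedly taking the \emph{longest} prefix $w_a, \dots, w_b$ that is a shortest path of $G$ no shortest path of which (in $G$) uses an edge of $\Delta E$, declaring the following edge $(w_b, w_{b+1})$ a ``glue'' edge, and restarting at $w_{b+1}$. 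By construction every maximal uncut segment is a safe shortest path of $G$, so $k$ glue edges yield a decomposition into at most $k$ edges and $k+1$ safe shortest paths; it remains to show $k \le m$.

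To bound the number of cuts, I would analyse why a segment starting at $w_a$ cannot be extended past $w_{b+1}$. Since $w_a, \dots, w_b$ is a shortest path of $G$ but $w_a, \dots, w_{b+1}$ is not safe, one of two things happens: either $d_G(w_a, w_{b+1})$ is strictly smaller than the length of this sub-path of $P$, in which case the $G'$-shortest sub-path can be shortcut in $G$ and the shortcut must leave $G'$, i.e.\ use an edge of $\Delta E$; or the sub-path is still a shortest path of $G$ but some shortest $G$-path from $w_a$ to $w_{b+1}$ uses an edge of $\Delta E$. In either case the cut at $(w_b, w_{b+1})$ comes with a witnessing edge $f \in \Delta E$ that lies on a shortest path of $G$ from $w_a$ to $w_{b+1}$. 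This produces an assignment from cuts to deleted edges, and the target bound $k \le m$ is exactly the statement that this assignment can be taken to be injective.

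A technical tool I would isolate for the analysis is the following restriction property, valid for both undirected and acyclic $G$: if $p$ and $q$ occur in this order on a shortest path of $G$ from $s$ to $t$ and no shortest $G$-path from $s$ to $t$ uses an edge of a set $F$, then no shortest $G$-path from $p$ to $q$ uses an edge of $F$ either. It is proved by splicing an offending shortest $p$-$q$ path into the fixed $s$-$t$ path: the result is a shortest \emph{walk} from $s$ to $t$, which is necessarily a simple path (in the undirected case a repeated vertex could be shortcut, contradicting minimality; in the acyclic case a walk cannot repeat a vertex at all), hence a shortest $s$-$t$ path using an edge of $F$ --- a contradiction. This lemma is what lets one pass safety from whole segments down to the sub-segments created when a cut is inserted, and it is the precise analogue, for a set of edges, of the reasoning behind Lemma~\ref{lem:shortestpaths}.

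The step I expect to be the main obstacle is the injectivity of the cut-to-edge assignment, equivalently the fact that a single deleted edge cannot force cuts in two different places. The consecutive ``windows'' running from $w_a$ to $w_{b+1}$ produced by the scan are essentially disjoint along $P$, so if one edge $f \in \Delta E$ witnessed two of them, the two $f$-avoiding halves of the corresponding shortest paths through $f$ could be recombined into a connection in $G$ between two vertices of $P$ that is strictly shorter than their distance along $P$. Turning this into a genuine contradiction requires showing that the recombined connection stays inside $G'$ --- here the rigidity of $P$ as a true shortest path of $G'$ has to be played off against the safety of the intervening segments --- and it is this interplay between the global shortest-path structure and the per-segment safety conditions that I expect to require the most care. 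Once injectivity is established, the bound $k \le m$, and with it the claimed decomposition into at most $m$ edges and $m+1$ safe shortest paths, follows immediately.
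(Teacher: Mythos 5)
Your overall architecture is sound: the greedy longest-safe-prefix scan, the splicing proof of the restriction property, and the observation that each cut comes with a witnessing edge $f \in \Delta E$ on a shortest $G$-path spanning the window $[w_a, w_{b+1}]$ are all correct, and you are right that the whole corollary reduces to making the cut-to-edge assignment injective. The problem is that this is exactly where the proposal stops. Injectivity is named as the expected obstacle and then assumed (``once injectivity is established \dots''); it is never proved, and it is not a routine verification --- it is the entire mathematical content of the bound $k \le m$. This is a genuine gap.

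The gap is real because the recombination you sketch cannot close it as stated, at least not for undirected graphs. If two windows are witnessed by the same edge $f=\{x,y\}$, gluing the two $f$-avoiding halves of the witnessing paths produces a connection that is only guaranteed to avoid $f$: the witnesses may themselves run through \emph{other} edges of $\Delta E$ (nothing in the greedy construction rules this out, e.g.\ when every shortest $G$-path across a window must use several deleted edges). So ``strictly shorter than the distance along $P$'' only upper-bounds a distance in $G-f$, not in $G' = G \setminus \Delta E$, and no contradiction with $P$ being a shortest $G'$-path follows; the interplay you flag is not a matter of care but requires an additional idea (for instance, an induction that successively eliminates uses of deleted edges from the recombined walks). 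Note the asymmetry: for \emph{acyclic} graphs your plan does go through without any metric argument, since two windows witnessed by the same directed edge $(x,y)$, together with the stretch of $P$ between them, close a directed cycle through $(x,y)$, contradicting acyclicity. It is precisely the undirected case (the one needed for \cref{thm:distances}(a)) that remains open in your write-up. This is also where your route departs from the paper's: the paper's (very terse) argument is an induction over $m$ whose base case is \cref{lem:shortestpaths}, i.e.\ one deleted edge is peeled off at a time and the recombination-style reasoning is only ever invoked in a one-deleted-edge situation, where ``avoids $f$'' and ``stays in $G'$'' coincide; your approach instead tries to handle all $m$ edges in one global charging argument, which is exactly what makes the multi-edge interaction bite.
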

\begin{proof}[Proof idea.] Via induction over $m$.  For $m = 1$, this follows from \cref{lem:shortestpaths}. For $m > 1$ this is immediate from the induction hypothesis. 
 \end{proof}

We can now prove that distances in undirected and in acyclic graphs can be maintained in \DynFOLL under changes of polylogarithmic size.
 
\begin{proof}[Proof of \cref{thm:distances}]
We construct a $\DynFOLL$ program that maintains the auxiliary relation $\dist(u,v,d)$ with the meaning ``the shortest path from $u$ to $v$ has length $d$''.  

Suppose $m \df (\log n)^c$ edges $ \Delta E$ are changed in $G = (V, E)$ yielding the graph $G' = (V, E')$. W.l.o.g. all edges in $\Delta E$ are either inserted or deleted. In both cases, the program executes a first-order initialization, yielding auxiliary relations $\dist^0(u,v,d)$, and afterwards executes a first-order procedure for $k \df c \log \log n$ rounds, yielding auxiliary  relations $\dist^1, \ldots, \dist^k$. The superscripts on the relations are for convenience, they are all subsequently stored in $\dist$.

For insertions, we use the standard inductive definition of reachability and distances. Set $\dist^0(u,v,d) \df \dist(u,v,d)$, where $\dist(u,v,d)$ is the distance information of the unchanged graph $G$. Then, for $k$ rounds, the distance information is combined with the new edges $\Delta E$, doubling the amount of used edges from $\Delta E$ in each round. Thus $\dist^{\ell}(u,v,d)$ is computed from $\dist^{\ell-1}$ by including $\dist^{\ell-1}$ and all tuples which satisfy the formula:
\[\varphi_\ins \df \exists z_1\exists z_2\exists d_1 \exists d_2 ( 
\Delta E(z_1,z_2) \land d_1+d_2+1 = d \land \dist^{\ell-1}(u,z_1,d_1) \land \dist^{\ell-1}(z_2,v,d_2)
)\]
 
For deletions, the program starts from shortest paths $u, \ldots, v$ in $G$ such that no shortest path from $u$ to $v$ uses edges from $\Delta E$ and then combines them for $k$ rounds, which yields the correct distance information for $G'$ according to \cref{cor:shortestpaths}. Thus, the first-order initialization yields $\dist^0(u,v,d)$ via
\[\dist(u,v,d) \land \lnot \exists z_1\exists z_2\exists d_1\exists d_2 
( d=d_1+d_2+1 \land \Delta E(z_1,z_2) \land \dist(u,z_1,d_1) \land \dist(z_2,v,d_2) )\]

Then $\dist^{\ell}(u,v,d)$ is computed from $\dist^{\ell-1}(u,v,d)$ via a formula
similar to $\varphi_\ins$, using $E$ instead of $\Delta E$. 
\end{proof}

\subsection{Context-free language membership}

Membership problems for formal languages have been studied in dynamic complexity starting with the work of Gelade, Marquardt, and Schwentick \cite{GeladeMS12}. It is known that context-free languages can be maintained in \DynFO under single symbol changes \cite{GeladeMS12} and that regular languages can even be maintained under polylog changes \cite{Vortmeier19, TschirbsVZ23}.

It is an open problem whether membership in a context-free language can be maintained in \DynFO for changes of non-constant size. We show that this problem is in $\DynFOLL$ under changes of polylogarithmic size.

\begin{theorem}
	\label{thm:cfl_dynfoll}
	 Every context-free language can be maintained in \DynFOLL under changes of size $\polylog n$.
\end{theorem}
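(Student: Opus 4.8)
The plan is to use the hierarchical technique, combining derivation information along parse trees. Fix a grammar $G$ for $L$ in Chomsky normal form, with start symbol $S$ and a constant set $N$ of nonterminals, and view the word $w$ of length $n$ as encoded by unary relations $R_a$. As auxiliary data I would maintain, for the current word, the derivability relation $\RelName{Der}_A(i,j)$ meaning ``$A \Rightarrow^* w[i..j]$'' (for each $A \in N$), together with a one-hole \emph{context} relation $\RelName{Ctx}_{A,B}(i,k,l,j)$ meaning ``$A \Rightarrow^* w[i..k{-}1]\,B\,w[l{+}1..j]$'', i.e.\ there is a derivation from $A$ with a single distinguished occurrence of $B$ whose subtree would cover $[k,l]$. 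These satisfy $\RelName{Der}_A(i,j) \equiv \exists B\,k\,l\,(\RelName{Ctx}_{A,B}(i,k,l,j)\wedge \RelName{Der}_B(k,l))$ and the transitivity law $\RelName{Ctx}_{A,B}(i,k,l,j)\wedge\RelName{Ctx}_{B,C}(k,k',l',l)\Rightarrow\RelName{Ctx}_{A,C}(i,k',l',j)$, and membership of $w$ is read off as $\RelName{Der}_S(1,n)$. Both relations are \FOar-definable on the fixed initial word, and the context relation is exactly the ingredient that already drives the single-change maintenance of \cite{GeladeMS12}.

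After changing the symbols at a set $\Delta$ of $m \df \polylog n$ positions, yielding $w'$, every interval or frame that avoids $\Delta$ keeps its old (still correct) value, so only entries touching $\Delta$ must be recomputed. I would do this in $k \df \bigO(\log\log n)$ rounds indexed by a \emph{level}: after round $\ell$ the relations $\RelName{Der}^{\ell}$ and $\RelName{Ctx}^{\ell}$ are correct for $w'$ on all intervals of at most $2^{\ell}$ changed positions, respectively all contexts with at most $2^{\ell}$ changed positions \emph{on each side} of the hole. Since $m = 2^{\bigO(\log\log n)}$, level $k$ covers every interval (in particular $[1,n]$), and if each round is a constant number of \FOar-steps the whole update lies in $\AC[\log\log n] = \FOLL$, i.e.\ in \DynFOLL. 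In the base round, an interval with a single changed position $p$ satisfies $\RelName{Der}^{0}_A(i,j)\equiv\exists B\,(\RelName{Ctx}_{A,B}(i,p,p,j)\wedge B\to w'[p])$ using the \emph{stored} context relation over the unchanged surroundings, and contexts with at most one changed position per side are handled analogously from the stored relations.

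The inductive step for $\RelName{Der}$ rests on the observation that in any parse tree of $w'[i..j]$ there is a deepest node whose yield straddles a chosen split boundary $s$, and that such a node must split its production exactly at $s$. Choosing $s$ to halve the changed positions of $[i,j]$, this gives a nonterminal $B$ with production $B\to CD$ and the rule
$$\RelName{Der}^{\ell}_A(i,j)\ \leftarrow\ \exists B\,C\,D\,k\,l\,s\ \bigl(\RelName{Ctx}^{\ell-1}_{A,B}(i,k,l,j)\wedge (B\to CD)\wedge \RelName{Der}^{\ell-1}_C(k,s)\wedge \RelName{Der}^{\ell-1}_D(s{+}1,l)\bigr),$$
where both $\RelName{Der}$-factors now see at most $2^{\ell-1}$ changed positions and the context, being bounded by $[i,s]$ on the left and $[s{+}1,j]$ on the right, has at most $2^{\ell-1}$ on each side; hence everything on the right-hand side is at level $\ell{-}1$. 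For $\RelName{Ctx}^{\ell}$ I would walk down the spine from the root to the hole: balancing the left frame at a boundary $s_L$ and the right frame at $s_R$ determines two spine cut nodes, and extracting the two ``large'' pure-derivation siblings covering $s_L$ and $s_R$ decomposes the context into three sub-contexts plus two $\RelName{Der}$-pieces, each sub-context now having at most $2^{\ell-1}$ changed positions on \emph{both} sides. Computing $\RelName{Der}^{\ell}$ first (from $\RelName{Ctx}^{\ell-1}$ and $\RelName{Der}^{\ell-1}$) and then $\RelName{Ctx}^{\ell}$ (from $\RelName{Ctx}^{\ell-1}$ and the $\RelName{Der}^{\le\ell}$ just obtained) makes each round two \FOar-steps, for total depth $\bigO(\log\log n)$.

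The main obstacle is precisely this double-cut decomposition of a context. Because the context is genuinely two-sided, halving only one side per round would force $\bigO((\log\log n)^2)$ sequential steps and leave \FOLL; the point is to halve both sides \emph{in a single} \FOar-step, which requires the case analysis on whether the left cut node $v_{p}$ lies above or below the right cut node $v_{q}$ and a verification that, after the two balancing siblings are pulled out, each of the three resulting sub-contexts drops to level $\ell{-}1$ on both sides simultaneously. A secondary point to check is that the balancing conditions (``interval $[i,s]$ contains at most $2^{\ell-1}$ of the changed positions'') are \FOar-expressible, which holds because they only count within the $\polylog n$-size set $\Delta$, well within the small-number arithmetic available to \FOar. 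Once these are established, storing the round-$k$ relations (overwriting the touched entries and keeping the rest) restores $\RelName{Der}$ and $\RelName{Ctx}$ for $w'$, and $\RelName{Der}_S(1,n)$ is maintained by iterating a fixed \FOar-formula $\bigO(\log\log n)$ times, as required.
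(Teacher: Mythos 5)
Your proposal is correct, and while it lives inside the same hierarchical framework as the paper's proof (your $\RelName{Ctx}$ is exactly the paper's gapped-interval relation $R_{X\to Y}$, your $\RelName{Der}$ is \FO-definable from it, and both arguments run $\bigO(\log\log n)$ combination rounds with \FOar-counting over the polylog-size change set), the key decomposition step is genuinely different. The paper's splitting is \emph{tree-driven}: its central tool is the tree-splitting lemma (\cref{la:treesplitting}), which finds a parse-tree node whose subtree and co-subtree each carry at most $\frac{2}{3}$ of the changed positions, followed by a case analysis on whether that node lies on the spine (a vertical two-context split) or hangs off it (three contexts plus one derivation piece), with an invariant that counts changes in \emph{both} frames together and grows by a factor $\frac{3}{2}$ per round. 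Your splitting is \emph{position-driven}: boundaries $s_L,s_R$ halving the changes of each frame separately, cut nodes located via the elementary deepest-straddling-node observation, and a per-side invariant that doubles each round. The linchpin of your version---which you correctly identify---is the two-phase round: the hanging siblings covering $s_L$ and $s_R$ may contain up to all $2^{\ell}$ changes of a frame (e.g.\ when an entire frame is the yield of one hanging subtree), so they are \emph{not} available at level $\ell-1$; reading them from the freshly computed $\RelName{Der}^{\ell}$ is what saves the round, and the remaining verification you flag does go through: in both orderings of the two cut nodes, each of the three sub-contexts has its left frame inside $[i,s_L]$ or $[s_L{+}1,k{-}1]$ and its right frame inside $[l{+}1,s_R]$ or $[s_R{+}1,j]$, hence at most $2^{\ell-1}$ changes per side. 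The paper avoids large derivation pieces altogether by cutting \emph{inside} the hanging subtree at the balanced node, at the price of needing \cref{la:treesplitting}. Two details you should still make explicit: your base level needs contexts with one changed position on \emph{each} side, i.e.\ two changes in total, which is slightly more than a single application of the one-change update of \cite{GeladeMS12}; composing that update twice, for the two positions determined by the tuple's frames, stays first-order and suffices. And the degenerate cases (only one frame exceeds $2^{\ell-1}$ changes, or a straddling node falls on the spine so the cut is a clean vertical split) simplify rather than break, but deserve a sentence. On balance, your route trades the paper's reusable combinatorial lemma (which the paper also exploits for tree isomorphism) for a more elementary straddling argument, a cleaner doubling invariant, and in exchange accepts per-side bookkeeping, the two-phase round, and a heavier base case.
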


Suppose $G=(V,\Sigma,S,\Gamma)$ is a grammar in Chomsky normal form with $L \df  L(G)$. 
For single changes, 4-ary auxiliary relations $R_{X\to Y}$ are used for all $X,Y\in V$ \cite{GeladeMS12},  with the intention that $(i_1,j_1,j_2,i_2)\in R_{X\to Y}$ iff $X\Rightarrow^* w[i_1,j_1) Y w(j_2,i_2]$, where $w \df w_1 \ldots w_n$ is the current string. Let us call $I = (i_1,j_1,j_2,i_2)$ a \emph{gapped interval}. For a gapped interval $I$ and a set $P$ of changed positions, denote by $\#(I, P)$ the number of changed positions $p \in P$ with $p \in [i_1,j_1) \cup (j_2,i_2]$. 

The idea for the $\DynFOLL$ program for handling polylog changes is simple and builds on top of the program for single changes. It uses the same auxiliary relations and, after changing a set $P$ of positions, it collects gapped intervals $I$ into the relations $R_{X\to Y}$ for increasing $\#(I, P)$ in at most $\bigO(\log \log n)$ rounds. Initially, gapped intervals with $\#(I, P) \leq 1$ are collected using the first-order update formulas for single changes. Afterwards, in each round, gapped intervals $I$ with larger $\#(I, P)$ are identified by splitting $I$ into two gapped intervals $I_1$ and $I_2$ with  $\#(I, P) =  \#(I_1, P) +  \#(I_2, P)$ such that $I$ can be constructed from $I_1$ and $I_2$ with a first-order formula.

To ensure that $\bigO(\log \log n)$ many rounds suffice, we need that the intervals $I_1$ and $I_2$ can always be chosen such that $\#(I_1, P)$ and $\#(I_2, P)$ are of similar size. This will be achieved via the following simple lemma, which will be applied to parse trees. For a binary tree $T = (V, E)$ with red coloured nodes $R \subseteq V$, denote by $\#(T, R)$ the number of red nodes of $T$.
For a tree $T$ and a node $v$, let $T_v$ be the subtree of $T$ rooted at $v$.

\begin{lemma}
	\label{la:treesplitting}
	For all rooted binary trees $T = (V, E, r)$ with red coloured nodes $R \subseteq V$, there is a node $v\in V$ such that:
	\begin{itemize}
		\item $\#(T_v, R) \le \frac{2}{3}\cdot\#(T, R)$ and %
		\item $\#(T \setminus T_v, R) \le \frac{2}{3}\cdot\#(T, R)$ %
	\end{itemize}
\end{lemma}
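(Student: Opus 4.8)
The plan is to rewrite the two inequalities as a single two-sided bound and then locate a node whose subtree carries a balanced share of the red weight. Writing $N \df \#(T,R)$ and $f(v) \df \#(T_v,R)$ for the number of red nodes in the subtree rooted at $v$, the two conditions are equivalent to the window condition $\frac13 N \le f(v) \le \frac23 N$: indeed $\#(T\setminus T_v, R) = N - f(v)$, since $T_v$ and $T\setminus T_v$ partition the red nodes, so the second inequality of the lemma is exactly $f(v) \ge \frac13 N$. Hence it suffices to exhibit one node $v$ with $f(v)$ in the window $[\frac13 N, \frac23 N]$. (If $N = 0$ any node works, so I would assume $N \ge 1$.)

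First I would record the behaviour of $f$ along the tree. Along every root-to-leaf path $f$ is non-increasing, with $f(r) = N$ at the root and $f(\ell)\le 1$ at each leaf. The key quantitative fact is that $T$ is binary: for an internal node $v$ with children $c_1,c_2$ one has $f(v) = [v\in R] + f(c_1) + f(c_2)$, so the heavier child retains at least half of $f(v)$, up to the self-contribution $[v\in R]$ of $v$. I would then take $v$ to be a node of \emph{maximum depth} among all nodes with $f(v) \ge \frac13 N$; such a node exists because the root qualifies. The lower bound $f(v)\ge\frac13 N$ is then immediate and already yields $\#(T\setminus T_v,R) \le \frac23 N$. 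For the upper bound, every child $c$ of $v$ satisfies $f(c) < \frac13 N$ (otherwise $c$ would be a deeper qualifying node), and since $v$ has at most two children this gives $f(v) \le [v\in R] + 2\cdot\frac13 N < [v\in R] + \frac23 N$.

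The one real obstacle is the self-contribution term $[v\in R]$: when the chosen split node $v$ happens to be red, the bound above only delivers $f(v) < \frac23 N + 1$, one short of the claim, and a small computation on tiny, heavily coloured trees shows this slack genuinely matters in the worst case. I would close the gap by exploiting how the lemma is actually used: $T$ is a parse tree of a grammar in Chomsky normal form and the red nodes are the changed string positions, hence \emph{leaves} of $T$. An internal split node $v$ is therefore never red, so $[v\in R] = 0$ and the clean bound $f(v) < \frac23 N$ follows; the only remaining possibility is that $v$ is itself a leaf, which forces $\frac13 N \le f(v) \le 1$ and hence $N \le 3$, a handful of cases I would verify directly. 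The factor-two branching of a binary tree is precisely what makes the constant $\frac23$ (rather than something larger) attainable, since each step down the tree can shed at most half of the current red weight, so some node must land inside the $[\frac13 N, \frac23 N]$ window.
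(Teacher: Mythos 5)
Your core argument is essentially the paper's: its one-line proof is the greedy walk from the root towards the child with more red nodes, stopped once the window condition $\frac{1}{3}\#(T,R)\le\#(T_v,R)\le\frac{2}{3}\#(T,R)$ holds, and your ``deepest node with $\#(T_v,R)\ge\frac{1}{3}\#(T,R)$'' is the same descent analysed with the same halving estimate for binary branching. Where you go further is the self-contribution term $[v\in R]$, and your suspicion is correct: the lemma as stated is false once internal nodes may be red, so no blind proof of it can succeed. Concretely, let the root $r$ be uncoloured with children $v$ and $w$, where $w$ is a red leaf and $v$ is a red internal node with two red leaf children. Then $\#(T,R)=4$, the subtree counts attained over all five nodes are $4,3,1,1,1$, and no node lies in the required window $[\frac{4}{3},\frac{8}{3}]$. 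So the paper's proof sketch glosses over a genuine defect of the statement, and your repair---red nodes are changed positions in a parse tree, hence leaves---is the right reading for the application in \cref{thm:cfl_dynfoll}.

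However, your patch still has a hole, sitting exactly in the cases you defer (``$N\le3$, a handful of cases I would verify directly''): for $N=1$ the statement is false even when the unique red node is a leaf, since $\#(T_v,R)$ and $\#(T\setminus T_v,R)$ are non-negative integers summing to $1$, so one of them always exceeds $\frac{2}{3}$. The lemma thus also needs the hypothesis $N\neq 1$. This is harmless where the lemma is used: the initialization already puts every gapped interval with at most one changed position into $R^0_{X\to Y}$, so the splitting rounds invoke the lemma only for $N\ge 2$; and indeed the paper's own variant, \cref{lemma:splitting}, carries precisely this kind of precondition. With both hypotheses added (red nodes are leaves, $N\ge 2$), your deepest-qualifying-node argument closes completely: an internal split node $v$ has both children below $\frac{1}{3}N$, giving $\#(T_v,R)<\frac{2}{3}N$, and a leaf split node has $\#(T_v,R)=1\le\frac{2}{3}N$.
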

\begin{proof}[Proof idea.] Walk down the tree starting from its root by always choosing the child whose subtree contains more red coloured nodes. Stop as soon as the conditions are satisfied. 
\end{proof}

We now provide the detailed proof of Theorem \ref{thm:cfl_dynfoll}.

\begin{proof}[Proof (of Theorem \ref{thm:cfl_dynfoll})]
We construct a $\DynFOLL$ program that maintains the auxiliary relations $R_{X\to Y}$ for all $X,Y\in V$. Suppose $m \df (\log n)^c$ positions $P$ are changed. The program executes a first-order initialization, yielding auxiliary relations $R^0_{X \to Y}$, and afterwards executes a first-order procedure for $k \df d \log \log n$ rounds, for $d \in \N$ chosen such that $(\frac{3}{2})^k > m$, yielding auxiliary  relations $R^1_{X \to Y}, \ldots, R^k_{X \to Y}$. The superscripts on the relations are for convenience, they are all subsequently stored in $R_{X\to Y}$.

For initialization, the $\DynFOLL$ program includes gapped intervals $(i_1, j_1, j_2, i_2)$ into the  relations $R^0_{X\to Y}$ for which
\begin{itemize}
 \item no position in $[i_1, j_1) \cup (j_2, i_2]$ has changed and $(i_1, j_1, j_2, i_2)$ was previously in $R_{X\to Y}$, or
 \item exactly one position in $[i_1, j_1) \cup (j_2, i_2]$ has changed and the dynamic program for single changes from \cite{GeladeMS12} includes the tuple $(i_1, j_1, j_2, i_2)$ into $R_{X\to Y}$.
\end{itemize}

Afterwards, for $k$ rounds, the $\DynFOLL$ program applies the following first-order definable procedure to its auxiliary relations. A gapped interval $I = (i_1, j_1, j_2, i_2)$ is included into $R^\ell_{X\to Y}$ in round $\ell$ if it was included in $R^{\ell-1}_{X\to Y}$ or one of the following conditions hold  (see \cref{fig:cfl} for an illustration):
\begin{itemize}
	\item[(a)] There are gapped intervals $I_1 = (i_1, u_1, u_2, j_2)$ and $I_2 = (u_1, j_1, j_2, u_2)$  and a non-terminal $Z \in V$ such that $I_1 \in R^{\ell-1}_{X \to Z}$ and $I_2 \in R^{\ell-1}_{Z \to Y}$. 
 	This can be phrased as first-order formula as follows:
 	\begin{multline*}
 		\varphi_a \df \exists u_1,u_2 \Bigg[ \left(i_1\le u_1\le j_1\le j_2\le u_2\le i_2\right) \land \\
 		\bigvee_{Z\in V} \Big(R_{X\to Z}(i_1,u_1,u_2,i_2) \land R_{Z\to Y}(u_1,j_1,j_2,u_2)\Big)\Bigg]
 	\end{multline*}

	\begin{figure}[t]
		\centering

		\scalebox{0.7}{
		  \begin{tikzpicture}[
      xscale=0.7,
      yscale=0.7,
      ]
      \node (tmp) at (-3,6){(a)};
            
      \node (X) at (-0,6){$X$};
      \node (Zstar) at (-0,4){$Z$};
      \node (Y) at (-0,2){$Y$};

      \node (i1) at (-3,0)[label={below:$i_1$}]{};
      \node (u1) at (-2,0)[label={below:$u_1$}]{};
      \node (j1) at (-1,0)[label={below:$j_1$}]{};
      \node (j2) at (1,0)[label={below:$j_2$}]{};
      \node (u2) at (2,0)[label={below:$u_2$}]{};
      \node (i2) at (3,0)[label={below:$i_2$}]{};

      \draw [uEdge, shorten >=0pt] (X) to (i1.center);
      \draw [uEdge, shorten >=0pt] (X) to (i2.center);
      \draw [uEdge, shorten >=0pt] (Zstar) to (u1.center);
      \draw [uEdge, shorten >=0pt] (Zstar) to (u2.center);
      \draw [uEdge, shorten >=0pt] (Y) to (j1.center);
      \draw [uEdge, shorten >=0pt] (Y) to (j2.center);

      \draw [snakeEdgea] (X) to (Zstar);
      \draw [snakeEdgea] (Zstar) to (Y);
      
      \draw [uEdge, shorten >=0pt, shorten <=0pt] (i1.center) to (u1.center);
      \draw [uEdge, shorten >=0pt, shorten <=0pt] (u1.center) to (j1.center);
      \draw [uEdge, shorten >=0pt, shorten <=0pt] (j2.center) to (u2.center);
      \draw [uEdge, shorten >=0pt, shorten <=0pt] (u2.center) to (i2.center);
    \end{tikzpicture}
    }
		\scalebox{0.7}{
		  \begin{tikzpicture}[
      xscale=0.7,
      yscale=0.7,
      ]
            
      \node (tmp) at (-3,6){(b)};

      \node (X) at (0,6){$X$};
      \node (Z) at (0,4.5){$Z$};
      \node (Z1) at (-1,3){$Z_1$};
      \node (Z2) at (1,3){$Z_2$};
      \node (Zstar) at (-1.3,1.5){$Z'$};
      \node (Y) at (1.3,1.5){$Y$};

      \node (i1) at (-4,0)[label={below:$i_1$}]{};
      \node (v1) at (-3,0)[label={below:$v_1$}]{};
      \node (u1) at (-2,0)[label={below:$u_1$}]{};
      \node (u2) at (-1,0)[label={below:$u_2$}]{};
      \node (v2) at (0,0)[label={below:$v_2$}]{};
      \node (j1) at (1,0)[label={below:$j_1$}]{};
      \node (j2) at (2,0)[label={below:$j_2$}]{};
      \node (v3) at (3,0)[label={below:$v_3$}]{};
      \node (i2) at (4,0)[label={below:$i_2$}]{};

      \draw [snakeEdgea] (X) to (Z);
      \draw [snakeEdgea] (Z1) to (Zstar);
      \draw [snakeEdgea] (Z2) to (Y);
      \draw [dEdge, shorten >=0pt] (Z) to (Z1);
      \draw [dEdge, shorten >=0pt] (Z) to (Z2);
      
      \draw [uEdge, shorten >=0pt] (X) to (i1.center);
      \draw [uEdge, shorten >=0pt] (X) to (i2.center);
      \draw [uEdge, shorten >=0pt] (Z1) to (v1.center);
      \draw [uEdge, shorten >=0pt] (Z1) to (v2.center);
      \draw [uEdge, shorten >=0pt] (Z2) to (v2.center);
      \draw [uEdge, shorten >=0pt] (Z2) to (v3.center);

      \draw [uEdge, shorten >=0pt] (Zstar) to (u1.center);
      \draw [uEdge, shorten >=0pt] (Zstar) to (u2.center);

			\draw [uEdge, shorten >=0pt] (Y) to (j1.center);
      \draw [uEdge, shorten >=0pt] (Y) to (j2.center);
     
      \draw [uEdge, shorten >=0pt, shorten <=0pt] (i1.center) to (v1.center);
      \draw [uEdge, shorten >=0pt, shorten <=0pt] (v1.center) to (u1.center);
      \draw [uEdge, shorten >=0pt, shorten <=0pt] (u1.center) to (u2.center);
      \draw [uEdge, shorten >=0pt, shorten <=0pt] (u2.center) to (v2.center);
      \draw [uEdge, shorten >=0pt, shorten <=0pt] (v2.center) to (j1.center);
      \draw [uEdge, shorten >=0pt, shorten <=0pt] (j2.center) to (v3.center);
      \draw [uEdge, shorten >=0pt, shorten <=0pt] (v3.center) to (i2.center);

    \end{tikzpicture}
    }
		\scalebox{0.7}{
		  \begin{tikzpicture}[
      xscale=0.7,
      yscale=0.7,
      ]
            
                  \node (tmp) at (-3,6){(c)};

      \node (X) at (0,6){$X$};
      \node (Z) at (0,4.5){$Z$};
      \node (Z1) at (-1,3){$Z_1$};
      \node (Z2) at (1,3){$Z_2$};
      \node (Zstar) at (1.3,1.5){$Z'$};
      \node (Y) at (-1.3,1.5){$Y$};

      \node (i1) at (-4,0)[label={below:$i_1$}]{};
      \node (v1) at (-3,0)[label={below:$v_1$}]{};
			\node (j1) at (-2,0)[label={below:$j_1$}]{};
      \node (j2) at (-1,0)[label={below:$j_2$}]{};
      \node (v2) at (0,0)[label={below:$v_2$}]{};
			\node (u1) at (1,0)[label={below:$u_1$}]{};
      \node (u2) at (2,0)[label={below:$u_2$}]{};
      \node (v3) at (3,0)[label={below:$v_3$}]{};
      \node (i2) at (4,0)[label={below:$i_2$}]{};

      \draw [snakeEdgea] (X) to (Z);
      \draw [snakeEdgea] (Z1) to (Y);
      \draw [snakeEdgea] (Z2) to (Zstar);
      \draw [dEdge, shorten >=0pt] (Z) to (Z1);
      \draw [dEdge, shorten >=0pt] (Z) to (Z2);
      
      \draw [uEdge, shorten >=0pt] (X) to (i1.center);
      \draw [uEdge, shorten >=0pt] (X) to (i2.center);
      \draw [uEdge, shorten >=0pt] (Z1) to (v1.center);
      \draw [uEdge, shorten >=0pt] (Z1) to (v2.center);
      \draw [uEdge, shorten >=0pt] (Z2) to (v2.center);
      \draw [uEdge, shorten >=0pt] (Z2) to (v3.center);

      \draw [uEdge, shorten >=0pt] (Zstar) to (u1.center);
      \draw [uEdge, shorten >=0pt] (Zstar) to (u2.center);

			\draw [uEdge, shorten >=0pt] (Y) to (j1.center);
      \draw [uEdge, shorten >=0pt] (Y) to (j2.center);
     
      \draw [uEdge, shorten >=0pt, shorten <=0pt] (i1.center) to (v1.center);
      \draw [uEdge, shorten >=0pt, shorten <=0pt] (v1.center) to (j1.center);

      \draw [uEdge, shorten >=0pt, shorten <=0pt] (j2.center) to (v2.center);
      \draw [uEdge, shorten >=0pt, shorten <=0pt] (v2.center) to (u1.center);
			\draw [uEdge, shorten >=0pt, shorten <=0pt] (u1.center) to (u2.center);
      \draw [uEdge, shorten >=0pt, shorten <=0pt] (u2.center) to (v3.center);
      \draw [uEdge, shorten >=0pt, shorten <=0pt] (v3.center) to (i2.center);

    \end{tikzpicture}
    }
		\caption{Illustration of when a gapped interval $(i_1, j_1, j_2, i_2)$ is added to $R_{X \to Y}$ in the proof of Theorem \ref{thm:cfl_dynfoll}.}
		\label{fig:cfl}
	\end{figure}
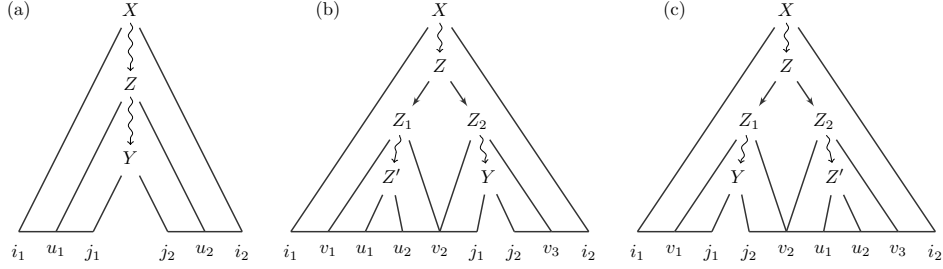

		\item[(b)] There are gapped intervals $I_1 = (i_1, v_1, v_3, i_2)$, $I_2 = (v_1, u_1, u_2, v_2)$, and $I_3 = (v_2, j_1, j_2, v_3)$ and non-terminals $Z, Z_1, Z_2, Z' \in V$ such that $Z \to Z_1Z_2 \in \Gamma$ and $I_1 \in R^{\ell-1}_{X \to Z}$, $I_2 \in R^{\ell-1}_{Z_1 \to Z'}$, $I_3 \in R^{\ell-1}_{Z_2 \to Y}$ and $w[u_1, u_2]$ can be derived from $Z'$.
	This can be phrased as first-order formula as follows:
	\begin{multline*}
		\varphi_b \df \exists u_1,u_2,v_1,v_2,v_3 \Bigg[\left( i_1\le v_1 \le u_1 \le u_2 \le v_2 \le j_1 \le j_2 \le v_3 \le i_2\right) \land  \\
		\bigvee_{\substack{Z,Z_1,Z_2,Z'\in V\\Z\to Z_1Z_2\in \Gamma}} \Big( R_{X\to Z}(i_1,v_1,v_3,i_2) \land R_{Z_1\to Z'}(v_1,u_1,u_2,v_2) \\ \land R_{Z'}(u_1,u_2) \land R_{Z_2\to Y}(v_2,j_1,j_2,v_3)  \Big)\Bigg]
	\end{multline*}
	Here, $R_{Z'}(u_1,u_2)$ is an abbreviation for the formula stating that $w[u_1, u_2]$ can be derived from $Z'$, i.e.
	$R_{Z'}(u_1,u_2) \df \exists v \bigvee_{W \to \sigma \in \Gamma} \left(R_{Z'\to W}(u_1, v, v, u_2) \land  \sigma(v)\right)$.
	
	\item[(c)] Symmetrical to (b), with gapped intervals $I_1 = (i_1, v_1, v_3, i_2)$, $I_2 = (v_1, j_1, j_2, v_2)$, and $I_3 = (v_2, u_1, u_2, v_3)$ and non-terminals $Z, Z_1, Z_2,Z' \in V$ such that $Z \to Z_1Z_2 \in \Gamma$ and $I_1 \in R^{\ell-1}_{X \to Z}$, $I_2 \in R^{\ell-1}_{Z_1 \to Y}$, $I_3 \in R^{\ell-1}_{Z_2 \to Z'}$ and $w[u_1, u_2]$ can be derived from $Z'$.

\end{itemize}

Note that $\#(I, P) = \#(I_1, P) + \#(I_2, P)$  in case (a) and $\#(I, P) = \#(I_1, P) + \#(I_2, P) + \#(I_3, P)$ in cases (b) and (c). Using \cref{la:treesplitting}, the intervals $I_j$ can be chosen such that $\#(I_j, P) \leq \frac{2}{3}\cdot\#(I, P)$ and thus $k$ rounds suffice. 

\end{proof}

It is known that for single tuple changes one can maintain for edge-labeled, acyclic graphs whether there is a path between two nodes with a label sequence from a fixed context-free language~\cite{MunozVZ16}. The techniques we have seen can be used to also lift this result to changes of polylogarithmic size.

\begin{proposition}
	Context-free path queries can be maintained under changes of polylogarithmic size in \DynFOLL on acyclic graphs.
\end{proposition}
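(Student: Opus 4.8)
The plan is to transfer the hierarchical argument for context-free language membership (\cref{thm:cfl_dynfoll}) from strings to paths in a directed acyclic graph, using the single-change program of \cite{MunozVZ16} only to seed the base case. Fix a Chomsky normal form grammar $G = (V, \Sigma, S, \Gamma)$ for the query language. In place of the gapped intervals over string positions we maintain, for all $X, Y \in V$, a \emph{gapped path} relation $R_{X \to Y}$ whose four arguments are now nodes: $(u_1, a, b, u_2) \in R_{X \to Y}$ shall mean that there is a path from $u_1$ to $a$ and a path from $b$ to $u_2$ whose label words $\alpha, \beta$ satisfy $X \Rightarrow^* \alpha\, Y\, \beta$, so that filling the gap by any $Y$-labelled path from $a$ to $b$ yields a full $X$-derivation of a $u_1$-$u_2$ path. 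The query answer is then recovered by collapsing the gap onto a single edge, exactly as $R_{Z'}(u_1, u_2)$ is defined in the proof of \cref{thm:cfl_dynfoll}.

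After a batch $\Delta E$ of $m \df (\log n)^c$ edge changes, the program first computes, by a first-order initialization, all gapped-path descriptors that are witnessed by a derivation using at most one edge of $\Delta E$. As in the proofs of \cref{thm:distances,thm:cfl_dynfoll}, insertions and deletions are handled separately: descriptors witnessed without any changed edge are read off from the stored relations for the graph before the batch (these remain valid, in the deletion case because no witnessing derivation uses a deleted edge), while descriptors using exactly one changed edge are supplied by the single-change updates of \cite{MunozVZ16}. It then runs $k \df d \log \log n$ first-order rounds, adding in round $\ell$ every descriptor that can be assembled from descriptors of $R^{\ell-1}$ by the graph analogues of cases (a)--(c) of \cref{thm:cfl_dynfoll}. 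Case (a) nests the gap of one descriptor into the gap of another along a single derivation; cases (b) and (c) resolve a binary production $Z \to Z_1 Z_2$ whose two children are described by two further gapped descriptors, one of them combined with a gap-free $R_{Z'}(\cdot,\cdot)$ sub-formula. The crucial simplification over the string case is that the inequality constraints on positions are replaced by endpoint matching: descriptors are glued at shared nodes, and acyclicity guarantees that two path fragments sharing an endpoint cannot share any further node, so their concatenation is again a path and the changed edges of the assembled descriptor are exactly the disjoint union of those of its parts.

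For the round bound, fix a true descriptor of the new graph and a witnessing derivation; its parse tree is a binary tree whose leaves are the edges of the witnessing path, and we colour a leaf red if its edge lies in $\Delta E$. Since $|\Delta E| = m$, at most $m$ leaves are red. \cref{la:treesplitting} yields a node $v$, labelled by some $Z'$, whose subtree and complement each contain at most $\frac{2}{3}$ of the red leaves; the subtree is a gap-free $Z'$-derivation and the complement is a gapped descriptor of $R_{X \to Z'}$, and these are precisely the pieces glued by cases (a)--(c). By induction on the number of red leaves, every descriptor witnessed by a derivation with at most $(\frac{3}{2})^\ell$ changed edges is produced after $\ell$ rounds, so a choice of $d$ with $(\frac{3}{2})^k > m$ makes $k = d \log\log n$ rounds sufficient, placing the update in \DynFOLL.

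The step I expect to be most delicate is the correct set-up of the gapped-path relations together with the verification that acyclicity makes the compositions in cases (a)--(c) both sound and first-order: one has to check that gluing fragments at a shared node never silently reuses a changed edge and never produces a walk that is not a path, which is where the directed acyclic structure is essential, in the same spirit as its use in \cref{lem:shortestpaths}. Once this bookkeeping is in place, the correctness argument and the round analysis carry over essentially verbatim from the proof of \cref{thm:cfl_dynfoll}.
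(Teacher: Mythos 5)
Your overall architecture --- gapped-path relations glued at shared nodes, seeded by the single-change program of \cite{MunozVZ16}, with $\bigO(\log\log n)$ balanced rounds via \cref{la:treesplitting} --- is the right lifting, and for \emph{insertions} it works: the base case (witnesses using at most one new edge) is obtained by running the single-insertion update once per inserted edge, your red-leaf induction applies, and in a DAG gluing path fragments at shared endpoints is indeed sound. The genuine gap is the \emph{deletion} case. You claim that ``descriptors witnessed without any changed edge are read off from the stored relations''. But the stored relation $R_{X\to Y}$ only records that \emph{some} witness existed in the old graph $G$; it does not tell you whether a witness avoiding $\Delta E$ exists. In the string setting of \cref{thm:cfl_dynfoll} this issue is invisible, since the positions touched by a witness are exactly the interval, so ``no changed position occurs in the witness'' is a trivial first-order test; for path queries the witnessing paths are existentially quantified, and identifying the ``safe'' descriptors is precisely the hard part. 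The distance proof needs two ingredients for this: an explicit first-order safety definition (the $\dist^0$ formula: the old value holds \emph{and} no witness decomposes through a deleted edge) and a structural lemma (\cref{lem:shortestpaths}, \cref{cor:shortestpaths}) showing that safe pieces plus surviving edges suffice to rebuild every witness of the new graph. You supply neither, and both are nontrivial here. First, expressing ``no old witness of a \emph{gapped} descriptor uses a deleted edge'' from the maintained 4-ary relations requires locating a terminal leaf inside a one-holed context, i.e.\ a two-holed relation that you do not maintain (for gap-free descriptors it is expressible via $\exists W\, R_{Z'\to W}(u,z_1,z_2,v)$, but not obviously for gapped ones). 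Second, the acyclicity argument of \cref{lem:shortestpaths} does not carry over to your splits: in a case-(a) decomposition the outer piece can have an alternative old witness using the deleted edge $(c,d)$ on its left path (so $u_1 \to c$ and $d \to a'$) while the inner piece has one using $(c,d)$ on its right path (so $b \to c$ and $d \to b'$); no cycle is forced, because nothing connects the two sides of the gap, so neither piece need be safe.

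Relatedly, your round-bound induction is phrased as ``every descriptor witnessed by a derivation with at most $(\frac{3}{2})^\ell$ changed edges is produced after $\ell$ rounds''. Under deletions this is vacuous: every witness in the new graph uses \emph{zero} deleted edges, so the statement demands that $R^0$ already be the complete answer, and the $\log\log n$ rounds do no work. The deletion analysis must be driven by a different measure --- as in \cref{cor:shortestpaths}, the number of safe pieces and surviving edges needed to assemble a witness (at most $m$ and $m+1$ in the distance case) --- and the balancing argument must be run over that decomposition. Until the safety definition for gapped descriptors and the corresponding decomposition lemma are in place, your proposal establishes the insertion case only.
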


\subsection{Tree isomorphism}

\newcommand{\subtree}[2]{\ensuremath{\text{subtree}_{#1}({#2})}}

The dynamic tree isomorphism problem -- given a forest $F = (V, E)$ and two nodes $x, x^* \in V$, are the subtrees rooted at $x$ and $x^*$ isomorphic? -- has been shown to be maintainable in $\DynFO$ under single edge insertions and deletions by Etessami \cite{Etessami98}.

It is not known whether tree isomorphism can be maintained in \DynFO under changes of
size $\omega(1)$. We show that it can be maintained in \DynFOLL under changes
of size $\polylog n$:

\begin{theorem}
	\label{thm:tiso:dynfoll}
	Tree isomorphism can be maintained in \DynFOLL under insertion and deletion of
	$\polylog n$ edges.
\end{theorem}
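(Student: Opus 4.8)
The plan is to follow the same blueprint as the proof of \cref{thm:cfl_dynfoll}: take the \DynFO program for single edge changes due to Etessami~\cite{Etessami98} as a black box for the base case, and combine its information hierarchically over $k \df d\log\log n$ rounds, where $d$ is chosen so that $(\frac{3}{2})^k > m$ for $m \df (\log n)^c$ changed edges. As usual we treat insertions and deletions separately; I describe insertions and only sketch the symmetric deletion case. The auxiliary data we maintain is a canonical isomorphism code $\kappa\colon V \to \N$ on subtrees, with $\kappa(u)=\kappa(v)$ iff $T_u \cong T_v$, together with the ancestor relation $\anc$ of the current forest (maintainable in \DynFOLL by \cref{thm:reachMain}) and, crucially, codes for \emph{contexts}: a context is a subtree with one distinguished leaf acting as a hole, equivalently a tree with a single special label. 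Contexts can be isomorphism-coded exactly like subtrees; a subtree filled into the hole of a context yields a subtree, and two contexts compose to a context. These two operations are what we combine along the arbitrary tree.

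For an inserted-edge set $\Delta E$, call a node $v$ \emph{affected} if $T_v$ in the new forest contains an edge of $\Delta E$, and let $\#(v)$ count these edges; using $\anc$ and the fact that \FOar can sum $\polylog(n)$ many bits, $\#(v)$ is first-order computable. The invariant after round $\ell$ is: every subtree and every context whose affected count is at most $(\frac{3}{2})^\ell$ has received its correct canonical code. In round $0$ we use Etessami's single-change program to code all subtrees and contexts with at most one affected edge (unaffected objects keep their old code, which is canonical by the program invariant maintained across changes; contexts are handled by treating the hole as a fresh label). In round $\ell$, an object with count at most $(\frac{3}{2})^\ell$ is split into two objects of count at most $\frac{2}{3}(\frac{3}{2})^\ell = (\frac{3}{2})^{\ell-1}$: a subtree $T_v$ splits at a descendant $w$ into the context $(v,w)$ and the subtree $T_w$, and a context splits at an intermediate node into two composing contexts. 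The new code of the object is then obtained by \emph{ranking the pair} of the two round-$(\ell-1)$ codes of its pieces (the rank being an \FOar counting formula), so that no recursion is needed within a round.

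The delicate point---and the main obstacle---is that these codes must stay \emph{canonical}: isomorphic objects must be split into isomorphic pieces, since otherwise equal codes would not capture isomorphism. I therefore use an isomorphism-invariant strengthening of the walk-down argument behind \cref{la:treesplitting}, adapted to arbitrary (non-binary) trees: from the root of the object we always descend to the child whose subtree carries the most affected edges, breaking ties by the round-$(\ell-1)$ code of that child, and stop at the first node $w$ meeting both balance conditions. Since this rule depends only on the isomorphism-invariant data ``number of affected edges'' and ``previous canonical code'', two isomorphic objects are split at corresponding nodes into isomorphic pieces; conversely, equal code pairs glue back to isomorphic objects. By induction on $\ell$ this yields that round-$\ell$ codes faithfully represent isomorphism of objects with count at most $(\frac{3}{2})^\ell$. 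One checks that the greedy descent is first-order definable: the greedy path from $v$ is characterised by ``every strict ancestor--descendant pair on the path picks the maximal child'', which is expressible with $\anc$, and $w$ is then the first path node meeting the balance condition.

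After $k = d\log\log n$ rounds every affected subtree, and in particular $T_x$ and $T_{x^*}$, has a correct canonical code; unaffected subtrees retain theirs in the same global scheme because the recursion bottoms out at round $0$ in the old (canonical) codes. The maintained query relation then answers ``$T_x \cong T_{x^*}$'' by testing $\kappa(x)=\kappa(x^*)$. Each round is a single \FOar formula and there are $\bigO(\log\log n)$ of them, so the whole update is in \DynFOLL. For deletions one proceeds symmetrically: a subtree's code is unchanged unless it contains a deleted edge, the affected nodes again form a forest, and the same context-based hierarchical combination applies---analogously to how \cref{cor:shortestpaths} reduces the deletion case for distances to combining unaffected pieces of the original graph.
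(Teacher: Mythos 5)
Your hierarchical round structure and the use of \cref{la:treesplitting}-style balanced splitting match the paper's strategy, but the decision to maintain a global \emph{canonical code} $\kappa$ instead of a relation of isomorphic \emph{pairs} introduces a gap that the argument does not close. For codes to work, the split of an object must be isomorphism-invariant, and your greedy descent is not: it descends by ``number of affected edges in the child's subtree'', breaking ties by stale round-$(\ell-1)$ codes. Neither quantity is an isomorphism invariant. Affected edges are concrete edges of $\Delta E$; an abstract isomorphism between $T_x$ and $T_{x^*}$ need not map affected edges to affected edges at all --- in the extreme (and typical) case $T_{x^*}$ is isomorphic to $T_x$ but entirely untouched by the change. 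Then the descent in $T_x$ is steered by data that has no counterpart in $T_{x^*}$, the two objects are split at non-corresponding nodes, and the resulting code pairs differ, breaking your invariant. The tie-breaking codes make this worse: on heavy children the round-$(\ell-1)$ codes are exactly the ones not yet guaranteed correct, so ties are broken on garbage that depends on the history of the computation rather than on isomorphism type.

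Even granting corresponding splits, the claim that ``unaffected subtrees retain their codes in the same global scheme'' does not hold: the new code of an affected $T_x$ is defined as a rank of a pair of piece-codes, i.e.\ it lives in a namespace created during this update, while an unaffected $T_{x^*}\cong T_x$ keeps whatever number it was assigned by earlier updates. Nothing forces these two numbers to coincide, so the query test $\kappa(x)=\kappa(x^*)$ fails precisely on the pairs one cares about. This is the non-composability issue the paper explicitly flags (for Etessami's \textsc{t-iso} relation) and circumvents differently: it maintains the \emph{pair} relation \textsc{c-iso} of disjoint isomorphic contexts, existentially guesses split nodes $z$ and $z^*$ in \emph{both} objects (so no canonical choice is ever needed --- the splitting lemma is applied with pebbles counting affected nodes in both objects simultaneously, via the hypothetical isomorphism $\pi$), verifies the balance conditions in \FOar, and compares the unbounded-degree children forests using the maintained counting relation $\#\textsc{iso-siblings}$ (\cref{claim:node-iso}). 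That last ingredient is also missing from your proposal: with unbounded degree, matching up multisets of isomorphic children requires counting multiplicities, which \FOar cannot do from scratch but can maintain incrementally since only polylogarithmically many siblings are affected. To repair your argument you would essentially have to abandon canonical codes and fall back on the pair-relation formulation.
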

Intuitively, we want to use Etessami's dynamic program as the base case for a \DynFOLL-program: (1) compute isomorphism information for pairs of subtrees in which only one change happened, then (2) combine this information in $\log \log n$ many rounds. Denote by $\subtree{x}{r}$ the subtree rooted at $r$, in the tree rooted at $x$, within the forest $F$. The main ingredient in Etessami's program is a $4$-ary auxiliary relation $\textsc{t-iso}$ for storing tuples $(x,r, x^*, r^*)$ such that $\subtree{x}{r}$ and $\subtree{x^*}{r^*}$ are isomorphic and disjoint in $F$. It turns out that this information is not ``composable'' enough for step (2). 

We therefore slightly extend the maintained auxiliary information. A \emph{(rooted) context} $C = (V, E, r, h)$ is a tree $(V, E)$ with root $r \in V$ and one distinguished leaf $h \in V$, called the \emph{hole}. Two contexts $C = (V, E, r, h)$, $C^* = (V^*, E^*, r^*, h^*)$ are isomorphic if there is a root- and hole-preserving isomorphism between them, i.e. an isomorphism that maps $r$ to $r^*$ and $h$ to $h^*$. For a forest $F$ and nodes $x$, $r$, $h$ occurring in this order on some path, the context  $C(x, r, h)$ is defined as the context we obtain by taking $\subtree{x}{r}$, removing all children of $h$, and taking $r$ as root and $h$ as hole. 
Our dynamic program uses 
\begin{itemize}
 \item a $6$-ary auxiliary relation $\textsc{c-iso}$ for storing tuples $(C, C^*) \df (x, r, h, x^*, r^*, h^*)$ such that the contexts $C \df C(x, r, h)$ and by $C^* \df C(x^*, r^*, h^*)$ are disjoint and isomorphic,
 \item a ternary auxiliary relation $\textsc{dist}$ for storing tuples $(x, y, d)$ such that the distance between nodes $x$ and $y$ is $d$, and
  \item a $4$-ary auxiliary relation $\#\textsc{iso-siblings}$ for storing tuples $(x, r, y, m)$ such that $y$ has $m$ isomorphic siblings within $\subtree{x}{r}$.
\end{itemize}
The latter two relations have also been used by Etessami. From distances, a relation $\textsc{path}(x,y,z)$ with the meaning ``$y$ is on the unique path between $x$ and $z$'' is \FO-definable on forests, see \cite{Etessami98}. The relation $\textsc{t-iso}(x,r, x^*, r^*)$ can be $\FO$-defined from $\textsc{c-iso}$.

We will now implement the steps (1) and (2) with these adapted auxiliary relations. Suppose a forest $F \df (V, E)$ is changed into the forest $F' \df (V, E')$ by changing  a set $\Delta E$ of edges. A node $v \in V$ is \emph{affected} by the change, if $v$ is adjacent to some edge in $\Delta E$. The \DynFOLL program iteratively collects isomorphic contexts  $C$ and $C^*$ of $F'$ with more and more affected nodes. Denote by $\#(C, C^*, \Delta E)$ the number of nodes in contexts $C$ and $C^*$, excluding hole nodes, affected by  change $\Delta E$.

The following lemma states that \textsc{c-iso} can be updated for pairs of contexts with at most one affected node each. Its proof is very similar to Etessami's proof and is deferred to the appendix.

\begin{lemma}
	\label{la:tiso_single}
	Given $\textsc{c-iso}$, $\textsc{dist}$, and $\#\textsc{iso-siblings}$ and a set of changes $\Delta E$, the set of pairs $(C,C^\ast)$ of contexts  such that $C,C^\ast$
	are disjoint and isomorphic and such that both $C$ and $C^\ast$ contain at most one node affected by $\Delta E$ is \FO-definable.
\end{lemma}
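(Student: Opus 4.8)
The plan is to express, in first-order logic over the old forest $F$ and the change $\Delta E$, when two contexts $C \df C(x,r,h)$ and $C^* \df C(x^*,r^*,h^*)$ of $F'$ are disjoint and isomorphic, under the promise that each of them contains at most one affected node. The guiding observation is that the auxiliary relations $\textsc{c-iso}$, $\textsc{dist}$, and $\#\textsc{iso-siblings}$ describe $F$, and that a context containing at most one affected node $v$ coincides with the corresponding context of $F$ everywhere except in the incident edges of $v$: every changed edge that touches the context is incident to this single node $v$. As in the proof of \cref{thm:distances}, I would assume without loss of generality that all edges in $\Delta E$ are inserted or all are deleted. From $\textsc{dist}$ the relation $\textsc{path}$, the ancestor order, context membership, and the spine of a context (the path from $r$ to $h$) are \FO-definable, and the affected nodes are \FO-definable from $\Delta E$; since only the incident edges of $v$ differ between $F$ and $F'$, the same structural predicates remain \FO-definable on the relevant part of $F'$.

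First I would dispatch the contexts whose affected-node count is zero. Such a context is structurally identical in $F$ and in $F'$, so $C$ and $C^*$ are disjoint and isomorphic in $F'$ exactly when the old relation $\textsc{c-iso}$ already records it, and both conditions can be read off directly. It then remains to handle a context with exactly one affected node $v$. Here the only edge change is at $v$: deleting the edges of $\Delta E$ incident to $v$ detaches some child-subtrees of $v$ (their detached roots leave the context, which is why they do not count as affected nodes of $C$), and the insertion case is dual. Although the edge change is localised at $v$, the isomorphism type changes all along the (unique) path from the root $r$ down to $v$ — call it the \emph{modification path} — whereas every subtree hanging off this path is unchanged and its isomorphism type is still supplied by the old relations $\textsc{t-iso}$ (which is \FO-definable from $\textsc{c-iso}$) and $\#\textsc{iso-siblings}$.

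The heart of the argument, following Etessami, is to decide isomorphism of the two modified contexts without recursing along the modification paths. To compare $C$ and $C^*$ I would existentially guess the image $v^*$ of $v$, require the spines and the two modification paths to have matching lengths (all checkable with $\textsc{dist}$), align the holes $h, h^*$, and then assert a single universal statement over the levels of the two modification paths: at the bottom, the multisets of surviving (respectively newly attached) children of $v$ and of $v^*$ agree, and at each higher level the node on one path has exactly one ``dirty'' child — the next path node, handled at the level below — while its off-path children form a multiset that must match the corresponding one in $C^*$. Since all off-path subtrees are unchanged, each such multiset agreement is expressible with $\textsc{t-iso}$ and $\#\textsc{iso-siblings}$ over $F$. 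Because every piece of off-path information is static, this level-by-level comparison is a single \FO formula rather than an induction, which is exactly what keeps the whole test first-order; disjointness of $C$ and $C^*$ in $F'$ is finally enforced by an \FO condition on their node sets.

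I expect the main obstacle to be precisely this recursion-avoidance: the naive isomorphism test walks both modified subtrees in lock-step and is inherently inductive, so the work lies in proving that, once every subtree off the single modification path is frozen to its $F$-isomorphism type, the global level-by-level comparison captures isomorphism correctly. A secondary subtlety is that one affected node may be incident to many edges of $\Delta E$, so the base case is not literally Etessami's single-edge update but a mild generalisation in which a whole batch of children is attached to or detached from one node; handling this uniformly is where the sibling-counting relation $\#\textsc{iso-siblings}$ is indispensable.
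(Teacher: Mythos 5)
Your handling of the cases with at most one affected node \emph{in total} is sound and closely mirrors the paper's proof: where the paper compares the unchanged pieces of the two contexts (e.g.\ $C(x,r,v)$ against $C(x^*,r^*,v^*)$) by a single lookup in the old relation $\textsc{c-iso}$, you re-derive their isomorphism by a level-by-level matching of off-path child-subtree multisets. That reformulation is correct, since a root- and hole-preserving isomorphism of contexts must map spine to spine and path to path and hence decomposes level-wise; it just duplicates work that $\textsc{c-iso}$ is maintained to avoid. (A minor imprecision: under the lemma's hypothesis the affected node $v$ cannot have \emph{newly attached} children inside $C$, because such a child is itself affected and would be a second affected node of $C$; only detached children, or a changed edge leaving the context above its root, can occur.)

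The genuine gap is the case where \emph{both} contexts contain one affected node, say $a$ in $C$ and $b^*$ in $C^*$ --- a case the lemma explicitly permits (``both $C$ and $C^\ast$ contain at most one node affected'') and to which the paper devotes a separate case of its proof. Your formula guesses only the image $v^*$ of the affected node of $C$ and then evaluates all off-path multiset comparisons in $C^*$ using the frozen relations over $F$. But when $C^*$ has its own affected node $b^*$, a correct isomorphism will in general map $v$ to an \emph{unaffected} node $v^* \neq b^*$, and then one of the subtrees hanging off the modification path of $v^*$ in $C^*$ contains $b^*$ and has changed; for that subtree the old $\textsc{t-iso}$ and $\#\textsc{iso-siblings}$ data is invalid, so your test can both reject isomorphic pairs and accept non-isomorphic ones. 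The repair is the paper's: existentially guess, in addition, the preimage $b$ of $b^*$ in $C$, and split \emph{both} contexts at all the special nodes --- $a$ and $b$ (respectively $a^*$ and $b^*$), the hole, and their pairwise least common ancestors --- so that every remaining piece is unchanged and can be dispatched by old auxiliary data, while at each splitting node the forests of children subtrees are compared by the counting argument of Claim~\ref{claim:node-iso}. This follows the same pattern as your one-node case but must be carried out for two marked nodes per context simultaneously (with a constant number of configurations of $h$, $a$, $b^*$), and the proposal as written does not cover it.
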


 The dynamic program will update the auxiliary relation \textsc{c-iso} for contexts with at most one affected node per context using \cref{la:tiso_single}.  Isomorphic pairs $(C, C^*)$ of contexts with larger $\#(C, C^*, \Delta E)$ are identified by splitting both $C$ and $C^*$ into smaller contexts.

The splitting is done such that the smaller contexts have fewer than $\frac{2}{3} \cdot \#(C, C^*, \Delta E)$ affected nodes. To this end, we will use the following simple variation of \cref{la:treesplitting}.
For a tree $T=(V,E)$ and a function $p:V\to\{0,1,2\}$ which assigns each a node number of pebbles, let $\#(T,p)$ be the total number of pebbles assigned to nodes in $T$.

\begin{lemma}\label{lemma:splitting}
	Let $T$ be a tree of unbounded degree and $p$ such that either (i) $\#(T,p) > 2$, or (ii) $\#(T,p) = 2$ and $p(v)\le 1$ for all $v\in V$. Then there is a node $v$ such that either:
	\begin{itemize}
		\item[(1)] $\#(T\setminus T_v,p) \le \frac{2}{3} \cdot \#(T,p)$ and $\#(T_v,p) \le \frac{2}{3} \cdot \#(T,p)$, or
		\item[(2)] $\#(T\setminus T_v,p) \leq \frac{1}{3} \cdot \#(T,p)$ and $ \#(T_{u},p) \leq \frac{1}{3} \cdot \#(T,p)$ for any child $u$ of $v$.
	\end{itemize}
\end{lemma}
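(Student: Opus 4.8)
The plan is to reuse the heavy-path descent from the proof of \cref{la:treesplitting}, but to read off \emph{two} possible conclusions depending on how abruptly the pebble mass drops when we move to a child. Write $W \df \#(T,p)$ for the total number of pebbles. Starting at the root $r$, I repeatedly move to the child whose subtree carries the most pebbles, producing a path $r = v_0, v_1, v_2, \ldots$ along which the quantities $\#(T_{v_0},p) = W \ge \#(T_{v_1},p) \ge \#(T_{v_2},p) \ge \cdots$ are non-increasing. First I would check that this descent is meaningful and that it eventually crosses the threshold $\frac{2}{3}W$: by either hypothesis the tree has at least two nodes, so proper leaves exist, and every leaf $\ell$ satisfies $\#(T_\ell,p) = p(\ell) \le 2$. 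Hypotheses (i) and (ii) are exactly what guarantees $p(\ell) \le \frac{2}{3}W$: in case (i) we have $W \ge 3$ and hence $\frac{2}{3}W \ge 2$, while in case (ii) we have $p(\ell) \le 1 \le \frac{4}{3}$. Thus some node on the path has pebble weight at most $\frac{2}{3}W$; let $v_i$ be the \emph{first} such node. Since $W > 0$ in both cases, $\#(T_{v_0},p) = W > \frac{2}{3}W$, so $i \ge 1$ and $v_i$ has a parent $v_{i-1}$ with $\#(T_{v_{i-1}},p) > \frac{2}{3}W$.

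Next I would distinguish two cases according to the weight of $v_i$. If $\#(T_{v_i},p) \ge \frac{1}{3}W$, then the complement satisfies $\#(T \setminus T_{v_i},p) = W - \#(T_{v_i},p) \le \frac{2}{3}W$, and together with $\#(T_{v_i},p) \le \frac{2}{3}W$ this gives conclusion (1) for the node $v \df v_i$. If instead $\#(T_{v_i},p) < \frac{1}{3}W$, I would return the parent $v \df v_{i-1}$: its complement satisfies $\#(T \setminus T_{v_{i-1}},p) = W - \#(T_{v_{i-1}},p) < W - \frac{2}{3}W = \frac{1}{3}W$, and because $v_i$ was chosen as the heaviest child of $v_{i-1}$, \emph{every} child $u$ of $v_{i-1}$ obeys $\#(T_u,p) \le \#(T_{v_i},p) < \frac{1}{3}W$. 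This is precisely conclusion (2).

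The argument is essentially self-contained, so the only real point requiring care is the degenerate regime: conclusion (2) is needed exactly because, in an unbounded-degree tree, passing from $v_{i-1}$ to its heaviest child can lose almost all of the pebble mass at once, so no single subtree $T_{v_i}$ forms a balanced separator and conclusion (1) can genuinely fail. The hypotheses (i)/(ii) are not cosmetic: they are what I would lean on to exclude the small-$W$ configurations (for instance $W \le 2$ with a single doubly-pebbled node) in which neither conclusion can hold, and to ensure the descent actually reaches a node of weight at most $\frac{2}{3}W$. I expect verifying these boundary conditions, rather than the main case split, to be the most error-prone part.
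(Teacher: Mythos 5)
Your proof is correct and takes essentially the same approach as the paper's (sketched) proof: the heavy-child descent of \cref{la:treesplitting}, stopping at the first node whose subtree weight drops to at most $\frac{2}{3}\cdot\#(T,p)$, which gives conclusion (1) when that weight is at least $\frac{1}{3}\cdot\#(T,p)$ and otherwise gives conclusion (2) at its parent. One closing aside is inaccurate, though it does not affect the proof: in the excluded configuration of a single doubly-pebbled node, conclusion (2) in fact holds vacuously at that node (both $T\setminus T_v$ and all child subtrees carry no pebbles), so hypotheses (i)/(ii) are needed to guarantee that your descent crosses the $\frac{2}{3}$ threshold, not because the statement itself would fail without them.
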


\begin{proof}[Proof idea] Use the same approach as for \cref{la:treesplitting}.
	If no node $v$ of type (1) is found, a node of type (2) must exist.
\end{proof}

We now prove that tree isomorphism can be maintained in \DynFOLL under changes of polylogarithmic size.

\begin{proof}[Proof (of \cref{thm:tiso:dynfoll})]
	 We construct a \DynFOLL program that maintains the auxiliary relations $\textsc{c-iso}$, $\textsc{dist}$, and $\#\textsc{iso-siblings}$. Suppose $m \df (\log n)^c$ edges $\Delta E$ are changed. As a preprocessing step, the auxiliary relation $\textsc{dist}$ is updated in depth $\bigO(\log \log n)$ via \cref{thm:distances}. Then, $\textsc{c-iso}$ is updated by first executing a first-order initialization for computing an initial version $\textsc{c-iso}^0$. Afterwards a first-order procedure is executed for $k \df d \log \log n$ rounds, for $d \in \N$ chosen such that $(\frac{3}{2})^k > m$, yielding auxiliary  relations $\{\textsc{c-iso}^\ell\}_{\ell \leq k}$ and $\{\#\textsc{iso-siblings}^\ell\}_{\ell \leq k}$. The superscripts on the relations are for convenience, they are all subsequently stored in $\textsc{c-iso}$, $\textsc{dist}$, and $\#\textsc{iso-siblings}$.
	 
	 The goal is that after the $\ell$th round
	 \begin{itemize}
	  \item $\textsc{c-iso}^\ell$ contains all pairs $C, C^*$ with \mbox{$\#(C, C^*, \Delta E) \leq (\frac{3}{2})^\ell$} which are isomorphic and disjoint, and
	  \item $\#\textsc{iso-siblings}^\ell$ contains the number of isomorphic siblings identified so far (i.e., with respect to $\textsc{c-iso}^\ell$).
	 \end{itemize}
		Round $\ell$ first computes $\textsc{c-iso}^\ell$ with a first-order procedure, and afterwards computes $\#\textsc{iso-siblings}^\ell$.
	For initialization, the $\DynFOLL$ program first computes $\textsc{c-iso}^0$, using \cref{la:tiso_single}, and $\#\textsc{iso-siblings}^0$.
	Afterwards, for $k$ rounds, the $\DynFOLL$ program combines known pairs of isomorphic contexts into pairs with more affected nodes and adapts $\textsc{c-iso}$ and $\#\textsc{iso-siblings}$ accordingly.
	
	\subparagraph*{Computing  $\textsc{C-ISO}^\ell$.} In the $\ell$th round, the program tests whether contexts $C \df C(x, r, h)$ and $C^* \df C(x^*, r^*, h^*)$ with $ \#(C, C^*, \Delta E) \leq (\frac{3}{2})^\ell$ are isomorphic by splitting both $C$ and $C^*$ into contexts with fewer affected nodes. The splitting is done by selecting suitable nodes $z \in C$ and $z^* \in C^*$, and splitting the context depending on these nodes.
	 
	 We first provide some intuition of how $z$ and $z^*$ are intended to be chosen. Suppose $C$ and $C^*$ are isomorphic via isomorphism $\pi$. With the goal of applying \cref{lemma:splitting},
	 let $p$ be the function that assigns to each non-hole node $v$ of $C$ a number of pebbles from $\{0, 1, 2\}$ indicating how many of the two nodes $v$ and $\pi(v)$ have been affected by the change $\Delta E$. Note that if $(C,p)$ does not fulfill the precondition of \cref{lemma:splitting}, then $(C,C^\ast)$ must have already been included in $\textsc{c-iso}^0$ during the initialization. Therefore, assume the precondition holds for $(C,p)$.
	 Let $C_z$ denote the subcontext of $C$ rooted at $z$.
	 Now, applying \cref{lemma:splitting} to $(C,p)$ yields a node $z$ such that one of the following cases holds:
	 \begin{itemize}
	 \item[(1)] $\#(C\setminus C_z, p) \leq \frac{2}{3} \cdot \#(C, p)$ and $\#(C_z, p) \leq \frac{2}{3} \cdot \#(C, p)$, or
	 \item[(2)] $\#(C\setminus C_z, p) \leq \frac{1}{3} \cdot \#(C, p)$ 
	 and $ \#(C_{u},p) \leq \frac{1}{3} \cdot \#(C, p)$ for any child $u$ of $z$.
	 \end{itemize}
 
	 Intuitively our first-order procedure tries to guess this node $z$ and its image $z^* \df \pi(z)$ and split the contexts $C$ and $C^*$ at these nodes. 

	 For testing that $C$ and $C^*$ are isomorphic, the program guesses two nodes $z$ and $z^*$ and (disjunctively) chooses case (1) or (2). Note that the program cannot be sure that it has correctly guessed $z$ and $z^*$ according to the above intuition. For this reason, the program first tests that the size restrictions from the chosen case (1) or (2) are fulfilled, which is easily possible in $\FOar$ as there are at most polylogarithmically many affected nodes. Since $\#(C,C^\ast,\Delta E)\le \left(\frac{3}{2}\right)^{\ell}$,
	 this ensures that, by induction, $\textsc{c-iso}^{\ell-1}$ is fully correct for all pairs of contexts that will be compared when testing isomorphism of $C$ and $C^\ast$.
	 Note that in case (2) the subtrees of any pair of children $u_1$ and $u_2$ of $z$ have at most $\frac{2}{3} \cdot \#(C,C^\ast,\Delta E)$ affected nodes. 	 
	 
	 Next, the procedure tests that there is an isomorphism between $C$ and $C^*$ that maps $z$ to $z^*$. The following claim is used:
	 
	 \begin{claim}\label{claim:node-iso}
	 	Suppose $z$ and $z^*$ are nodes in $C$ and $C^*$ that satisfy Condition (1) or (2) with children $Z \uplus Y$ and  $Z^* \uplus Y^*$, respectively, with $|Y| = |Y^*|$ constant. Then a first-order formula can test whether
	 	there is an isomorphism between the forests $\{\subtree{z}{u} \mid u \in Z\}$ and $\{\subtree{z^*}{u^*} \mid u^* \in Z^*\}$ using the relations $\textsc{c-iso}^{\ell-1}$ and $\#\textsc{iso-siblings}^{\ell-1}$.
	 \end{claim}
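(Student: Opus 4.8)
The plan is to reduce isomorphism of the two forests to equality of their \emph{multisets of subtree-isomorphism types}, and then to decide this equality by a first-order formula that reads the precomputed multiplicities from $\#\textsc{iso-siblings}^{\ell-1}$ instead of counting subtrees on the fly. First I would invoke the standard fact that two forests of rooted trees are isomorphic if and only if they carry the same multiset of tree-isomorphism types; because tree-isomorphism type is a complete invariant, no genuine bipartite matching is needed. For a child $u \in Z$ of $z$ and a child $u^* \in Z^*$ of $z^*$, whether $\subtree{z}{u} \cong \subtree{z^*}{u^*}$ is given by $\textsc{t-iso}^{\ell-1}$, which is $\FO$-definable from $\textsc{c-iso}^{\ell-1}$ (using $\textsc{dist}$ and the derived $\textsc{path}$ relation to locate the relevant subtrees). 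The size restrictions of Condition~(1) or~(2) guarantee that each such subtree contains at most $\frac{2}{3}\cdot\#(C, C^*, \Delta E) \leq (\frac{3}{2})^{\ell-1}$ affected nodes, so by the inductive invariant $\textsc{c-iso}^{\ell-1}$, and hence $\textsc{t-iso}^{\ell-1}$, is already fully correct on every pair of subtrees that occurs here.

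The formula then asserts multiset equality as follows: for every child $u^* \in Z^*$ there is a child $u \in Z$ with $\subtree{z}{u} \cong \subtree{z^*}{u^*}$, and for every child $u \in Z$ the number $c_Z(u)$ of children in $Z$ whose subtree is isomorphic to $\subtree{z}{u}$ equals the number $c_{Z^*}(u)$ of children in $Z^*$ isomorphic to $\subtree{z}{u}$. These two conditions together imply multiset equality: the count condition forces every type present in $Z$ to occur with matching multiplicity in $Z^*$ (note $c_Z(u) \geq 1$, so a partner in $Z^*$ is automatically guaranteed), while the partner condition rules out any additional type appearing only in $Z^*$. Both $c_Z(u)$ and $c_{Z^*}(u)$ are obtained from $\#\textsc{iso-siblings}^{\ell-1}$, whose stored value is correct here by the same inductive bound. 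Since that value counts isomorphic siblings among \emph{all} children of $z$ (resp.\ $z^*$), I would restrict it to the subset $Z$ (resp.\ $Z^*$) by subtracting the contribution of the constant-size set $Y$ (resp.\ $Y^*$), which is $\FO$-computable by explicitly quantifying over its at most $|Y|$ elements and testing them against $\subtree{z}{u}$ via $\textsc{t-iso}^{\ell-1}$; if $u$ has no partner in $Z^*$ then $c_{Z^*}(u) = 0$.

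The main obstacle is precisely this counting step: $\FOar$, being equivalent to $\ACz$, cannot compute cardinalities up to $n$, so the multiset comparison cannot be carried out by directly counting subtrees of each type. This is exactly why the auxiliary relation $\#\textsc{iso-siblings}$ is maintained --- it supplies the multiplicities as stored data, reducing the comparison to a first-order test on the stored counts (domain elements bounded by $n$), adjusted by a constant for $Y$ and $Y^*$, which $\FOar$ handles through $\leq$ and $+$. The remaining bookkeeping, namely correctly excluding the constant-size sets $Y$ and $Y^*$ and aligning the sibling count of a type in $Z$ with that of the corresponding type in $Z^*$, is routine given that $|Y| = |Y^*|$ is constant.
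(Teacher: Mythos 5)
Your proposal is correct and follows essentially the same route as the paper's proof: characterize forest isomorphism via subtree-isomorphism types and their multiplicities, decide type equivalence with $\textsc{t-iso}^{\ell-1}$ (\FO-definable from $\textsc{c-iso}^{\ell-1}$, correct by the inductive size bound from Condition~(1)/(2)), and read multiplicities from $\#\textsc{iso-siblings}^{\ell-1}$ while subtracting the constant-size contribution of $Y$ and $Y^*$. Your one-directional partner condition combined with the one-directional count condition is a minor (and valid) logical economization of the paper's symmetric ``and vice versa'' formulation, not a different approach.
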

	 \begin{proof}
	 	The forests are isomorphic iff for each $u \in Z$ there is a $u^* \in Z^*$ such that $\subtree{z}{u} \cong \subtree{z^*}{u^*}$  and such that the number of nodes $v \in Z$ with $\subtree{z}{u} \cong \subtree{z}{v}$ is the same as the number of nodes $v^*$ with subtrees $\subtree{z^*}{u^*} \cong \subtree{z^*}{v^*}$, and vice versa with roles of $u$ and $u^*$ swapped.
	 	
	 	Because $z$ and $z^\ast$ satisfy condition (1) or (2), $\textsc{c-iso}^{\ell-1}$ is correct on the forest $\{\subtree{z}{u} \mid u \in Z\}\cup\{\subtree{z^*}{u^*} \mid u^* \in Z^*\}$ by induction. Additionally, $\#\textsc{iso-siblings}^{\ell-1}$ is consistent with $\textsc{c-iso}^{\ell-1}$ by induction.
	 	From $\textsc{c-iso}^{\ell-1}$, the tree isomorphism relation $\textsc{t-iso}^{\ell-1}$ -- storing tuples $(x,r, x^*, r^*)$ such that $\subtree{x}{r}$ and $\subtree{x^*}{r^*}$ are isomorphic and disjoint -- is \FO-definable.
	 	
	 	For testing whether a node $u\in Z$ satisfies the above condition, a first-order formula existentially quantifies a node $u^\ast\in Z^\ast$ and checks that $\textsc{t-iso}^{\ell-1}(z,u,z^\ast,u^\ast)$. The number of isomorphic siblings of $u,u^\ast$ is compared  using $\#\textsc{iso-siblings}^{\ell-1}$ and subtracting
	 	any siblings $y\in Y$ for which $\textsc{t-iso}^{\ell-1}(z,u,z,y)$ (and, respectively, $y^\ast\in Y^\ast$ for which $\textsc{t-iso}^{\ell-1}(z^\ast,u^\ast,z^\ast,y^\ast)$).
	 	This is possible in \FO because (a) $|Y|$ is constant and (b) $\#\textsc{iso-siblings}^{\ell-1}$ is consistent with $\textsc{c-iso}^{\ell-1}$ (even though $\textsc{c-iso}^{\ell-1}$ is not necessarily complete on subtrees in $Y,Y^\ast$).
	 \end{proof}
	 
	 For testing whether there is an isomorphism between $C$ and $C^*$ mapping $z$ to $z^*$, the program distinguishes the cases (1) and (2) from above. Further, in each of the cases it distinguishes (A) $\textsc{path}(r, z, h)$ and $\textsc{path}(r^*, z^*, h^*)$, or (B) $\neg \textsc{path}(r, z, h)$ and $\neg \textsc{path}(r^*, z^*, h^*)$. For all these first-order definable cases, a first-order formula can test whether there is an isomorphism between $C$ and $C^*$ mapping $z$ to $z^*$ as follows:\\
	 
	 \subparagraph{Case (1)}
	 Suppose $\#(C\setminus C_z, p) \leq \frac{2}{3} \cdot \#(C, p)$ and $\#(C_z, p) \leq \frac{2}{3} \cdot \#(C, p)$.

	 	\begin{itemize}
	 		\item[(A)]  Suppose $\textsc{path}(r, z, h)$ and $\textsc{path}(r^*, z^*, h^*)$. Then isomorphism of $C$ and $C^*$ is tested as follows  (see Figure~\ref{fig:tiso_1B}):
	 		\begin{itemize}
	 			\item[(a)] Test that the contexts $D_1 \df C(x, r, z)$ and $D^*_1 \df C(x^*, r^*, z^*)$ are isomorphic.
	 			\item[(b)] Test that the contexts $D_2 \df C(x, z, h)$ and $D^*_2 \df C(x^*, z^*, h^*)$ are isomorphic.
	 		\end{itemize}

	 		\item[(B)] Suppose $\neg\textsc{path}(r, z, h)$ and $\neg\textsc{path}(r^*, z^*, h^*)$. Then isomorphism of $C$ and $C^\ast$ is tested as follows  (see Figure~\ref{fig:tiso_1B}). Let $v$ be the least common ancestor of $z$ and $h$, and likewise $v^*$ the least common ancestor of $z^*$ and $h^*$. 
	 		\begin{itemize}
	 			\item[(a)] Test that the contexts $D_1 \df C(x,r,v)$ and $D^*_1 \df C(x^*, r^*, v^*)$ are isomorphic.
	 			\item[(b)] Let $y_1$ be the child of $v$ such that $\subtree{v}{y_1}$ contains the hole $h$ of $C$; and likewise for $y_1^*$ and $C^*$. Let $y_2$ be the child of $v$ such that $\subtree{v}{y_2}$ contains $z$; and likewise for $y_2^*$ and $z^*$. Then: 
	 			\begin{itemize}
	 				\item[(i)] Test that the contexts $D_2 \df C(v,y_1,h)$ and $D^*_2 \df C(v^*,y^*_1,h^*)$ are isomorphic.
	 				\item[(ii)] Test that the contexts $D_3 \df C(v, y_2, z)$ and $D^*_3 \df C(v^*, y^*_2, z^*)$ are isomorphic.
	 				\item[(iii)] Test that the trees $T \df \subtree{x}{z}$ and $T^* \df \subtree{x^*}{z^*}$ are isomorphic.
	 				\item[(iv)] Let $Z$ be the set of children of $v$ except for $y_1$, $y_2$; and likewise $Z^*$ for $v^\ast$ and $y_1^*, y_2^*$.
	 				Test that there is an isomorphism between the forests $\{\subtree{v}{u} \mid u \in Z\}$ and $\{\subtree{v^*}{u^*} \mid u^* \in Z^*\}$ (using Claim \ref{claim:node-iso} for $v$, $Z$, and $Y = \{y_1, y_2\}$).
	 			\end{itemize}
	 			
	 		\end{itemize}

	 	\end{itemize}

 \begin{figure}[t]
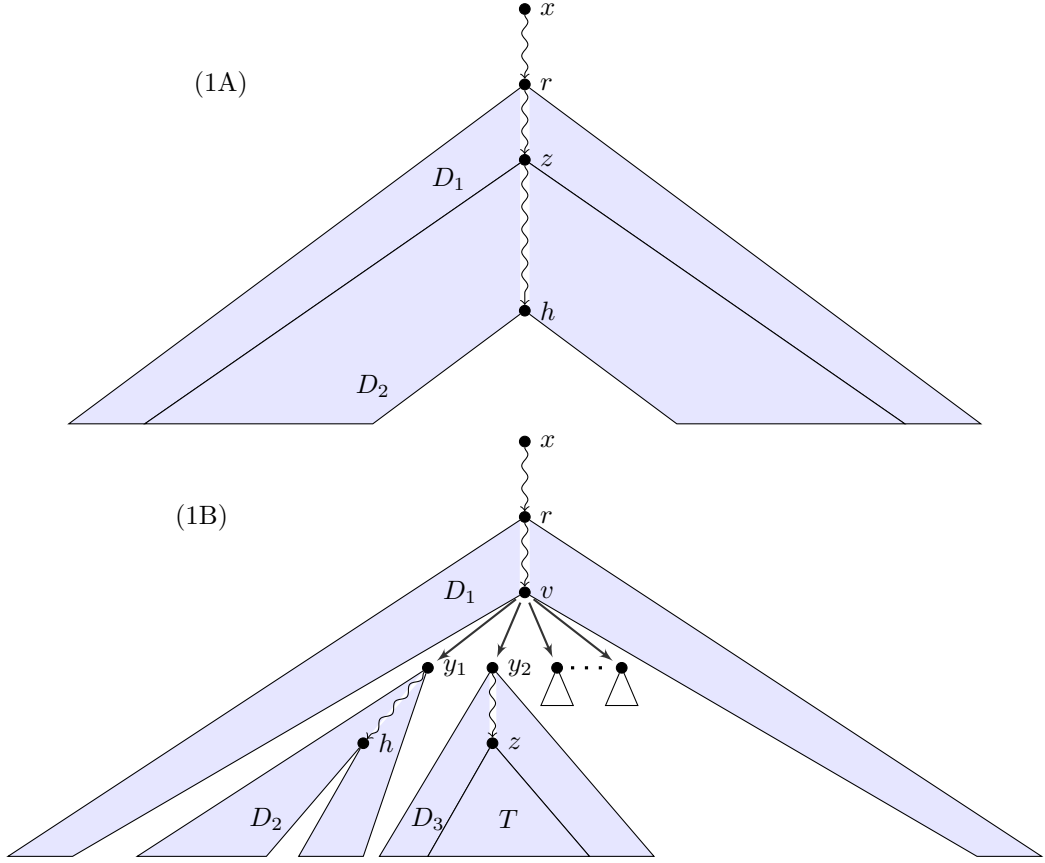

 	\centering
 	\caseOneA
 	\caseOneB
 	\caption{Illustration of the Cases (1A) and (1B) in the proof of \cref{thm:tiso:dynfoll}}
 	\label{fig:tiso_1B}
 \end{figure}
	 	
	 	\subparagraph{Case (2)}
	 	
	 	Let $\#(C\setminus C_z, p) \leq \frac{1}{3} \cdot \#(C, p)$  and $ \#(C_{u},p) \leq \frac{1}{3} \cdot \#(C, p)$ for any child $u$ of $z$.

	 	\begin{itemize}
	 		\item[(A)] Suppose $\textsc{path}(r, z, h)$ and $\textsc{path}(r^*, z^*, h^*)$. Then isomorphism of $C$ and $C^\ast$ is tested as follows (see Figure~\ref{fig:tiso_2A}):
	 		\begin{itemize}
	 			\item[(a)] Test that the contexts $D_1 \df C(x, r, z)$ and  $D^*_1 \df C(x^*, r^*, z^*)$ are isomorphic. 
	 			\item[(b)] Let $y$ be the child of $z$ such that $\subtree{z}{y}$ contains the hole $h$ of $C$; and likewise for $y^*$ and $C^*$. Let $Z$ be the set of children of $z$ except for $y$; and likewise $Z^*$ for $z^*$. Then:
	 			\begin{itemize}
	 				\item[(i)] Test that $D_2 \df C(z, y, h)$ and  $D^*_2 \df C(z^*, y^*, h^*)$ are isomorphic.
	 				\item[(ii)] Test that there is an isomorphism between the forests $\{\subtree{z}{u} \mid u \in Z\}$ and $\{\subtree{z^*}{u^*} \mid u^* \in Z^*\}$ (using Claim \ref{claim:node-iso} for $z$, $Z$, and $Y = \{y_1\}$).
	 				
	 			\end{itemize}
	 		\end{itemize}
	 		\item[(B)] Suppose $\neg\textsc{path}(r, z, h)$ and $\neg\textsc{path}(r^*, z^*, h^*)$. Then isomorphism of $C$ and $C^\ast$ is tested as follows (see Figure~\ref{fig:tiso_2A}). Let $v$ be the least common ancestor of $z$ and $h$, and likewise $v^*$ the least common ancestor of $z^*$ and $h^*$. Then:
	 		\begin{itemize}
	 			\item[(a)] Test that $D_1 \df C(x,r,v)$ and $D^*_1 \df C(x^*, r^*, v^*)$ are isomorphic.
	 			\item[(b)] Let $y_1$ be the child of $v$ such that $\subtree{v}{y_1}$ contains the hole $h$ of $C$; and likewise for $y_1^*$ and $C^*$. Let $y_2$ be the child of $v$ such that $\subtree{v}{y_2}$ contains $z$; and likewise for $y_2^*$ and $z^*$. Then:
	 			\begin{itemize}
	 				\item[(i)] Test that $D_2 \df C(v,y_1,h)$ and $D^*_2 \df C(v^*,y^*_1,h^*)$ are isomorphic.
	 				\item[(ii)] Test that $D_3 \df C(v, y_2, z)$ and $D^*_3 \df C(v^*, y^*_2, z^*)$ are isomorphic.
	 				\item[(iii)] Let $Z_1$ be the set of children of $v$ except for $y_1$, $y_2$; and likewise $Z_1^*$ for $v^\ast$ and $y_1^*, y_2^*$.
	 				Test that there is an isomorphism between the forests $\{\subtree{v}{u} \mid u \in Z_1\}$ and $\{\subtree{v^*}{u^*} \mid u^* \in Z_1^*\}$ (using Claim \ref{claim:node-iso} for $v$, $Z=Z_1$, and $Y = \{y_1, y_2\}$).
	 				\item[(iv)] Let $Z_2$ be the set of children of $z$; and likewise $Z_2^\ast$ for $z^\ast$. Test that there is an isomorphism between the forests $\{\subtree{z}{u} \mid u \in Z_2\}$ and $\{\subtree{z^*}{u^*} \mid u^* \in Z_2^*\}$ (using Claim \ref{claim:node-iso} for $z$, $Z=Z_2$, and $Y = \emptyset$).
	 			\end{itemize}
	 		\end{itemize}

	 	\end{itemize}

	 \begin{figure}[t]
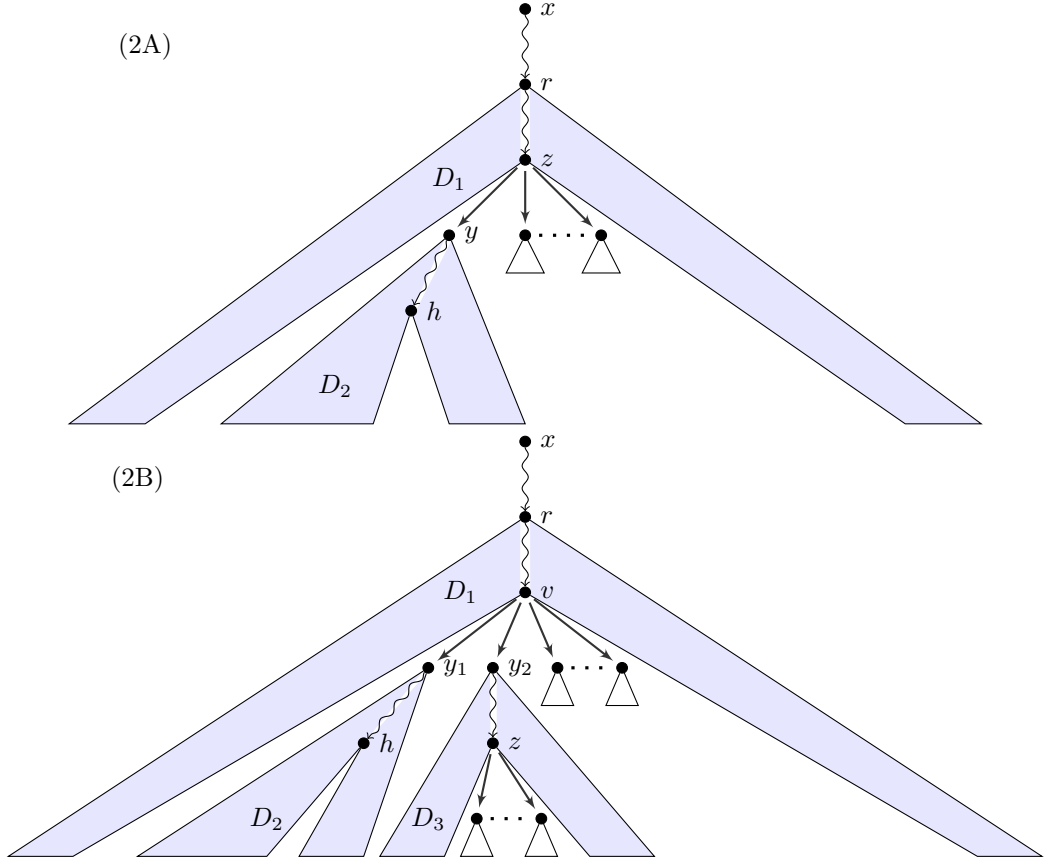

	 	\centering
	 	\caseTwoA
	 	\caseTwoB
	 	\caption{Illustration of the Cases (2A) and (2B) in the proof of \cref{thm:tiso:dynfoll}}
	 	\label{fig:tiso_2A}
	 \end{figure}
	 	
	 \newcommand{\siu}{\ensuremath{\#\textsc{iso-sibling}_{\text{unchanged}}}}
	
	\subparagraph*{Computing  $\#\textsc{ISO-SIBLINGS}^\ell$.}  The relation $\#\textsc{iso-siblings}^\ell$ can be first-order defined from $\textsc{c-iso}^{\ell}$ and the 4-ary relation $\siu$ containing tuples  
	 \begin{itemize}
	  \item $(x,r,y,m)$ with $m>0$ if $\subtree{x}{y}$ has no affected nodes and the number of isomorphic
	 siblings of $y$ in $\subtree{x}{r}$ with no affected nodes is $m$; and 
	 \item $(x,r,y,0)$ if \subtree{x}{y} contains an affected node.
	 \end{itemize}
	 Thus the relation $\siu$ contains isomorphism counts for ``unchanged'' siblings. It is \FO-definable from the old auxiliary data (from before the change) and the set of changes.

	 Given $\siu$ and $\textsc{c-iso}^\ell$, the relation
	 $\#\textsc{iso-sibling}^\ell$ is \FO-definable as follows.  Include a tuple $(x,r,y,m)$ into $\#\textsc{iso-sibling}^\ell$  if $m = m_1 + m_2$ where $m_1$ is the number of unchanged, isomorphic siblings of $y$ and $m_2$ is the number of isomorphic siblings of $y$ affected by the change (but by at most $(\frac{3}{2})^{\ell - 1}$ changes).  
	 
	 The number $m_1$ can be checked via distinguishing whether $\subtree{x}{y}$ has changed or not. 
	 If $\subtree{x}{y}$ has not changed, the formula checks that $m_1$ is such that $(x,r,y,m_1) \in \siu$. If $\subtree{x}{y}$ has changed, find a sibling $y^*$ of $y$ with an unchanged, isomorphic subtree. If $y^*$ exists then the formula checks that $m_1$ is such that $(x,r,y^*,m_1) \in \siu$, and otherwise that $m_1$ is 0. 
	 
	 For checking $m_2$, let $S(y)$ be the set of siblings $y^*$ of $y$ in \subtree{x}{r} that contain at least one affected node and where $\textsc{t-iso}^\ell(x,y,x,y^*)$. Since there are at most $\polylog n$ changes, $|S(y)|=\bigO(\polylog n)$. Therefore, $|S(y)|$ can be counted and compared to $m_2$ in \FO.
\end{proof}

\section{Tree decompositions of bounded-treewidth graphs}\label{section:bodlaender}
One of the best-known algorithmic meta-theorems is Courcelle's theorem, which states that all graph properties expressible in monadic second-order logic \MSO can be decided in linear time for graphs with tree-width bounded by some constant $k$ \cite{Courcelle90}. The tree-width is a graph parameter and measures how ``tree-like'' a graph is and is defined via tree decompositions, see below for details. Courcelle's theorem is based on Bodlaender's theorem, stating that in linear time (1) one can decide whether a graph has tree-width at most $k$ and (2) one can compute a corresponding tree decomposition \cite{Bodlaender96}.

Elberfeld, Jakoby and Tantau \cite{ElberfeldJT10} proved variants of these results and showed that ``linear time'' can be replaced with ``logarithmic space'' in both theorem statements. A dynamic version of Courcelle's theorem was proven in \cite{DattaMSVZ19}: every \MSO-definable graph property is in \DynFO under changes of single edges. The proof of the latter result circumvented providing a dynamic variant of Bodlaender's theorem, by using the result of Elberfeld et al.~that tree decompositions can be computed in \LOGSPACE, showing that a tree decomposition can be used to decide the graph property if only logarithmically many single-edge changes occurred after its construction, and that this is enough for maintenance in \DynFO.

It is an open problem to generalize the \DynFO maintenance result of \cite{DattaMSVZ19} from single-edge changes to changes of polylogarithmically many edges, even for \DynFOLL. Here, we provide an intermediate step and show that tree decompositions for graphs of bounded treewidth can be maintained in \DynFOLL. This result may lead to a second strategy for maintaining \MSO properties dynamically, in addition to the approach of \cite{DattaMSVZ19}.

A \emph{tree decomposition} $(T,B)$ of a graph $G = (V,E)$ consists of a rooted tree $T$ and a mapping $B$ from the nodes of $T$ to subsets of $V$. For a tree node $t$, we call the set $B(t)$ the \emph{bag} of $t$. 
A tree decomposition needs to satisfy three conditions. First, every vertex $v \in V$ needs to be included in some bag. Second, for every edge $(u,v) \in E$ there needs to be bag that includes both $u$ and $v$. Third, for each vertex $v \in V$, the nodes $t$ of $T$ such that $v \in B(t)$ form a connected subgraph in $T$.
The \emph{width} of a tree decomposition is the maximal size of a bag $B(t)$, over all tree nodes $t$, minus $1$. The \emph{treewidth} of a graph $G$ is the minimal width of a tree decomposition for $G$.

In addition to the width, important parameters of a tree decomposition are its \emph{depth}, the maximal distance from the root to a leaf, and its \emph{degree}, the degree of the tree $T$. Often, a binary tree decomposition of depth $\bigO(\log |V|)$ is desirable, while width $\bigO(k)$ for a graph of treewidth $k$ is tolerable. We show that one can maintain in \DynFOLL a tree decomposition of logarithmic depth but with unbounded degree. The proof does not use the hierarchical technique; a tree decomposition is defined in \FOLL from auxiliary information that is maintained in \DynFO.

\begin{theorem}\label{theorem:bodlaender}
For every $k$, there are numbers $c,d \in \N$ such that a tree decomposition of width $ck$ and depth $d \log n$ can be maintained in \DynFOLL under changes of $\polylog(n)$ edges for graphs of treewidth $k$, where $n$ is the size of the graph.
\end{theorem}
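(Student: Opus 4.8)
The plan is to produce a \emph{balanced} tree decomposition: starting from the treewidth-$k$ graph, one recursively splits along balanced separators of size at most $k+1$, obtaining a decomposition of width $\bigO(k)$ whose underlying tree has depth $\bigO(\log n)$ and unbounded degree (the unbounded degree appearing because removing a size-$(k+1)$ separator may create many components). Such a decomposition is computable in \LOGSPACE by the algorithm of Elberfeld, Jakoby and Tantau \cite{ElberfeldJT10}, and since $\LOGSPACE \subseteq \NC^2$, on a structure of size $m$ it can be computed by an $\NC^2$ circuit of depth $\bigO((\log m)^2)$. The idea is therefore to maintain in \DynFO auxiliary data that, after a change of $\polylog(n)$ edges, reduces the recomputation of a balanced decomposition to an $\NC^2$ computation on a structure of \emph{polylogarithmic} size, and then to invoke \cref{theorem:smallstructure-foll}(a) to carry this $\NC^2$ computation out within \FOLL. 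This is exactly where the bound $\bigO(\log\log n)$ comes from: an $\NC^2$ computation on a structure of size $\polylog(n)$ has depth $\bigO((\log\log n)^2)$ and thus collapses to an \AC-circuit of depth $\bigO(\log\log n)$. The full decomposition of the $n$-vertex graph is then assembled by first-order formulas from this small ``core'' decomposition and the parts of the old decomposition that the change leaves untouched.

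For the \DynFO part I would maintain, in addition to undirected connectivity (which is in \DynFO under $\polylog(n)$ changes by \cite{DattaKMTVZ20}), a bounded-width tree decomposition $(T,B)$ of the current graph of width $\bigO(k)$, together with a pointer from every vertex to a bag containing it. The bounded-width decomposition need \emph{not} itself be balanced -- that is precisely why it can plausibly be maintained in \DynFO rather than requiring the full power of \FOLL. The property I aim to establish is \emph{locality}: when $\polylog(n)$ edges are changed, only the $\bigO(k)$-sized bags lying on the union of the tree-paths between the $\polylog(n)$ affected bags need revision, while every subtree of $T$ hanging off this ``affected skeleton'' interacts with the rest only through the separator at which it attaches and can therefore be reattached unchanged. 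Contracting each hanging subtree to its attachment bag, and marking those bags as terminal vertices, yields a torso-like instance of size $\polylog(n)$ that still has treewidth $\bigO(k)$ and faithfully represents all separators relevant to the changed region. On this small instance the balanced-decomposition algorithm is run through \cref{theorem:smallstructure-foll}(a), and its logarithmic-depth output is re-expanded and spliced with the untouched hanging subtrees by a first-order formula; connectivity is used to certify that the three tree-decomposition conditions (vertex cover, edge cover, and the subtree condition) hold globally across the splicing boundary.

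The main obstacle is the \emph{locality and size-reduction} step itself: making precise a \DynFO-maintainable bounded-width decomposition in which a $\polylog(n)$-edge change provably confines its effect to a $\polylog(n)$-size contracted core, \emph{and} in which reattaching the hanging subtrees keeps both the width $\bigO(k)$ and the depth $\bigO(\log n)$. Two points need the most care. First, the contraction must preserve enough separator information that separators returned by the $\NC^2$ algorithm on the core are balanced with respect to the \emph{whole} graph and not merely with respect to the core; this is where the treewidth bound is essential, since it guarantees that the boundary between the core and each hanging subtree has size $\bigO(k)$ and can be carried along as terminals. Second, keeping the global depth at $\bigO(\log n)$ is not automatic: a reattached subtree may already have depth $\Theta(\log n)$, so the depths of the core and of the hanging parts must not accumulate over many successive batches. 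This forces ``balancedness'' to be treated as an invariant that the \FOLL recomputation re-establishes after every batch of changes -- which is the reason the balanced decomposition can be \emph{defined} in \FOLL even though only weaker, unbalanced auxiliary data is \emph{maintained} in \DynFO.
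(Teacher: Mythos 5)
There is a genuine gap, and it sits exactly where you placed your ``main obstacle''. Your plan rests on two premises that are not available: (i) that a width-$\bigO(k)$ (unbalanced) tree decomposition can be maintained in \DynFO under $\polylog(n)$ edge changes, and (ii) that such a decomposition localizes the effect of a batch of changes to a $\polylog(n)$-sized core. Premise (i) is itself an open problem -- it is precisely the ``open'' entry in the paper's overview table, and nothing in the literature the paper builds on provides it. Worse, (i) and (ii) are in tension: the affected skeleton you define (the union of tree paths between the $\polylog(n)$ affected bags) has size $\bigO(m \cdot \mathrm{depth}(T))$, so it is only polylogarithmic if $T$ already has polylogarithmic depth, i.e.\ is balanced. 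With an unbalanced decomposition, a single inserted edge between two vertices whose bags are far apart in a deep tree already forces revision along a path of length $\Theta(n)$, so the core is not small and \cref{theorem:smallstructure-foll}(a) does not apply. If you instead maintain the \emph{balanced} decomposition as the auxiliary data (which \DynFOLL permits), locality holds, but then your third difficulty becomes fatal: each batch rebalances only the core, the hanging subtrees retain their old internal depth, and splicing stacks depths on top of one another, so over polynomially many batches the $\bigO(\log n)$ depth bound degrades. Re-establishing \emph{global} balance would mean recomputing a balanced decomposition of the full $n$-vertex graph, which is a \LOGSPACE{} computation on a structure of size $n$, not $\polylog(n)$, and hence outside the reach of the small-structure lemma and not known to be in \FOLL. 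The weighted-balance issue you raise (a separator balanced for the torso need not be balanced for the whole graph) is a further unresolved hole of the same kind.

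The paper's proof avoids all of this by never maintaining a tree decomposition at all. The only \DynFO-maintained data is connectivity of $G \setminus D$ for \emph{every} constant-size vertex set $D$ (there are polynomially many such $D$, and undirected reachability is \DynFO-maintainable under $\polylog(n)$ changes \cite{DattaKMTVZ20}). After each batch, the entire Elberfeld--Jakoby--Tantau decomposition \cite{ElberfeldJT10} is \emph{redefined from scratch}: their descriptor DAG has constant-size descriptors and constant-size separators, so the child relation is first-order definable from the maintained connectivity data, using approximate counting \cite{ChaudhuriR96} to pick near-balanced separators; the only step needing more than \FO{} is pruning to the part reachable from the root descriptor, and since all paths in that DAG have length $\bigO(\log n)$, this reachability is computable in \FOLL{} by $\bigO(\log\log n)$ rounds of path doubling. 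Because the decomposition is recomputed from scratch each round, there is no locality argument, no splicing, and no depth accumulation -- which is exactly the machinery your approach would still have to supply.
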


\begin{proof}
Elberfeld et al.~\cite[Section 3.2]{ElberfeldJT10} show that for any graph $G$ with $n$ vertices and treewidth at most $k$, one can compute in \LOGSPACE a tree decomposition of width $4k+3$ and depth at most $d \log n$, for some constant $d$ that only depends on $k$. We show that their algorithm can be adapted for \DynFOLL.

We summarize the approach of \cite{ElberfeldJT10} for constructing a tree decomposition of $G$.
As a first step, a so-called \emph{descriptor decomposition} is built \cite[Lemma 3.11, Lemma 3.13]{ElberfeldJT10}. 
A \emph{descriptor} $D$ is either a set $B$ of vertices of size at most $3k+3$, or a pair $(B, v)$ of a set $B$ of vertices of size at most $4k+4$ and a vertex $v \notin B$. Each descriptor defines a graph $G(D)$. For a descriptor $D = B$, the graph $G(D)$ is just $G[B]$, the induced subgraph of $B$ with vertex set $B$. 
For a descriptor $D = (B,v)$, let $C$ be the vertices of the connected component of $v$ in the graph $G \setminus B$, so, the graph that results from deleting $B$ from $G$. The graph $G(D)$ is the subgraph $G[B \cup C]$ of $G$.

We define a directed acyclic graph $M(G)$. The nodes of $M(G)$ are all possible descriptors. Edges go from parent descriptors to child descriptors as explained next, see \cite[Definition 3.8]{ElberfeldJT10}. 
A descriptor of the form $B$ has no children. 
For a descriptor of the form $D = (B,v)$, the goal is to find a separator of the graph $G(D)$, that is, a set $S$ of at most $k+1$ vertices such that $G(D) \setminus S$ consists of connected components that are as small as possible. Two cases are considered: if $B$ is large, that is, $B > 2k+2$, then $S$ is the lexicographically smallest set of at most $k+1$ vertices such that every connected component in $G(D) \setminus S$ contains at most half of the vertices from $B$. Otherwise, $S$ is chosen such that every connected component in $G(D) \setminus S$ contains at most half of the vertices from $G(D)$.
Let $C_1, \ldots, C_m$ be the components of $G(D) \setminus (B \cup S)$. For each $i \leq m$, the descriptor $(B_i,v_i)$ is a child of $(B,v)$ in $M(G)$ if $B_i$ is the subset of vertices from $B \cup S$ that have an edge to a vertex from $C_i$ and $v_i$ is the smallest vertex from $C_i$. 
Also, the descriptor $B \cup S$ is a child of $(B,v)$, unless $S \subseteq B$.

From $M(G)$, a tree decomposition is computed as follows. First, all parts of $M(G)$ that are not reachable from the descriptor $(\emptyset, v_0)$, where $v_0$ is some vertex, are discarded. The resulting structure is a tree. Each node corresponding to a descriptor $D=(B,v)$ is split into two nodes $D_n$ and $D_i$, with an edge from $D_n$ to $D_i$ and edges from $D_i$ to the children $D_1, \ldots, D_m$ of $D$ in $M(G)$. The node $D_n$ is labelled with the bag $B$, the node is labelled with the bag that contains all vertices that appear at least twice in $B, B_1, \ldots, B_m$, where the sets $B_i$ are the bags of the descriptors $D_i$.

The result is a tree decomposition of the claimed width and depth \cite[Lemma 3.12]{ElberfeldJT10}. 

We now argue how to adapt this approach for the dynamic setting. The main part is the definition of child descriptors in the construction of the graph $M(G)$. For this, we need to maintain the connected components of $G \setminus (B \cup S)$, where the sizes of $B$ and $S$ are constants that only depend on $k$. Note that this is possible in \DynFO under changes of polylogarithmically many edges.
\begin{claim}
Let $b$ be a constant. For any undirected graph $G$ that is subject to polylogarithmically many edge changes, one can maintain auxiliary information in \DynFO such that for any set $D$ of vertices of $G$ of size at most $b$, one can define the connected components of $G \setminus D$ in \FO.
\end{claim}
This claim follows from the fact that reachability in undirected graphs can be maintained in \DynFO under changes of polylogarithmically many edges \cite{DattaKMTVZ20}. For any subset $D$ of vertices with $|D|$ bounded by a constant, one can maintain reachability for the graph $G \setminus D$, as there are only polynomially such sets $D$.

As separators $S$ have constant size, we can quantify over them and select a lexicographically smallest one in \FO. Note that for a descriptor $D = (B,v)$ we need to select a separator that splits either the set $B$ or the vertices of $G(D)$ into small components, depending on the size of $B$. 
This is easily possible for splitting $B$, as this set has constant size and we can determine in \FO for each connected component of $G \setminus S$ the number of vertices from $B$ it includes. This is not possible for the vertices of $G(D)$. Nevertheless, for this case we can still identify a separator that is almost optimal, as one can approximately count in \FO.

\begin{claim}[{see \cite[Lemma 4.1]{ChaudhuriR96}}]
For every $\epsilon > 0$ there is an \FOar formula $\psi(x)$ such that for each structure $\calA$ with some unary relation $U$, there is an element $a$ such that $\psi(a)$ is satisfied in $\calA$, and for every $a$ with $\calA \models \psi(a)$, it holds that $(1- \epsilon)|U| \leq a \leq (1+\epsilon)|U|$.
\end{claim}

So, by the above claim and based on auxiliary information maintained in \DynFO, we can select in \FO a separator that splits the graph $G(D)$ into components of size at most $\frac{2}{3}$ of the original graph. This still guarantees that the graph constructed via the child descriptor relationship has logarithmic depth.

It remains to define the final tree decomposition from the constructed graph. The split of a descriptor node $D$ into two nodes $D_n$ and $D_i$ is obviously definable in \FO. We still need to discard all parts that are not reachable from the selected root descriptor $(\emptyset, v_0)$. As all directed paths in the graph have at most logarithmic length, reachability can be defined in \FOLL.

Altogether, this describes how to maintain a tree decomposition in \DynFOLL.
\end{proof}
 
\section{Conclusion and discussion}\label{section:discussion}
We have shown that most existing maintenance results for $\DynFO$ under single tuple changes can be lifted to $\DynFOLL$ for changes of polylogarithmic size. A notable exception are queries expressible in monadic second-order logic, which can be maintained on graphs of bounded treewidth under single-tuple changes.

Thus it seems very likely that one can find large classes of queries such that: If a query from the class can be maintained in \DynFO for changes of size $\bigO(1)$, then it can be maintained in $\DynFOLL$ for polylogarithmic changes. Identifying natural such classes of queries is an interesting question for future research.

\bibliography{bibliography}

\appendix

\section{Proof details for Section \ref{section:small-structures}}
We provide proof details for Theorem~\ref{thm:reachMain} and Theorem~\ref{thm:small-structure-further}.

\subsection{Directed Reachability}
In  \cite{Datta0VZ18} it was shown that reachability in directed graphs can be maintained in \DynFOar under changes of size $\bigO(\frac{\log{n}}{\log \log{n}})$.  We show:

\begin{theorem}[Restatement of Theorem~\ref{thm:reachMain}]\label{thm:reachAppendix}
    Reachability in directed graphs is in \DynFOLL under insertions and deletions of $\polylog(n)$ edges.
\end{theorem}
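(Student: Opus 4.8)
The plan is to revisit the \DynFO\ program of \cite{Datta0VZ18} for $\bigO(\frac{\log n}{\log\log n})$ changes and to observe that its only step that exceeds the power of \FO\ is a linear-algebraic computation on a structure whose size is proportional to the number of changed edges. For $\polylog(n)$ changes this structure has polylogarithmic size, so that computation can be performed in \FOLL\ by the small-structure technique of Lemma~\ref{theorem:smallstructure-foll}(a). The rest of the update is first-order, so the whole program lands in \DynFOLL.

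Concretely, following \cite{DattaKMSZ18,Datta0VZ18}, I would maintain reachability algebraically over a finite field $\mathbb{F}_p$ with $p$ polynomially bounded in $n$, so that field elements are encoded in $\bigO(\log n)$ bits. Reachability between two nodes is characterised by whether the ranks of certain matrices derived from a (suitably perturbed) adjacency matrix $A$ coincide, and the dynamic program stores enough rank and inverse information about $A$ to read off these ranks in \FO. The point of this encoding is its behaviour under changes: inserting or deleting a set $\Delta E$ of $m \df \polylog(n)$ edges modifies $A$ by a matrix of rank at most $2m$, that is, a perturbation $U V^{\top}$ with $U,V \in \mathbb{F}_p^{n \times 2m}$ supported on the endpoints of $\Delta E$. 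By the Woodbury and Schur-complement identities, the new ranks, and hence the new reachability relation, are determined by (i) the maintained auxiliary data evaluated at the $\bigO(m)$ endpoints of $\Delta E$, assembled by an \FO\ formula into an $\bigO(m) \times \bigO(m)$ matrix $H$ over $\mathbb{F}_p$, and (ii) the rank of $H$ together with solutions of the associated linear systems.

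The assembly of $H$ from the current auxiliary relations and from $\Delta E$ is first-order, as is the final step of reading the updated reachability relation, and the refreshed rank data, off the solution of the small problem. The only heavy step is solving the $\bigO(m) \times \bigO(m)$ linear-algebra problem over $\mathbb{F}_p$, namely computing the rank of $H$ and the relevant linear combinations, which is a uniform $\NC^2$ computation (rank over a field lies in $\NC^2$, as does reachability). Here I represent the entries of $H$ bit-wise, so that the index positions together with the $\bigO(\log n)$ bit positions form a marked set $C$ of polylogarithmic size and $H$ is the induced substructure $\restrict{\struc}{C}$; since $m\cdot\log p = \polylog(n)$, the whole instance has polylogarithmic bit-length. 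Lemma~\ref{theorem:smallstructure-foll}(a) then turns this $\NC^2$ computation into an \FOLL\ formula evaluated over the whole structure, and since all remaining steps are \FO, reachability is maintained in \DynFOLL.

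The main obstacle is the algebraic bookkeeping inherited from \cite{DattaKMSZ18,Datta0VZ18}: one must choose the maintained rank and inverse data so that a rank-$\bigO(m)$ perturbation really does reduce, exactly and in a first-order-definable way, to an $\bigO(m)\times\bigO(m)$ problem, and one must verify that the auxiliary data itself, not merely the query answer, can be refreshed after the batch change by the same small computation. The genuinely new observation, and the reason \DynFOLL\ accommodates $\polylog(n)$ changes where the \DynFO\ argument of \cite{Datta0VZ18} is restricted to $\bigO(\frac{\log n}{\log\log n})$, is that depth $\bigO(\log\log n)$ is exactly enough to run the $\NC^2$ linear algebra on a polylogarithmic-size structure, whereas constant depth forces the structure, and thus the number of changes, to be much smaller.
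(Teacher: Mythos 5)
Your proposal is correct and follows essentially the same route as the paper: adapt the algebraic \DynFO program of \cite{Datta0VZ18}, observe that a batch of $\polylog(n)$ edge changes is a low-rank perturbation handled by the Sherman--Morrison--Woodbury identity, and note that the only non-\FO step is the linear algebra on the resulting $\polylog(n)\times\polylog(n)$ matrix over a small prime field, which Lemma~\ref{theorem:smallstructure-foll}(a) places in \FOLL since it is an $\NC^2$ computation on a polylogarithmic-size structure. The only cosmetic difference is that the paper phrases the small computation as a determinant (citing \cite{ADKSVV23}, whose proof is itself this small-structure argument applied to a $\TCo$ determinant algorithm), whereas you phrase it as a rank computation, but both bottom out in the same lemma.
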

A straightforward adaption of the proof, using Lemma \ref{theorem:smallstructure-foll}(a), yields that reachability can be maintained under $\polylog(n)$ changes in \DynFOLL.

The proof strategy from \cite{Datta0VZ18} is as follows. First, reachability in a directed graph $G$ with $n$ nodes is reduced to the question whether an entry of a matrix inverse $A^{-1}$ is non-zero, for some $n \times n$ matrix $A$ that is related to the adjacency matrix of $G$. Edge changes that affect $k$ nodes translate to changing the matrix $A$ to $A + UBV$, for some matrices $U,B,V$, where $B$ has dimension $k \times k$, $U$ has dimension $n \times k$ and contains at most $k$ non-zero rows, and  $V$ has dimension $k \times n$ and contains at most $k$ non-zero columns. The update of the matrix inverse is computed using the Sherman-Morrison-Woodbury identity
\begin{equation*}
(A+UBV)^{-1} =A^{-1} -A^{-1}U(I+BVA^{-1}U)^{-1}BVA^{-1}.
\end{equation*}
Then it is argued that it is sufficient to evaluate the above expression modulo $\bigO(\log n)$ bit primes, so we can assume that all matrix entries are $\bigO(\log n)$ bit numbers. As the addition of $\polylog(n)$ many $\polylog(n)$ bit numbers is in \FOar, see \cite[Theorem 5.1]{HAB}, the matrix multiplications in the above expression can be expressed in \FOar, as long as $k$ is polylogarithmic in $n$. 
For evaluating the $k \times k$ matrix inverse $(I+BVA^{-1}U)^{-1}$, one only needs to be able to compute the determinant of a $k \times k$ matrix, as entries of matrix inverses can be expressed using the determinant of the matrix and of a submatrix. Thus the result follows from the following lemma.

\begin{lemma}[{\cite[Theorem 8]{Datta0VZ18}}]\label{lem:determinant}
Fix a domain of size $n$ and a $\bigO(\log n)$ bit prime $p$. The value of the 
determinant of a matrix $C \in \mathbb{Z}_p^{k\times k}$ for 
$k = \bigO(\frac{\log n}{\log \log n})$ can be defined in $\FOar$.
\end{lemma}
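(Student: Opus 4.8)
The plan is to read the determinant off its Leibniz expansion
\[
\det(C) \equiv \sum_{\sigma \in S_k} \mathrm{sgn}(\sigma) \prod_{i=1}^{k} C_{i,\sigma(i)} \pmod p ,
\]
and to exploit that the hypothesis $k = \bigO(\frac{\log n}{\log \log n})$ is precisely the regime in which $k!$ is polynomially bounded: since $\log k = \Theta(\log\log n)$ we have $\log(k!) = \Theta(k\log k) = \Theta(\log n)$, hence $k! = n^{\bigO(1)}$. Consequently a permutation $\sigma$ of $\{1,\dots,k\}$ can be encoded by a constant number of domain elements, and $\FOar$ can quantify over \emph{all} permutations. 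First I would fix an $\FOar$-definable encoding of permutations, together with formulas that recover $\sigma(i)$ from the encoding and verify that a given code is a genuine bijection.

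For a single permutation, both ingredients of its contribution are $\FOar$-computable. The product $\prod_{i} C_{i,\sigma(i)} \bmod p$ is an iterated multiplication of $k = \polylog(n)$ field elements, each an $\bigO(\log n)$-bit number, and the reduction modulo the $\bigO(\log n)$-bit prime $p$ is likewise $\FOar$-definable; both fall under the iterated-arithmetic capability recorded in the preliminaries (\cite[Theorem 5.1]{HAB}). The sign $\mathrm{sgn}(\sigma)$ is the parity of the number of inversions of $\sigma$, that is, the parity of at most $k^2 = \polylog(n)$ bits, which is again within $\FOar$.

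The hard part will be the final summation, and I expect this to be the main obstacle. It ranges over $k! = n^{\bigO(1)}$ signed products, and a flat iterated addition of polynomially many summands is \emph{not} available in $\FOar$: it subsumes computing the parity (or $\mathrm{MOD}_p$) of polynomially many bits, which escapes \ACz by H\aa{}stad's lower bound. The entire content of the lemma is therefore to realise this specific sum in constant depth. The natural routes each sacrifice one resource for another — organising the sum as a balanced tree of narrow iterated additions keeps every addition over only $\polylog(n)$ operands but incurs depth $\Theta(\frac{\log n}{\log\log n})$, whereas the algebraic alternatives (power sums with Newton's identities, or the Samuelson--Berkowitz recurrence) use only $\mathrm{poly}(k) = \polylog(n)$ field operations but are parallel only to depth $\bigO((\log\log n)^2)$, which would land in $\FOLL$ rather than $\FOar$. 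Reconciling constant depth with the polynomially many contributions is exactly where the precise bound $k = \bigO(\frac{\log n}{\log\log n})$ — equivalently $k! = n^{\bigO(1)}$, enabling genuine brute-force enumeration over permutations in place of summation — must be brought to bear; pinning down this collapse is the crux that \cite[Theorem 8]{Datta0VZ18} settles.
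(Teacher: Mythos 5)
Your proposal is not a proof of the lemma: it sets up the Leibniz expansion, correctly verifies that each individual summand is \FOar-definable, and then stops exactly at the step that \emph{is} the lemma, namely computing $\sum_{\sigma \in S_k}\mathrm{sgn}(\sigma)\prod_{i} C_{i,\sigma(i)} \bmod p$ in constant depth. You concede this yourself when you write that realising the sum is ``the crux that \cite[Theorem 8]{Datta0VZ18} settles'' --- but the statement under review is precisely that theorem, so deferring the missing step to the citation makes the argument circular. The closing suggestion that $k! = n^{\bigO(1)}$ enables ``genuine brute-force enumeration over permutations in place of summation'' is not an argument: enumeration does not substitute for summation. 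Being able to \FOar-define every single term $t_\sigma$ (which you do correctly, via an encoding of permutations by constantly many domain elements, iterated multiplication of $\polylog(n)$-bit numbers \cite{HAB}, and inversion-parity over $k^2 = \polylog(n)$ pairs) does not yield $\sum_\sigma t_\sigma \bmod p$, because exact summation over polynomially many \FO-definable values is exactly the counting operation that \FOar lacks: in general \FOar can count exactly only definable sets of polylogarithmic size, and for polynomial-size definable sets only approximate counting is available (cf.\ the claim the paper quotes from \cite[Lemma 4.1]{ChaudhuriR96}). Your own H\aa{}stad-based analysis in the preceding paragraph says as much, so the proposal refutes its own final sentence.

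To be clear about what you did get right: the observation $k! = n^{\bigO(1)}$, the per-term \FOar-computability, and the diagnosis of why the generic routes fail are all sound --- a balanced summation tree with $\polylog(n)$-ary additions indeed costs depth $\Theta(\frac{\log n}{\log \log n})$, and Berkowitz/Newton-style algorithms indeed give only $\bigO((\log\log n)^2)$ fan-in-two arithmetic depth, i.e.\ an \FOLL-type bound after depth compression; this is exactly why the paper's own variant for $k = \polylog(n)$ (Lemma~\ref{lem:ADKSVV}, via Lemma~\ref{theorem:smallstructure-foll}(a)) lands in \FOLL rather than \FOar. But diagnosing why the obvious approaches fail is not the same as exhibiting the determinant-specific argument that succeeds. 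Note also that the paper under review never proves Lemma~\ref{lem:determinant} itself; it imports it wholesale from \cite[Theorem 8]{Datta0VZ18}, so the benchmark is the proof in that reference, and your attempt reproduces only its routine setup, not its resolution of the summation step. As it stands, the entire content of the lemma remains open in your proposal.
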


To prove Theorem \ref{thm:reachMain}, we only need to substitute Lemma~\ref{lem:determinant} by the following lemma from~\cite{ADKSVV23}.
\begin{lemma}[{\cite[Lemma~16]{ADKSVV23}}]\label{lem:ADKSVV}
Fix a domain of size $n$ and a $\bigO(\log n)$ bit prime $p$. The value of the 
determinant of a matrix $C \in \mathbb{Z}_p^{k\times k}$ for 
$k = \polylog(n)$ can be computed in $\FOLL$.
\end{lemma}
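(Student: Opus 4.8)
The plan is to reduce the determinant to an iterated product of polylogarithmically many small matrices over $\mathbb{Z}_p$, and then to observe that such a product can be computed by iterating an \FOar-formula $\bigO(\log \log n)$ times. Concretely, I would invoke a division-free parallel algorithm for the determinant (for instance Berkowitz's algorithm), which expresses the characteristic polynomial of a $k \times k$ matrix $C$ over any commutative ring as a product $M_1 M_2 \cdots M_k$ of $k$ matrices, each of dimension at most $\bigO(k) \times \bigO(k)$, whose entries are entries of $C$ (or the constants $0$ and $1$). The determinant is, up to sign, one designated entry of this product. Since this construction is division-free it is valid over $\mathbb{Z}_p$ for every prime $p$, so no issues with division by small numbers arise; and the matrices $M_i$ are \FO-definable from $C$ given the indices, which takes care of uniformity.

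The core step is that the iterated product $M_1 \cdots M_k$ of $k = \polylog(n)$ matrices of dimension $\polylog(n)$ over $\mathbb{Z}_p$ lies in \FOLL. I would compute it by balanced binary grouping: in round $t$ the program holds the products of every block of $2^{t-1}$ consecutive factors and, by multiplying adjacent blocks, obtains the products of blocks of size $2^t$. After $\lceil \log k \rceil = \bigO(\log \log n)$ rounds the full product is available. Each round is a single \FOar-step, since one entry of a product of two $\bigO(k) \times \bigO(k)$ matrices is an iterated sum of $\polylog(n)$ terms $A_{il} B_{lj}$, each a product of two $\bigO(\log n)$-bit numbers reduced modulo $p$. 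All such operations — multiplication of two $\bigO(\log n)$-bit numbers, reduction modulo a $\bigO(\log n)$-bit prime, and iterated addition of $\polylog(n)$ numbers of $\polylog(n)$ bits — are expressible in \FOar by \cite[Theorem 5.1]{HAB}. Reducing modulo $p$ after every round keeps all intermediate entries at $\bigO(\log n)$ bits, so no bit-length blow-up occurs. Hence the whole computation applies one fixed \FOar-formula $\bigO(\log \log n)$ times, which is exactly $\Ind[\log \log n] = \FOLL$.

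An alternative route, closer to the machinery already set up in the paper, is to encode $C$ as a structure on a domain of polylogarithmic size (indexing the $\bigO(\log n)$ bits of the $k^2$ entries) and to appeal to the fact that the determinant of a matrix with a $\polylog(n)$-bit encoding is computable by a uniform Boolean $\NC^2$ circuit of polylogarithmic size and depth $\bigO((\log \log n)^2)$. Applying the \NC-to-\AC speedup used in the proof of \cref{theorem:smallstructure-foll}(a) — which turns a fan-in-$2$ circuit of depth $f$ into an \AC-circuit of depth $\bigO(f / \log \log n)$ and polynomial size — then yields depth $\bigO(\log \log n)$, i.e.\ \FOLL.

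The main obstacle, in either route, is the careful depth accounting. In the iterated-product route one must check that each multiplication round really is a single \FOar-step despite the modular arithmetic on $\bigO(\log n)$-bit entries, which hinges on \cite[Theorem 5.1]{HAB}; in the circuit route one must verify that expanding the arithmetic gates into Boolean subcircuits does not push the $\NC$-depth beyond $\bigO((\log \log n)^2)$ before the speedup is applied, and that division-freeness is respected so the argument works uniformly for every $\bigO(\log n)$-bit prime $p$.
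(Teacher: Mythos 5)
Your second (``alternative'') route is exactly the paper's proof: the paper observes that the determinant of an integer matrix is computable in uniform $\TCo \subseteq \NC^2$ (Mahajan--Vinay, \cite{MV97}) and then applies \cref{theorem:smallstructure-foll}(a), whose proof is precisely the $\NC$-to-$\AC$ depth speedup you describe. Your first route --- reducing the determinant division-freely to an iterated product of $\polylog(n)$ matrices of polylogarithmic dimension over $\mathbb{Z}_p$ and computing that product by balanced doubling in $\bigO(\log \log n)$ rounds of a single \FOar formula --- is a genuinely different, more self-contained argument. Its depth and bit-length accounting is correct: reducing modulo $p$ after every round keeps entries at $\bigO(\log n)$ bits, each pairwise product entry is an iterated sum of polylogarithmically many small numbers, which is \FOar by \cite[Theorem 5.1]{HAB}, and $\log k = \bigO(\log\log n)$ rounds suffice. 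What this buys is a proof that needs neither the generic small-structure lemma nor the circuit-layer collapse; what the paper's route buys is brevity and reuse of machinery it needs elsewhere anyway.

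There is, however, one concrete error in the first route: Berkowitz's algorithm does not have the form you claim. Its iterated product is over Toeplitz matrices whose entries are inner products $R_i M_i^j S_i$ involving powers of principal submatrices of $C$ --- polynomials of degree $j+2$ in the entries of $C$, not entries of $C$ or the constants $0$ and $1$ --- so these entries require their own computation before the product can even be formed. The construction that does express the determinant (up to sign, which over $\mathbb{Z}_p$ is absorbed as $p - c_{ij}$) as a single entry of a product of matrices whose entries are entries of $C$, their negations, or constants is Mahajan--Vinay's clow-sequence construction, i.e.\ the very reference \cite{MV97} the paper cites, at the price of dimension $\bigO(k^2) \times \bigO(k^2)$, which is still polylogarithmic and hence harmless for your doubling argument. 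Alternatively, you can keep Berkowitz and prepend a preprocessing stage that computes all powers $M_i^j$ by repeated squaring, which is again $\bigO(\log\log n)$ \FOar rounds by your own technique. Either fix is minor, and the rest of your proof stands.
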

This lemma follows from the fact that the determinant of an $n \times n$ times integer matrix with $n$ bit entries can be computed in uniform $\TCo$, cf.~\cite{MV97}, and the fact that \FOLL can express all $\TCo \subseteq \NC^2$ queries on substructures of polylogarithmic size due to Lemma \ref{theorem:smallstructure-foll}(a). 

\subsection{Minimum Spanning Forest}
In this section, we consider the Minimum Spanning Forest (MSF) problem and show Theorem~\ref{thm:small-structure-further}(a). 
The goal is to maintain the edges of a minimum spanning forest of a given weighted undirected graph, where the edge weights are numbers represented by polynomially many bits with respect to the size of the graph. It is known that the problem is in $\DynFO$ under single edge changes~\cite{PatnaikI97}. To maintain reachability in undirected graphs, the dynamic programs maintain a spanning forest \cite{PatnaikI97, DattaKMTVZ20}, but not necessarily a minimum one.

\begin{lemma}[Restatement of Theorem~\ref{thm:small-structure-further}(a)]\label{thm:MST}
    A minimum spanning forest for weighted graphs can be maintained
  \begin{enumerate}[(i)]
  \item in \DynFO under changes of $\polylog(n)$ edges, and
  \item in \DynFOLL under changes of $\superpolylog$ edges.
\end{enumerate}   
\end{lemma}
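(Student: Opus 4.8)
The plan is to apply the small-structure technique of \cref{section:small-structures}, maintaining a minimum spanning forest $F$ together with navigational information about $F$ (its connectivity, and for every two nodes in a common tree of $F$ the heaviest edge lying on their tree path) that is needed to build a small weighted instance after each batch of changes. The one computational ingredient is that the minimum-spanning-forest query is uniform $\SAC^1$-computable: by the cut property, an edge $e=(u,v)$ belongs to the forest (fixing a tie-break by edge index) if and only if $u$ and $v$ are \emph{not} connected using only edges that are strictly lighter than $e$, which is an undirected connectivity query; since undirected reachability lies in $\NL \subseteq \SAC^1$ and this test is run for all edges in parallel, the whole query is in $\SAC^1$. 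Plugging this into \cref{theorem:smallstructure-fo} lets an \FOar formula evaluate the query on a definable substructure of size $\polylog(n)$, which will yield part~(i); plugging it into \cref{theorem:smallstructure-foll}(b) lets an \FOLL formula do the same on a substructure of size $\superpolylog$, which will yield part~(ii).

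To construct the small instance after changing a set of $m$ edges, I would let the \emph{terminals} be the endpoints of the changed edges, so $O(m)$ many, and let $C$ consist of the terminals together with the branch nodes (pairwise least common ancestors) of the terminals in $F$; a standard forest argument bounds $|C|$ by $O(m)$. Using the maintained path-max information I would first-order define a weighted multigraph on $C$: for each maximal $F$-path connecting two nodes of $C$ without passing through a third, add an edge weighted by the heaviest edge on that path, but omit the edge if a deleted tree edge lies on the path (so that broken paths disconnect); and add every inserted edge with its real weight between its endpoints. This multigraph is the induced substructure of an auxiliary structure that is first-order definable from $G$ and the current auxiliary data, so the small-structure lemmas apply and its own minimum spanning forest can be computed, in \FOar respectively \FOLL. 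By the cut and cycle properties this local forest encodes exactly how $F$ changes: the new forest $F'$ arises from $F$ by deleting the tree edges recorded by the suppressed-path edges that the local forest drops (each such drop points, via path-max, to the single heaviest edge to remove) and by adding the inserted and replacement edges that the local forest keeps. Since changing $m$ edges alters the min-weight basis of the graphic matroid by $O(m)$ edges (matroid sensitivity), $|F \triangle F'| = O(m)$, so the translation back to $F$ stays within the available budget.

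The main obstacle is that the navigational auxiliary data the construction relies on — in particular the path-max (bottleneck) relation on $F$ — is not obviously maintainable on its own: bottleneck distances in the forest determine, and are determined by, the minimum spanning forest, so this information is exactly as hard to maintain as the query itself and cannot be assumed for free. The resolution is to maintain it \emph{together} with $F$ inside the same dynamic program: the same small weighted instance over $C$ also produces the new bottleneck values among the nodes of $C$, and for every pair of nodes whose $F$-path avoids the changed region the old path-max value remains correct, so the relation can be rebuilt from its previous version plus the $\SAC^1$ computation on $C$. Carrying this out — fixing precisely which forest-navigation relations to keep (path-max together with an ``edge-on-tree-path'' relation suffice), checking that each can be re-derived after a batch from its old value and the local computation, and proving via the cut and cycle properties that the local forest over $C$ faithfully records $F \triangle F'$ — is the technical heart of the argument. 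The size accounting is then routine: $m = \polylog(n)$ gives $|C| = \polylog(n)$ and hence \DynFO by \cref{theorem:smallstructure-fo}, while $m = \superpolylog$ gives $|C| = \superpolylog$ and hence \DynFOLL by \cref{theorem:smallstructure-foll}(b).
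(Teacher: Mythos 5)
Your proposal has a genuine gap in the deletion case: the small structure you build cannot find \emph{replacement edges}. When a tree edge of $F$ is deleted, the new minimum spanning forest must in general pick up the lightest non-tree edge of $G$ crossing the cut that the deletion creates. Your contracted multigraph on $C$ contains only (i) contracted $F$-paths and (ii) the newly inserted edges; the candidate replacement edges are non-tree edges of the \emph{old} graph, there can be polynomially many of them, and they live nowhere near the changed region, so they appear neither in $C$ nor as edges of the small instance. Consequently the local $\SAC^1$ computation will simply report the two sides of a severed path as disconnected and your reconstructed $F'$ will miss edges that belong to the true minimum spanning forest. Knowing that $|F \mathbin{\triangle} F'| = \bigO(m)$ by matroid sensitivity does not help, because the problem is not the size of the difference but identifying \emph{which} of the polynomially many non-tree edges enter. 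Fixing this would require maintaining, for relevant cuts, minimum-weight non-tree crossing edges --- which is essentially the full dynamic-MSF difficulty and is not addressed by your path-max bookkeeping (your ``resolution'' paragraph only handles maintaining path-max itself, and even that breaks once replacement edges are unknown). A secondary, fixable issue: edge weights have polynomially many bits, so they cannot be carried verbatim into a substructure of polylogarithmic size; one would have to replace them by ranks among the polylogarithmically many relevant weights.

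The paper avoids this trap entirely by using a Kruskal-style characterization rather than a local contraction argument. It maintains a sorted list $L$ of all edges by weight (updating positions after a batch needs only counting over the $|\Delta E|$ changed edges, which is in $\TCz \subseteq \SAC^1$ and hence handled by \cref{theorem:smallstructure-fo}, respectively \cref{theorem:smallstructure-foll}(b)), and for every prefix graph $G_i$ consisting of the $i$ lightest edges it maintains undirected reachability; each $G_i$ changes by at most $|\Delta E|$ edges per batch, so this is possible in \DynFO for $\polylog(n)$ changes \cite{DattaKMTVZ20} and in \DynFOLL for $\superpolylog$ changes by Theorem~\ref{thm:undirected}(a). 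An edge at position $i$ is then in the forest if and only if its endpoints are disconnected in $G_{i-1}$ --- replacement edges are found implicitly by these global reachability relations, which is exactly the information your local structure lacks.
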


Our approach is based on Prim's algorithm: for a graph $G = (V,E)$, we maintain an ordered list $e_1, \ldots, e_m$ of the edges of the graph, sorted in increasing order on the weight of the edges. For each index $i$, for $i \leq m$, we maintain the number of connected components of the graph $G_i$ over vertex set $V$ and edge set $E_i = \{e_1, \ldots, e_i\}$, so, the graph that contains the lightest $i$ edges. The minimum spanning forest of $G$ contains exactly those edges $e_i$ such that the graph $G_i$ has fewer connected components than $G_{i-1}$. 

\begin{proof}
We maintain a relation $L$ that contains a tuple $(i,u,v)$ if the edge $(u,v)$ is at position $i$ in the sorted list of all edges, where ties are broken arbitrarily. When a set $\Delta E$ of edges is inserted or deleted, we have to count for each position $i$ the number of edges that either were at a smaller position $j < i$ and were deleted, or were inserted and have a strictly smaller weight than the edge at position $i$. This offset determines the new position in the list of the edge that was at position $i$ before the change.
A newly inserted edge is inserted at position $j_1 + j_2 + 1$, where $j_1$ is the maximal position of an formerly present edge with weight smaller than the inserted edge and $j_2$ is the number of inserted edges that have a smaller weight or that have the same weight and are lexicographically smaller with respect to the linear order $\leq$ on the domain.
We observe that the necessary counting is restricted to a set of size $|\Delta E|$. As counting is in the complexity class $\TCz \subseteq \SAC^1$, it can be done on the restricted set in \FOar in case of polylogarithmically many edges changes (Lemma~\ref{theorem:smallstructure-fo}) and in \FOLL in case of $\superpolylog$ edge changes (Lemma~\ref{theorem:smallstructure-foll}(b)). Adding the numbers is possible in \FOar.

Let $G_i$ be the undirected graph that consists of the first $i$ edges of the list $L$, for each $i \leq |E|$. After each change $\Delta E$, the graph $G_i$ only changes by at most $|\Delta E|$ edges, so reachability in $G_i$ can be maintained in \DynFO and \DynFOLL, respectively, see \cite{DattaKMTVZ20} and Theorem~\ref{thm:undirected}(a). 
The edge $(u,v)$ at position $i$ in the list $L$ is part of the spanning forest if and only if the nodes $u$ and $v$ are in different connected components in the graph $G_{i-1}$.
\end{proof}

 \subsection{Maximal Matching}
In this section, we consider the Maximal Matching problem: the problem of finding a matching in a graph that is not a proper subset of another matching. Maximal Matching is in \DynFO under single-edge changes \cite{PatnaikI97}. We show here that a maximal matching can be maintained under $\polylog(n)$ changes in $\DynFOLL$. 

\begin{lemma}[Restatement of Theorem~\ref{thm:small-structure-further}(b)]\label{thm:maximal}
    A maximal matching can be maintained in \DynFOLL under changes of $\polylog(n)$ edges.
\end{lemma}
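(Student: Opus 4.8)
The plan is to reuse the single-edge repair strategy behind the \DynFO{} result of \cite{PatnaikI97} and to batch it with the small-structure technique of \cref{theorem:smallstructure-foll}(a). We maintain a maximal matching $M$ of the current graph as the answer relation, together with the induced free set $F \df V \setminus V(M)$ and (see below) an auxiliary counting relation. Consider a batch of $m \df (\log n)^c$ edge changes $\Delta E$, split into an inserted part $\Delta E_{\ins}$ and a deleted part $\Delta E_{\del}$, turning $G=(V,E)$ into $G'=(V,E')$. First I would keep the surviving matching $M_0 \df M \cap E'$, which is still a valid matching of $G'$. The new free set is $U \df V \setminus V(M_0) = F \cup N$, where $N$ is the set of endpoints of deleted matching edges; note $|N| \le 2m$. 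Writing $H \df G'[U]$ for the graph induced on the free vertices, $M_0 \cup M_H$ is a maximal matching of $G'$ for \emph{any} maximal matching $M_H$ of $H$: any edge of $G'$ with both endpoints free is an edge of $H$, hence covered by $M_H$. So it suffices to compute a maximal matching of $H$ in \FOLL.

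The key structural observation is that $H$ has a small vertex cover. If an edge $(a,b)$ of $H$ is an old edge ($\in E$) and neither endpoint lies in $N$, then both $a,b$ were already free in $M$; since $M$ was maximal, $G$ contained no such edge, a contradiction. Thus every edge of $H$ is incident to the set $K \df N \cup (V(\Delta E_{\ins}) \cap U)$ of size $O(m)$, i.e.\ $K$ is a vertex cover of $H$ and $U \setminus K$ is independent. I would then reduce the computation of $M_H$ to a substructure of size $\polylog(n)$: for each $k \in K$ collect its $2|K|+1$ smallest free neighbours in $G'$ (or all of them, if $k$ has fewer), let $R^*$ be their union, and set $V_S \df K \cup R^*$, which has size $O(m^2) = \polylog(n)$. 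A maximal matching $M_S$ of $H[V_S]$ — computable in uniform $\NC^2$, e.g.\ via a deterministic maximal-independent-set algorithm on the line graph, hence \FOLL-definable on the substructure $V_S$ by \cref{theorem:smallstructure-foll}(a) — is already maximal in $H$: since $K$ is a vertex cover of $H[V_S]$, at most $2|K|$ vertices are matched by $M_S$, so a vertex $k \in K$ left unmatched by $M_S$ cannot have all of its $2|K|+1$ selected neighbours matched, meaning all of $k$'s free neighbours already lie in $V_S$ and are matched. As every edge of $H$ touches $K$, no augmenting edge remains.

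The main obstacle is the step that selects, for each $k \in K$, its smallest $O(m)$ free neighbours: a vertex may have up to $n$ neighbours, so ranking them requires counting over a set of non-constant size, which is \emph{not} \FOar-expressible (and not even \FOLL-expressible, by the parity lower bound discussed in the introduction). I would resolve this by maintaining, as additional \DynFO{} auxiliary data, the relation $\textsc{below}(v,u)$ storing the number of free neighbours $w$ of $v$ with $w < u$. This relation is maintainable under batch changes: both the edge set and the free set change by only $\polylog(n)$ elements per batch — the free set changes by $O(m)$ vertices because $M$ and $M_0 \cup M_S$ differ by $O(m)$ edges — so the change to each count $\textsc{below}(v,u)$ is a sum of $\pm 1$ over the $O(m)$ vertices whose adjacency-to-$v$ or free-status flipped, which is \FOar-computable, and adding this polylogarithmic offset to the stored $\bigO(\log n)$-bit value is again \FOar. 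From $\textsc{below}$ one reads off the rank of each free neighbour, so ``$u$ is among the $2|K|+1$ smallest free neighbours of $k$'' becomes the \FO{} condition $\textsc{below}(k,u) \le 2|K|$ (after locally discarding the $\le m$ deleted edges and merging in the $O(m)$ neighbours coming from $N$ and $\Delta E_{\ins}$, all over polylogarithmic sets). With $\textsc{below}$ available, the construction of $V_S$ is \FO, the single $\NC^2$ call is \FOLL by \cref{theorem:smallstructure-foll}(a), and updating $M$, $F$ and $\textsc{below}$ afterwards is again \FO, so the whole program runs in \DynFOLL.
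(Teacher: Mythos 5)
Your proof is correct and, at its core, it is the paper's proof: keep the surviving matching, observe that it suffices to find a maximal matching of the graph induced on the free vertices, restrict that graph to a polylogarithmic-size substructure consisting of the affected vertices plus polylogarithmically many free neighbours of each, compute a maximal matching of the substructure in $\NC^2$ (hence in \FOLL by Lemma~\ref{theorem:smallstructure-foll}(a)), and prove maximality by the pigeonhole argument that the small matching can cover at most $2|K|$ vertices. The paper treats insertion batches and deletion batches separately, whereas your vertex cover $K$ handles mixed batches uniformly; that difference is cosmetic.

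The one genuine deviation is your auxiliary relation $\textsc{below}$, and the claim you use to motivate it is false. Deciding whether $u$ is among the $2|K|+1$ smallest free neighbours of $k$ does not require ranking within a neighbourhood of size up to $n$; it only requires a threshold count at a polylogarithmic threshold, and this \emph{is} \FOar-expressible: for polylogarithmic $t$, the condition $|T| < t$ holds iff there is a prime $p$ of polylogarithmic magnitude such that $x \mapsto x \bmod p$ is injective on $T$ (an \FO condition) and the image, a subset of $\{0, \ldots, p-1\}$, has fewer than $t$ elements, which reduces to iterated addition of polylogarithmically many bits \cite{HAB}. The parity lower bound you invoke is not relevant here: it rules out counting in the middle of the range, while thresholds within polylogarithmic distance of $0$ are exactly what constant-depth circuits can handle. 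Indeed, the paper's own proof performs this selection step directly in \FO (``among the $(\log n)^c$ first such neighbours of $v$ in $G'$ with respect to the linear order'') with no extra auxiliary data. Your workaround is nonetheless sound: the counts change only by corrections over polylog-size definable sets per batch, the new matching is available (in \FOLL) before the counts are updated, and the composition stays in \DynFOLL; it merely adds an auxiliary relation, and a maintenance obligation, that you do not need.
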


\begin{proof}
Let $G = (V,E)$ be the current graph and let $M \subseteq E$ be the maintained maximal matching, that is, a set of edges such that (1) each node $v \in V$ is incident to at most one edge from $M$ and (2) no further edge can be added to $M$ without violating this property. We call a node \emph{matched} if it is incident to some edge from $M$ and \emph{unmatched} otherwise.

When a set $\Delta E$ of polylogarithmically many edges is inserted into $G$, we only need to select a maximal subset of these edges that can be added to the maintained matching $M$. So, let $G_I$ be the graph that consists of all unmatched nodes of $G$ with respect to $M$ that are incident to some edge from $\Delta E$. 
As the size of $\Delta E$ is polylogarithmically bounded, so is the size of $G_I$. It follows from Lemma~\ref{theorem:smallstructure-foll}(a) and the fact that a maximal matching can be computed in $\NC^2$~\cite{Luby86} that we can compute a maximal matching $M_I$ of $G_I$ in \FOLL. The set $M \cup M_I$ is a maximal matching for the changed graph.

When a set $\Delta E$ of polylogarithmically many edges is deleted from $G$, we have to select a maximal set of existing edges that can replace deleted edges from $M$. Let $G'$ be the graph after the deletion.
We define a set $V_D$ of nodes as follows. First, let $V_1$ be the set of nodes that are incident to a deleted matching edge, so, to an edge in $M \cap \Delta E$. The number of these nodes is bounded by $(\log n)^c$, for some number $c$, as the size of $\Delta E$ is polylogarithmically bounded. For all nodes $v \in V_1$, we collect in the set $V_2$ all nodes $w$ such that $(v,w)$ is an edge in $G'$, $w$ is unmatched with respect to the matching $M \setminus \Delta E$ and among the $(\log n)^c$ first such neighbours of $v$ in $G'$ with respect to the linear order on the domain. We set $V_D = V_1 \cup V_2$.
The size of $V_D$ is bounded by $(\log n)^{2c}$, so, is polylogarithmic. Let $G_D$ be the subgraph of $G'$ that is induced by $V_D$.
Again, following Lemma~\ref{theorem:smallstructure-foll}(a), we can compute a maximal matching $M_D$ of $G_D$ in \FOLL. 

We argue that $M' = (M \setminus \Delta E) \cup M_D$ is a maximal matching of $G'$. It is clearly a matching, as all nodes of $G_D$ are unmatched with respect to $(M \setminus \Delta E)$. 
It is also maximal: suppose some edge $(u,v)$ could be added to $M'$ without violating the property of $M'$ being a matching. At least one of the two nodes is incident to an edge from $M \cap \Delta E$, as otherwise $M \cup \{(u,v)\}$ would be a matching of $G$, violating the assumption that $M$ was maximal. Suppose this node is $u$. So, $u$ is in $V_1 \subseteq V_D$. If also $v \in V_D$, then the edge $(u,v)$ is in $G_D$ and $M_D \cup \{(u,v)\}$ would be a matching of $G_D$, violating the assumption that $M_D$ is a maximal matching of $G_D$. If $v \not\in V_D$, that set contains $(\log n)^c$ other neighbours of $u$ that are unmatched with respect to $M \setminus \Delta E$. They cannot be all matched with other nodes by $M_D$, as they only have edges to nodes from $V_1$ (otherwise, those edges could be added to the original matching $M$) and there are only $(\log n)^c - 1$ such nodes. So, $M'$ would not be a maximal matching for $G_D$.
\end{proof}

\subsection{\texorpdfstring{$(\delta+1)$-colouring}{(δ+1)-colouring}}
A \emph{$c$-colouring} of a graph $G = (V,E)$ is a function $\text{col}: V\to C$, which assigns colours from a set $C$ with $|C| = c$ to vertices of $G$. Such a colouring is called \emph{proper} if $c(u)\neq c(v)$ for all $u, v \in V$ with $(u,v) \in E$,  i.e. no two neighbours are coloured the same. The set $C$ is called the \emph{palette} of the colouring. 

It is known that for a graph whose maximum degree is bounded by a constant $\delta$, a proper $(\delta+1)$-colouring can be computed in $\log^* n$ time on a $\mathsf{EREW}$ PRAM with $\bigO(n)$ many processors~\cite{GoldbergP87}. In particular, the problem can be solved in \FOLL.

In this section, we show that in the dynamic setting a proper $(\delta+1)$-colouring can be maintained in $\DynFO$ under changes of polylogarithmic size.  
\begin{lemma}[Restatement of Theorem~\ref{thm:small-structure-further}(c)]\label{lem:deltaColFO}
	For graphs with maximum degree bounded by a constant $\delta$, a proper $(\delta+1)$-colouring can be maintained in $\DynFO$ under changes of $\polylog(n)$ edges.
\end{lemma}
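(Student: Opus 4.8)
The plan is to follow the small-structure recipe of \cref{section:small-structures}: after a batch change, build a substructure of polylogarithmic size on the vertices touched by the change, recolour it with an $\SAC^1$ computation, and stitch the result together with the unchanged part of the colouring. Deletions are the easy case: removing edges cannot turn a proper colouring into an improper one, so the maintained colouring can be left untouched. I would therefore concentrate on a batch $\Delta E$ of $m = \polylog(n)$ inserted edges. Since the graph stays of maximum degree $\delta$, the only new monochromatic edges lie inside the set $S$ of the at most $2m$ endpoints of $\Delta E$, and it suffices to recolour $S$ while every vertex outside $S$ keeps its old colour.

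The recolouring of $S$ is a list-colouring problem on the induced graph $G'[S]$ in the changed graph $G'$: a vertex $v\in S$ must receive a colour from $L(v) \df \{1,\dots,\delta+1\}\setminus\{\,\text{col}(w)\mid w\in N_{G'}(v)\setminus S\,\}$, so as to avoid the fixed colours of its boundary neighbours (whose incident edges were not changed, so the old colouring stays proper across the cut between $S$ and $V\setminus S$). Because the total degree of $v$ is at most $\delta$, we have $|L(v)| \ge (\delta+1) - |N_{G'}(v)\setminus S| \ge \deg_{G'[S]}(v) + 1$, i.e. this is a degree-plus-one list-colouring instance, which always admits a solution. I would make the list data available to the subroutine by enriching the schema with unary relations $\mathrm{Allowed}_c$ marking the admissible colours of each vertex; these are \FO-definable on the full structure from the old colouring, and their restriction to $C \df S$ is exactly the extra input that the small structure should carry.

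It then remains to compute such a list-colouring of a bounded-degree graph in $\SAC^1$, so that \cref{theorem:smallstructure-fo} turns it into a constant-depth \FOar update and yields membership in $\DynFO$ rather than only $\DynFOLL$. This is the \textbf{main obstacle}: the textbook parallel colouring algorithms (Luby's $\bigO(\log m)$-round method, or the $\bigO(\log^\ast m)$-round Cole--Vishkin colour reduction underlying \cite{GoldbergP87}) only certify membership in $\NC^2$, which through \cref{theorem:smallstructure-foll}(a) would give merely $\DynFOLL$. To get into $\SAC^1$ I would avoid symmetry breaking altogether and exploit the linear order: orient every edge from its smaller to its larger endpoint and label each out-edge of a vertex by its rank among that vertex's out-edges. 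Each label class is then a forest, since the minimum vertex of any alleged cycle would have two outgoing edges of the same label; this yields an \FO-computable decomposition of $G'[S]$ into $\delta$ forests. Two-colouring each forest is an undirected-reachability computation in the bipartite double cover, hence in $\LOGSPACE \subseteq \SAC^1$, and the coordinatewise product of the $\delta$ forest-colourings is a proper colouring with $2^\delta = \bigO(1)$ colours.

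Finally I would reduce this $\bigO(1)$-colouring to a list-respecting $(\delta+1)$-colouring by $\bigO(1)$ further rounds, each \FO: processing the colour classes in a fixed order, recolour the (independent) vertices of the current class by $v \mapsto \min\bigl(L(v)\setminus\{\,\text{col}(w)\mid w\in N_{G'[S]}(v)\,\}\bigr)$, which is non-empty by the degree-plus-one guarantee and introduces no new conflict. The whole subroutine is thus an $\SAC^1$ query on the enriched size-$\polylog(n)$ structure $C = S$; \cref{theorem:smallstructure-fo} expresses it as an \FOar formula on the full structure, and combining its output on $S$ with the retained colours on $V\setminus S$ gives the new proper colouring in \FO, establishing the $\DynFO$ bound.
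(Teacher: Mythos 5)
Your proposal is correct, and it shares the skeleton of the paper's proof: use the small-structure technique to obtain a proper $2^\delta$-colouring of the polylogarithmic-size subgraph induced by the affected vertices (the paper gets the required \LOGSPACE $\subseteq$ $\SAC^1$ bound by citing \cite[Lemma 6]{DattaKM16}, a fact that your forest-decomposition-along-the-linear-order argument re-derives in a self-contained way), and then reduce to $\delta+1$ colours in constantly many \FO rounds. The genuine difference is in how the recoloured part is stitched to the untouched part. The paper gives the affected vertices a palette \emph{disjoint} from the stored one, so that the union with the old colouring is a proper $(2^\delta+\delta+1)$-colouring of the whole graph, and then runs a global Goldberg--Plotkin-style reduction (\cref{algo:mis,algo:deltaCol}): $\delta+1$ rounds of maximal-independent-set extraction, each round consisting of at most $2^\delta+\delta+1$ \FO-definable steps; as a side effect, vertices far from the change may receive new colours. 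You instead freeze all colours outside the affected set $S$ and phrase the recolouring of $S$ as a degree-plus-one list-colouring problem, with lists excluding the boundary colours, and your reduction is a per-colour-class greedy pass of $2^\delta$ \FO rounds confined to $S$; the count $|L(v)| \ge \deg_{G'[S]}(v)+1$ that makes this work is verified correctly, and your incidental observation that batch deletions require no update at all is also valid (the paper handles deletions by the same recolouring machinery, which is harmless but unnecessary). Both reductions are constantly many \FO rounds, so both establish the \DynFO bound. Your variant buys locality --- vertices not incident to the change never change colour, and correctness across the cut between $S$ and $V \setminus S$ is immediate from the list constraint --- at the price of introducing the list-colouring formulation; the paper's variant avoids lists and reuses off-the-shelf components, at the price of recolouring globally.
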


Suppose the current graph $G = (V, E)$ is proper $(\delta+1)$-coloured by $\text{col}$ and a set $\Delta$ of $\polylog(n)$ edges is changed. Let $A$ be the set of at most $\polylog(n)$ nodes affected by the change. The idea is to (1) compute a proper $(2^\delta+\delta+1)$-colouring for the graph, and then to (2) compute a proper $(\delta+1)$-colouring from the $(2^\delta+\delta+1)$-colouring.

For (1), the proper $(2^\delta+\delta+1)$-colouring is computed by composing a $2^\delta$-colouring for the induced subgraph on the affected vertices (with a disjoint palette from the existing one) with the stored $(\delta+1)$-colouring of the unaffected vertices. To compute the $2^\delta$-colouring of the affected nodes, we use that for graphs with constant degree bound $\delta$, a $2^\delta$ colouring can be computed in $\LOGSPACE$ \cite[Lemma 6]{DattaKM16}. A $2^\delta$ colouring on subgraphs induced by the at most polylogarithmically many affected nodes can thus be defined in $\FOar$ due to Lemma~\ref{theorem:smallstructure-fo}.

For (2), the construction of the $(\delta+1)$-colouring by a first-order formula (\cref{algo:deltaCol}) uses the construction of maximal independent sets from a $k$-colouring by a first-order formula as sub-routine (\cref{algo:mis}).

For constructing maximal independent set, the algorithm of Goldberg and Plotkin~\cite{GoldbergP87} for constructing such sets from a proper $k$-colouring for constant $k$ is used. Denote by $\mathcal{N}[I]$ the set of neighbours of a set $I$ of nodes.

\begin{algorithm}
	\caption{Constructing a maximal independent set for a graph $G = (V, E)$ with proper $k$-colouring provided as $\{X_1,X_2,\ldots,X_k\}$.}
	\label{algo:mis}
	\begin{algorithmic}[1]
		\State $X\gets V$, $I\gets \emptyset$, $\mathcal{X}=\{X_1,X_2,\ldots,X_k\}$.
		\While{$X\neq\emptyset$}
		\State $\text{Pick an arbitrary set }  X_l \text{ from }\mathcal{X}$.
		\State $I\gets I\cup X_l$ 
		\State $X\gets X\setminus\mathcal{N}[I]$
		\State $\mathcal{X} \gets \{X_i\setminus \mathcal{N}[I]| X_i\in\mathcal{X}, X_i\setminus \mathcal{N}[I]\neq \emptyset\}$
		\EndWhile
		\State \Return $I$
	\end{algorithmic}
\end{algorithm}

\begin{proposition}\label{prop:mis}
	\cref{algo:mis} terminates after at most $k$ iterations of the while loop, and the returned set $I$ is a maximal independent set of $G$. Moreover, the steps of the algorithm are $\FOar$ expressible.
\end{proposition}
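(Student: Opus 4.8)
The plan is to establish the three assertions of the proposition—termination within $k$ iterations, correctness of the returned $I$, and $\FOar$-expressibility—by maintaining a single structural invariant across the loop iterations. Throughout the execution I would track that (i) $\mathcal{X}$ is a collection of nonempty subsets of the color classes with $X = \bigcup_{X_i \in \mathcal{X}} X_i$, and (ii) $I$ is an independent set with $X \cap \mathcal{N}[I] = \emptyset$, where $\mathcal{N}[\cdot]$ is read as the \emph{closed} neighbourhood. Both parts hold initially: the proper colouring means $\{X_1,\dots,X_k\}$ partitions $V$ into independent sets, so $X = V = \bigcup_i X_i$, and $I = \emptyset$.

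For the inductive step I would argue as follows. When the algorithm picks $X_l \subseteq X$ and sets $I \gets I \cup X_l$, the set $X_l$ is independent (it is a colour class) and contains no neighbour of the old $I$ (since $X_l \subseteq X$ and the invariant gives $X \cap \mathcal{N}[I] = \emptyset$), so the new $I$ is again independent. Line 5 then removes $\mathcal{N}[I]$ from $X$, restoring $X \cap \mathcal{N}[I] = \emptyset$, and line 6 re-establishes $X = \bigcup \mathcal{X}$. Termination is immediate from the invariant together with the observation that $X_l \subseteq I$ and hence $X_l \subseteq \mathcal{N}[I]$ (the closed neighbourhood $\mathcal{N}[I]$ contains $I$ itself), so $X_l \setminus \mathcal{N}[I] = \emptyset$ and line 6 deletes $X_l$ from $\mathcal{X}$. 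Thus $|\mathcal{X}|$ strictly decreases each iteration, giving at most $k$ iterations, and since $X = \bigcup \mathcal{X}$ the loop halts exactly when $X = \emptyset$.

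Correctness then follows cheaply. Independence of the returned $I$ is part of the invariant. For maximality, note that $X$ only ever shrinks by subtracting $\mathcal{N}[I]$, so when $X = \emptyset$ at termination we must have $V = \mathcal{N}[I]$ for the final $I$; hence every vertex lies in $I$ or is adjacent to some vertex of $I$, and no vertex can be added to $I$ while keeping it independent. Therefore $I$ is a maximal independent set of $G$.

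Finally, for $\FOar$-expressibility the crucial point is that $k$ is a \emph{constant}, so the loop can be unrolled into $k$ fixed rounds. To render the ``arbitrary'' choice of line 3 deterministic and definable, I would always pick the surviving class of smallest index, which is a constant-size case distinction over the $k$ classes. Each line of a round is first-order over $E$ and the current relations: $\mathcal{N}[I]$ is defined by $v \in \mathcal{N}[I] \liff v \in I \lor \exists u\,(u \in I \land E(u,v))$, and the set operations are Boolean combinations. Composing $k$ such first-order steps yields a single $\FOar$ formula; concretely $I = \bigcup_{i=1}^{k} \{v \in X_i : v \notin \mathcal{N}[I^{(i-1)}]\}$, where $I^{(i-1)}$ is the set built from the first $i-1$ surviving classes. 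The only mildly delicate point is fixing the semantics of $\mathcal{N}[\cdot]$ as the closed neighbourhood, which is exactly what makes the picked class vanish from $\mathcal{X}$ and guarantees the constant iteration bound; once that is settled, the per-round first-order definitions and the constant number of rounds make the remainder routine.
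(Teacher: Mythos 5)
Your proof is correct, but it is worth noting that the paper itself gives no proof of \cref{prop:mis} at all: the proposition is stated bare, with correctness implicitly deferred to the cited algorithm of Goldberg and Plotkin \cite{GoldbergP87} and the \FOar-expressibility treated as routine. Your invariant-based argument is therefore a genuine addition rather than a retread. It supplies exactly the pieces the paper leaves implicit: (i) the loop invariant $X = \bigcup_{X_i \in \mathcal{X}} X_i$ and $X \cap \mathcal{N}[I] = \emptyset$, from which independence, the bound of $k$ iterations (each picked class vanishes from $\mathcal{X}$ since $X_l \subseteq I \subseteq \mathcal{N}[I]$), and maximality ($X=\emptyset$ forces $V = \mathcal{N}[I]$) all follow cleanly; (ii) the observation that line~3's ``arbitrary'' choice must be made canonical (smallest surviving index) so that the unrolled $k$-round computation is a single \FOar formula --- this determinization is needed for definability and the paper never says it; and (iii) the explicit resolution that $\mathcal{N}[\cdot]$ must be read as the \emph{closed} neighbourhood, without which termination genuinely fails (with the open neighbourhood, the picked class, being independent and non-adjacent to the previous $I$, would survive line~6 forever), a subtlety the paper's prose (``the set of neighbours of a set $I$'') glosses over. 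One cosmetic slip: your justification that $X_l$ is independent ``(it is a colour class)'' is literally true only in the first iteration; afterwards $X_l$ is a subset of a colour class, which your own invariant (i) already covers, so nothing breaks. In short, the paper buys brevity by citation; your proof buys a self-contained verification that also pins down the two definitional choices (closed neighbourhood, canonical pick) on which the stated bounds actually depend.
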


\begin{algorithm}
	\caption{Constructing a proper colouring provided as $\calX = \{X_1,\ldots, X_{\delta+1}\}$ for a graph $G=(V,E)$ with degree bound $\delta$ and proper colouring $\calX' = \{X'_1,\ldots, X'_{k}\}$.}
	\label{algo:deltaCol}
	\begin{algorithmic}[1]
		\State $X\gets V$, $\mathcal{X}\gets\emptyset$ and $i\gets 1$.
		\While{$(X\neq \emptyset)$}\label{algo:deltaCol:loop}
		\State $X_i\gets$ maximal independent set of $G[X]$ via~\cref{algo:mis} \label{algo:deltaCol:mis}
		\State $X\gets X\setminus X_i$\label{algo:deltaCol:rem} 
		\State $\mathcal{X}\gets \mathcal{X}\cup \{X_i\}$
		\State $i\gets i+1$
		\EndWhile
		\State \Return $\mathcal{X}$
	\end{algorithmic}
\end{algorithm}
\begin{proposition}\label{prop:bCol}
	For a graph with degree bounded by a constant $\delta$ together with a proper colouring $\calX'$, \cref{algo:deltaCol} terminates after at most $\delta+1$ iterations of the while loop at line~\ref{algo:deltaCol:loop} and the returned $\mathcal{X}$ is a proper $(\delta+1)$-colouring. Moreover, the steps are $\FOar$ expressible.
\end{proposition}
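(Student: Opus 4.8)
The plan is to verify the three assertions of the proposition in turn: the bound of $\delta+1$ on the number of loop iterations, the correctness of the returned colouring, and its $\FOar$-definability. The starting observation is that every set $X_i$ computed in line~\ref{algo:deltaCol:mis} is a maximal independent set of $G[X]$, and that a maximal independent set $I$ of any graph $H$ is also a \emph{dominating} set of $H$: if some vertex $v \notin I$ had no neighbour in $I$, then $I \cup \{v\}$ would still be independent, contradicting maximality.

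For the iteration bound I would argue that the degree of every surviving vertex strictly decreases in each round. By the domination property, after removing the maximal independent set $X_i$ from $X$ in line~\ref{algo:deltaCol:rem}, every vertex that remains in $X$ had at least one neighbour in $X_i$ and therefore loses at least one neighbour. Since each vertex has degree at most $\delta$ in $G$, a simple induction shows that after $i$ completed iterations every vertex still in $X$ has degree at most $\delta - i$ in $G[X]$. Hence after $\delta$ iterations the remaining graph is edgeless, so its unique maximal independent set is the whole remaining vertex set, which is therefore removed in iteration $\delta+1$ at the latest. This gives the claimed bound of at most $\delta+1$ iterations of the loop at line~\ref{algo:deltaCol:loop}.

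For correctness, the sets $X_1, X_2, \ldots$ partition $V$, because line~\ref{algo:deltaCol:rem} removes exactly the vertices of $X_i$ from $X$ and the loop runs until $X$ is empty. Each $X_i$ is independent in $G$, since it is independent in the induced subgraph $G[X]$ and a vertex set without induced edges has no edges in $G$ either. Thus assigning colour $i$ to the vertices of $X_i$ yields a proper colouring, and by the iteration bound it uses at most $\delta+1$ colours, so the returned $\mathcal{X}$ is a proper $(\delta+1)$-colouring.

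Finally, for $\FOar$-definability I would exploit that the loop runs a \emph{constant} number $\delta+1$ of times and unroll it. At each stage the current set $X$ is $\FOar$-definable as $V$ minus the previously produced sets, and the restriction $\{X'_1 \cap X, \ldots, X'_k \cap X\}$ of the given colouring $\calX'$ is a proper colouring of $G[X]$; feeding this restriction to \cref{algo:mis} produces $X_i$ by an $\FOar$ formula, using \cref{prop:mis}. Composing these constantly many $\FOar$ formulas yields a single $\FOar$ formula defining $\mathcal{X}$. The main obstacle is precisely the interplay between the iteration count and the logic: $\FOar$ has no native iteration, so everything hinges on the constant bound $\delta+1$ established above, which is what makes loop unrolling legitimate; without a constant bound one would be forced into an iterated, $\FOLL$-style construction. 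A secondary point requiring care is confirming that the restricted colouring stays proper and $\FOar$-definable at each of the $\delta+1$ stages, so that \cref{algo:mis} may be validly invoked there.
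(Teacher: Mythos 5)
Your proposal is correct and follows essentially the same route as the paper's proof: the iteration bound comes from the observation that maximality of $X_i$ forces every surviving vertex to lose at least one neighbour per round (the paper phrases this as a contradiction with maximality, you phrase it via domination — the same argument), properness follows since each $X_i$ is independent, and $\FOar$-expressibility follows by unrolling the constantly many iterations and invoking \cref{prop:mis} on the restricted colouring. Your write-up is somewhat more explicit than the paper's (e.g., the edgeless-graph endgame and the loop unrolling for definability), but there is no substantive difference in approach.
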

\begin{proof}
The colouring $\calX'$ is always a proper colouring of the graph $G[X]$, so~\cref{algo:mis} can correctly compute a maximal independent set of $G[X]$. 

	To prove that the loop terminates after constantly many iterations, we do the following.
	We claim that for every vertex $v$ not in $X_i$ at line~\ref{algo:deltaCol:mis}, after the line~\ref{algo:deltaCol:rem}, the degree of $v$ in $G[X]$ is down by at least one. This is because if the degree of $v$ did not reduce after taking away the maximal independent set $X_i$ from $G[X]$ then it would not have had any edge to $X_i$, contradiction to the maximality of $X_i$. Since the maximum degree is $\delta$, the loop terminates after $\delta+1$ iterations, and $\calX$ contains at most $\delta+1$ sets.
	
	The colouring implicit in $\mathcal{X}$ is proper, as each set $X_i \in \calX$ is an independent set.
\end{proof}
\cref{prop:mis,prop:bCol} together prove~\cref{lem:deltaColFO}.

\section{Proof details for Section \ref{section:hierarchical}}

\subsection{Proof details for Tree isomorphism}
We provide proof details for Lemma~\ref{la:tiso_single}.

\begin{lemma}[Restatement of Lemma~\ref{la:tiso_single}]
	Given $\textsc{c-iso}$, $\textsc{dist}$, and $\#\textsc{iso-siblings}$ and a set of changes $\Delta E$, the set of pairs $(C,C^\ast)$ of contexts  such that $C,C^\ast$
	are disjoint and isomorphic and such that both $C$ and $C^\ast$ contain at most one node affected by $\Delta E$ is \FO-definable.
\end{lemma}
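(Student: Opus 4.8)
The plan is to recompute $\textsc{c-iso}$ for lightly-affected context pairs exactly as in Etessami's single-change update, using the recursive characterization of context isomorphism together with the observation that a context containing at most one affected node agrees with an \emph{unchanged} context everywhere except in a neighbourhood of that single node. Recall that two rooted trees are isomorphic iff there is a bijection between their child-subtrees respecting isomorphism types and multiplicities, and two contexts are isomorphic iff there is such a bijection that in addition maps root to root and hole to hole. The multiplicity bookkeeping is exactly what $\#\textsc{iso-siblings}$ provides. From the (already updated) relation $\textsc{dist}$ the structural predicates $\textsc{path}$, ancestry, and parent/child in $F'$ are \FO-definable, as noted for $\textsc{path}$ in the main text, and from the old relation $\textsc{c-iso}$ the old tree-isomorphism relation $\textsc{t-iso}$ is \FO-definable as well.

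First I would dispose of the base case in which neither $C$ nor $C^*$ contains an affected node. Such contexts are, as labelled structures, identical to their counterparts in $F$, so $(C,C^*)$ is a disjoint isomorphic pair in $F'$ iff it already was one in $F$; this is read off directly from the old $\textsc{c-iso}$ after checking, using the new $\textsc{dist}$, that $x,r,h$ (respectively $x^*,r^*,h^*$) still lie on a common path in the required order and that $C,C^*$ are still vertex-disjoint.

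The heart of the argument is the case where $C$ contains exactly one affected node $a$ and, symmetrically, $C^*$ contains exactly one affected node $a^*$. Because both endpoints of any changed edge are affected, the constraint that $a$ is the only affected node forces every changed edge incident to $a$ to leave the interior of $C$, i.e.\ to go to a node outside $\subtree{x}{r}$, to $r$'s parent, or to the hole region (the hole being excluded from the affected-node count). Consequently only a constant amount of structure around $a$ is new, and $C$ splits at $a$ into parts that are individually unchanged: the upper context with root $r$ and hole $a$, the child-subtrees hanging off $a$ that do not carry the changed incidence, and, if relevant, the child-subtree containing the hole. I would run a case analysis on the position of $a$ --- $a=h$, $a=r$, $a$ strictly on the $r$-to-$h$ path, or $a$ in a side subtree --- and in each case split $C$ and $C^*$ at $a$ and $a^*$. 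Since any isomorphism between $C$ and $C^*$ must send $a\mapsto a^*$ and $h\mapsto h^*$, the test reduces to: the upper contexts are isomorphic (old $\textsc{c-iso}$), the unchanged side-subtrees of $a$ match those of $a^*$ with correct multiplicities (old $\textsc{t-iso}$ together with $\#\textsc{iso-siblings}$, exactly as in \cref{claim:node-iso}), and the constantly many changed edges at $a$ and $a^*$ correspond. All quantification ranges only over the distinguished nodes $r,h,a,r^*,h^*,a^*$ and their constantly many changed neighbours, since the bulk of each context is handled by the old auxiliary data, so the whole test is expressible in \FO.

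The main obstacle is the multiplicity bookkeeping. When $a$ has several isomorphic child-subtrees, one of which carries the changed incidence while the others are unchanged, the count supplied by the old $\#\textsc{iso-siblings}$ must be corrected by the constantly many subtrees that are affected or that contain the hole. As in Etessami's single-change update this correction touches only a constant number of subtrees and is therefore \FO-expressible; getting every case of this correction right, together with the accompanying ancestry and $\textsc{path}$ checks in $F'$, is where the bulk of the routine but delicate work lies.
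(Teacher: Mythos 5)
Your overall plan---a case analysis on the number of affected nodes, splitting each context at constantly many distinguished nodes, and comparing the unchanged pieces via the old \textsc{c-iso}/\textsc{t-iso} data with $\#\textsc{iso-siblings}$ corrected by constantly many exceptional subtrees---is indeed the paper's approach, and your base case and multiplicity bookkeeping match it. However, there is a genuine error in your main case, where $C$ contains exactly one affected node $a$ and $C^*$ contains exactly one affected node $a^*$: you split both contexts at their \emph{affected} nodes and justify this with ``any isomorphism between $C$ and $C^*$ must send $a \mapsto a^*$''. That claim is false. A context isomorphism is only required to preserve the root and the hole; affectedness is not a structural property of $C$ or $C^*$ inside the new forest $F'$ (all changed edges incident to $a$ leave the context, as you yourself observe), so nothing forces the image of $a$ to be the affected node of $C^*$. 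Concretely, let $C$ be a path with nodes $r, u, v, h$ in this order ($h$ the hole) and $C^*$ a disjoint path $r^*, u^*, v^*, h^*$, where $\Delta E$ deleted one edge joining $u$ to a node outside $C$ and one edge joining $v^*$ to a node outside $C^*$. Then $C$ and $C^*$ are disjoint and isomorphic in $F'$ and each has exactly one affected node ($a = u$, $a^* = v^*$), but the unique root- and hole-preserving isomorphism maps $u \mapsto u^* \neq v^*$. Your test would compare the context above $u$ with the context above $v^*$ (paths of different lengths) and reject the pair, so your formula under-approximates the relation to be defined; since the program of Theorem~\ref{thm:tiso:dynfoll} needs $\textsc{c-iso}^0$ to be \emph{complete} for pairs with at most one affected node each (its induction over rounds relies on this), the error propagates and breaks the whole construction.

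The repair, which is what the paper's proof of Lemma~\ref{la:tiso_single} does in this case, is to decouple the two roles: existentially quantify a node $a^*$ of $C^*$ as the \emph{guessed image} of the affected node $a$ of $C$, and additionally a node $b$ of $C$ as the guessed \emph{preimage} of the affected node $b^*$ of $C^*$; then split $C$ at $a$, $b$, $h$ and their least common ancestors, and $C^*$ at $a^*$, $b^*$, $h^*$ and theirs. Every resulting piece has no affected interior node, so it can be compared using the old auxiliary data, with sibling-forest comparisons in the style of Claim~\ref{claim:node-iso} at the splitting nodes. Note also that your case analysis as written omits the case of exactly one affected node in total (one context touched, the other untouched); there too the image of $a$ in the untouched context must be existentially guessed rather than pinned down, which is exactly the paper's intermediate case.
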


\begin{proof}
	
	We provide an \FO formula for collecting pairs of disjoint and isomorphic contexts with at most one affected node each. The idea for the formula is very similar to the update formulas for \textsc{t-iso} under single changes used by Etessami \cite{Etessami98}.

	Assume that a set $\Delta E$ of edges is inserted or deleted in $F$.
	Let $C=C(x,r,h)$ and $C^\ast=C(x^\ast,r^\ast,h^\ast)$ be two contexts such that $C$ and $C^\ast$ are disjoint and isomorphic, and each of them contains at most one node affected by $\Delta E$.
	
	The first-order formula does a case distinction on the total number of affected nodes:
	
	\begin{itemize}
		\item[(a)] Suppose $\#(C,C^\ast,\Delta E) = 0$, i.e. neither of the contexts is affected by the change. Then the old context isomorphism auxiliary data $\textsc{c-iso}$ for $(C,C^\ast)$ remains valid.
		\item[(b)] Suppose $\#(C,C^\ast,\Delta E) = 1$, i.e. exactly one of the contexts contains an affected node. W.l.o.g. node $a$ in $C$ is affected. For defining whether $(C,C^\ast)$ is in $\textsc{c-iso}$, the first-order formula distinguishes whether (1) $a$ is on the path from $r$ to $h$, i.e. $\textsc{path}(r,a,h)$, or (2) $a$ is not on that path, i.e. $\lnot\textsc{path}(r,a,h)$:
		\begin{itemize}
			\item[(1)] Suppose $\textsc{path}(r,a,h)$ (see Figure \ref{fig:tiso_single_onpath}).  If $C$ and $C^*$ are isomorphic, then there must be a node $a^\ast$ in $C^\ast$
			which is the image of $a$ under the presumed isomorphism.
			Let  $y$ and $y^\ast$ be children of $a$ and~$a^\ast$, respectively, such that
			$\textsc{path}(a,y,h)$ and $\textsc{path}(a^\ast,y^\ast,h^\ast)$.
			The following first-order definable tests verify that $C$ and $C^*$ are isomorphic:
			\begin{itemize}
			 \item[(i)] Test that $D_1 \df C(x,r,a)$ and $D^*_1\df C(x^*,r^*,a^*)$ are isomorphic.
			 \item[(ii)] Test that  $D_2 \df C(x,y,h)$ and $D^*_2 \df C(x^*,y^*,h^*)$ are isomorphic. 
			 \item[(iii)] Let $Z$ be the children of $a$ except for $y$; and likewise $Z^\ast$ for
			 $a^\ast$ and $y^\ast$. Test that there is an isomorphism between the forests $\{\subtree{a}{u}\mid u\in Z \}$ and $\{\subtree{a^\ast}{u^\ast}\mid u^\ast\in Z^\ast \}$ (using \cref{claim:node-iso} for $a$, $Z$ and $Y\df\{y\}$)
			\end{itemize}

			\begin{figure}[h]
				\centering
				\begin{tikzpicture}[
				yscale=1,
				xscale=1
				]
				\node[blackNode, label=right:$x$, draw] (x) at (0,1){};
				\node[blackNode, label=right:$r$] (r) at (0,-0){};
				\node[blackNode, label=right:$a$] (z) at (0,-1){};

				\node[blackNode, label=right:$y$] (y1) at (-1,-2){};
				\node[blackNode, label=right:$ $] (y2) at (0,-2){};
				\node[blackNode, label=right:$ $] (y3) at (1,-2){};

				\dreieck{y2}
				\dreieck{y3}
				
				\draw [dotsEdge] (y2) to (y3);
				\draw [dEdge] (z) to (y1);
				\draw [dEdge] (z) to (y2);
				\draw [dEdge] (z) to (y3);

				\node[blackNode, label=right:$h$] (h) at (-1.5,-3){};
				\draw [snakeEdgea] (x) to (r);
				\draw [snakeEdgea] (r) to (z);
				\draw [snakeEdgea] (y1) to (h);
				
				\node (D1) at (-1,-1.25){$D_1$};
				\node (D2) at (-2.5,-4){$D_2$};
				
				\begin{pgfonlayer}{background}
				\filldraw[fill=blue!10] (r) -- (-6, -4.5) -- (-5,-4.5) -- (z) -- (5,-4.5) -- (6,-4.5) -- (r);
				\filldraw[fill=blue!10] (y1) -- (-4, -4.5) -- (-2,-4.5) -- (h) -- (-1,-4.5) -- (0,-4.5) -- (y1);
				\end{pgfonlayer}

				\end{tikzpicture}
			
				\caption{Case (1): The node $a$ affected by $\Delta E$ is on the path between $r$ and $h$.}
				\label{fig:tiso_single_onpath}
			\end{figure}
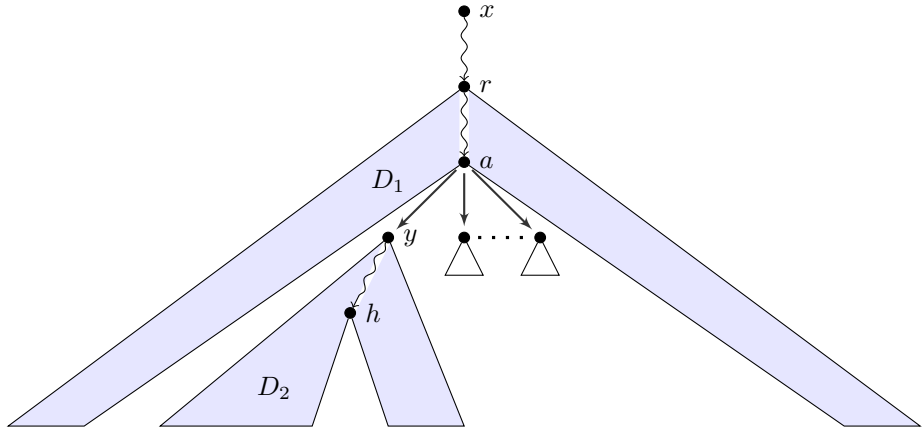

			\item[(2)] Suppose $\lnot\textsc{path}(r,a,h)$ (see \cref{fig:tiso_single_Lca}).
			If $C$ and $C^*$ are isomorphic, then there must be a node $a^\ast$ in $C^\ast$
			which is the image of $a$ under the presumed isomorphism. Let $z$ be the least common ancestor of $a$ and $h$, and $z^\ast$ the least common ancestor of $a^\ast$ and $h^\ast$, respectively. Let $y_1$ and $y_2$  be the children of $z$  such that $\textsc{path}(z,y_1,h)$ and $\textsc{path}(z,y_2,a)$; and likewise let $y_1^\ast$ and  $y_2^\ast$ be the children of $z^\ast$ such that $\textsc{path}(z^\ast,y_1^\ast,h^\ast)$ and $\textsc{path}(z^\ast,y_2^\ast,a^\ast)$. The following first-order definable tests verify that $C$ and $C^*$ are isomorphic:
			\begin{itemize}
				\item[(i)] Test that $D_1 \df C(x,r,z)$ and $D^*_1\df C(x^*,r^*,z^*)$ are isomorphic,
				\item[(ii)] Test that $D_2 \df C(x,y_1,h)$ and $D^*_2 \df C(x^*,y_1^*,h^*)$ are isomorphic,
				\item[(iii)] Test that $D_3 \df C(x,y_2,a)$ and $D^*_3 \df C(x^*,y_2^*,a^*)$ are isomorphic,
				\item[(iv)] Let $Z_1$ be the children of $z$ except $y_1,y_2$; and likewise $Z_1^\ast$ for
				$z^\ast$ and $y_1^\ast,y_2^\ast$.
				Test that there is an isomorphism between the forests $\{\subtree{z}{u}\mid u\in Z_1 \}$ and $\{\subtree{z^\ast}{u^\ast}\mid u^\ast\in Z_1^\ast \}$ (using \cref{claim:node-iso} for $z$, $Z\df Z_1$ and $Y\df\{y_1,y_2\}$)
				\item[(v)] Let $Z_2$ be the children of $a$; and likewise $Z_2^\ast$ for
				$a^\ast$. Test that there is an isomorphism between the forests $\{\subtree{a}{u}\mid u\in Z_2 \}$ and $\{\subtree{a^\ast}{u^\ast}\mid u^\ast\in Z_2^\ast \}$ (using \cref{claim:node-iso} for $a$, $Z\df Z_2$ and $Y\df\emptyset$)
				
			\end{itemize}
		
			\begin{figure}
			\centering
			\begin{tikzpicture}[
			yscale=1,
			xscale=0.85
			]
			\node[blackNode, label=right:$x$, draw] (x) at (0,1){};
			\node[blackNode, label=right:$r$] (r) at (0,-0){};
			\node[blackNode, label=right:$z$] (v) at (0,-1){};
			
			\node[blackNode, label=right:$y_1$] (y1) at (-1.5,-2){};
			\node[blackNode, label=right:$y_2$] (y2) at (-0.5,-2){};
			\node[blackNode, label=right:$ $] (y3) at (0.5,-2){};
			\dreieck{y3}
			\node[blackNode, label=right:$ $] (y4) at (1.5,-2){};
			\dreieck{y4}
			
			\node[blackNode, label=right:$h$] (h) at (-2.5,-3){};
			\node[blackNode, label=right:$a$] (z) at (-0.5,-3){};
			
			\node[blackNode, label=right:$ $] (z1) at (-1,-4){};
			\node[blackNode, label=right:$ $] (z2) at (0.0,-4){};
			\dreieck{z1}
			\dreieck{z2}
			
			\draw [snakeEdgea] (x) to (r);
			\draw [snakeEdgea] (r) to (v);
			\draw [snakeEdgea] (y1) to (h);
			\draw [snakeEdgea] (y2) to (z);
			
			\draw [dEdge] (v) to (y1);
			\draw [dEdge] (v) to (y2);
			\draw [dEdge] (v) to (y3);
			\draw [dEdge] (v) to (y4);
			
			\draw [dEdge] (z) to (z1);
			\draw [dEdge] (z) to (z2);
			
			\draw [dotsEdge] (y3) to (y4);
			\draw [dotsEdge] (z1) to (z2);
			
			\node (D1) at (-1,-1){$D_1$};
			\node (D2) at (-4,-4){$D_2$};
			\node (D3) at (-1.5,-4){$D_3$};
			
			\begin{pgfonlayer}{background}
			\filldraw[fill=blue!10] (r) -- (-8, -4.5) -- (-7,-4.5) -- (v) -- (7,-4.5) -- (8,-4.5) -- (r);
			\filldraw[fill=blue!10] (y1) -- (-6, -4.5) -- (-4,-4.5) -- (h) -- (-3.5,-4.5) -- (-2.5,-4.5) -- (y1);
			\filldraw[fill=blue!10] (y2) -- (-2.25, -4.5) -- (-1.5,-4.5) -- (z) -- (1,-4.5) -- (2,-4.5) -- (y2);
			\end{pgfonlayer}
			
			\end{tikzpicture}
				\caption{Case (2): The node $a$ affected by $\Delta E$ is not on the path between $r$ and $h$.}
				\label{fig:tiso_single_Lca}
			\end{figure}
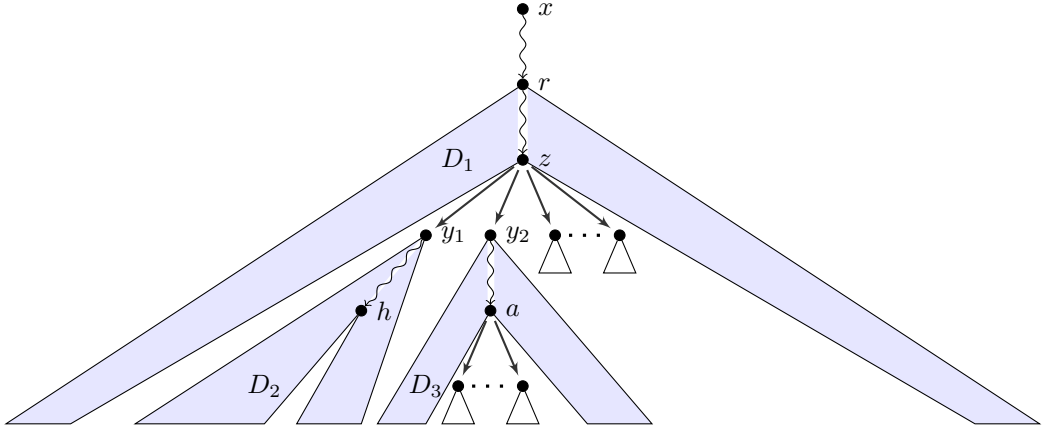
			
		\end{itemize}
		
		\item Suppose $\#(C,C^\ast,\Delta E) = 2$, i.e. both contexts contain exactly one affected node. Let $a$ be the affected node of $C$ and $b^*$ be the affected node of $C^\ast$. If $C$ and $C^\ast$ are isomorphic, then there must be nodes $a^\ast$ in $C^*$ and $b$ in $C$ that match $a$ and $b$ under the isomorphism. The test for verifying that $C$ and $C^*$ is similar to the case $\#(C,C^\ast,\Delta E) = 1$: An \FO formula splits $C$ at the nodes $a$, $b$, the hole $h$ and their respective least common ancestors, and checks that the unchanged contexts in between these nodes and their counterparts in $C^\ast$ are isomorphic by using the old auxiliary data. Additionally, at the splitting nodes, children subtrees are compared using \cref{claim:node-iso}. There is a constant number of cases, depending on the position of $h$, $a$ and $b^\ast$ relative to each other, which are all handled very similarly.
		\end{itemize}
\end{proof}

\end{document}